\pdfoutput=1
\documentclass[pdftex,prl,aps,twocolumn,twoside,nofootinbib,showkeys,showpacs,10pt]{revtex4-2}
% \documentclass[aps,prl,twocolumn,10pt]{revtex4-1}

% To have section numbers
\setcounter{secnumdepth}{4}

\usepackage[T1]{fontenc}
\fontencoding{T1}  
\usepackage[utf8]{inputenc}

\usepackage[normalem]{ulem}

\usepackage{todonotes}

\newcommand{\cT}{\mathcal{T}}
\newcommand{\cD}{\mathcal{D}}
\newcommand{\cB}{\mathcal{B}}
\newcommand{\cL}{\mathcal{L}}

\newcommand{\RR}{\mathbb{R}}
\newcommand{\CC}{\mathbb{C}}
\newcommand{\NN}{\mathbb{N}}
\newcommand{\cN}{\mathcal{N}}
\newcommand{\dom}{\operatorname{dom}}
\newcommand{\D}{\mathfrak{D}}

\newcommand{\tr}{\operatorname{tr}}

\newcommand{\cH}{\mathcal{H}}

\newcommand{\cF}{\mathcal{F}}

\renewcommand{\phi}{\varphi}
\renewcommand{\epsilon}{\varepsilon}

\usepackage{enumitem}
\usepackage{tikz}
\usepackage{amsfonts}
\usepackage{amsmath,amssymb,amsthm}
\usepackage{graphicx}
\usepackage{fancyhdr}
\usepackage[breakable]{tcolorbox}
\usepackage{bbm}
\usepackage{url}
\usepackage{mathtools}
\usepackage{braket}
\usepackage{upgreek }
\usepackage{dsfont}
\usepackage{collectbox}
\usepackage{braket}

\pagestyle{plain}

\usepackage[plain]{fancyref} %references with object type, no pages
\usepackage[ruled]{algorithm2e}

\usepackage{algpseudocode}

\setlength{\topmargin}{-1.6cm}
\setlength{\textheight}{23cm}

\usepackage[colorlinks = true, citecolor = red, urlcolor=blue]{hyperref}

\usepackage{cleveref}

\newtheorem{thm}{Theorem}
\newtheorem*{thm*}{Theorem}
\makeatletter
\newcommand{\setthmtag}[1]{% \settheoremtag{<tag>}
  \let\oldthethm\thethm% Store \thetheorem
  \newcommand{\thethm}{#1}% Redefine it to a fixed value
  \g@addto@macro\endthm{% At \end{theorem}, ...
    \addtocounter{thm}{-1}% ...restore theorem counter value and...
    \global\let\thethm\oldthethm}% ...restore \thetheorem
  }
\makeatother

\newtheorem{prop}[thm]{Proposition}
\newtheorem*{prop*}{Proposition}
\newtheorem{lemma}[thm]{Lemma}
\newtheorem*{lemma*}{Lemma}
\newtheorem{cor}[thm]{Corollary}
\newtheorem*{cor*}{Corollary}

\newtheorem*{cj*}{Conjecture}
\newtheorem{definition}[thm]{Definition}
\newtheorem*{Def*}{Definition}

\usepackage{titlesec}

% \titleformat{\section}
% {\normalfont\Large\bfseries}{\thesection}{1em}{}
% \titleformat{\subsection}
% {\normalfont\large\bfseries}{\thesubsection}{1em}{}
% \titleformat{\subsubsection}
% {\normalfont\normalsize\bfseries}{\thesubsubsection}{1em}{}
% \titleformat{\paragraph}[runin]
% {\normalfont\normalsize\bfseries}{\theparagraph}{1em}{}
% \titleformat{\subparagraph}[runin]
% {\normalfont\normalsize\bfseries}{\thesubparagraph}{1em}{}

\theoremstyle{definition}

\newtheorem*{rem*}{Remark}

\def\beq{\begin{equation}}
\def\eeq{\end{equation}}
\def\bq{\begin{quote}}
\def\eq{\end{quote}}
\def\ben{\begin{enumerate}}
\def\een{\end{enumerate}}
\def\bit{\begin{itemize}}
\def\eit{\end{itemize}}

\def\r|{\right|}

\newcommand{\cV}{\mathcal{V}}

%\newcommand\Unity{\mathds{1}}

%newcommand Daniel

\newcommand{\id}{\text{id}}

\newcommand{\cC}{\mathcal{C}}

\newcommand{\cR}{\mathcal{R}}

\newcommand{\cU}{\mathcal{U}}
\newcommand\be{\begin{equation}}
\newcommand\ee{\end{equation}}

%%% FOR COMMENTS
% Matthias Caro 

% revision commenting tools
\newcommand{\stkout}[1]{\ifmmode\text{\sout{\ensuremath{#1}}}\else\sout{#1}\fi}
\newif\ifverbose
% verbose: true: show corrections, false: hide corrections
\verbosetrue
%\verbosefalse
% ins: stuff that is new

% edit: stuff that has been replaced with other stuff

% del: stuff that has been removed

\widowpenalty10000
\clubpenalty10000

\begin{document}
\title{\texorpdfstring{
		Limitations of local update recovery in stabilizer-$\mathsf{GKP}$ codes:\\
a quantum optimal transport approach}{Limitations of local update recovery in stabilizer-$\mathsf{GKP}$ codes: a quantum optimal transport approach}}

\author{\begingroup
\href{}{Robert K\"{o}nig
	\endgroup}
}
\affiliation{Munich Center for Quantum Science and Technology \& \\
Department of Mathematics, School of Computation, Information and Technology, Technical University of Munich, 85748 Garching, Germany}

\author{\begingroup
\href{https://orcid.org/0000-0001-7712-6582}{Cambyse Rouz\'{e}
\endgroup}
}
\email[Cambyse Rouz\'{e} ]{cambyse.rouze@tum.de}
 \affiliation{Inria, T\'{e}l\'{e}com Paris - LTCI, Institut Polytechnique de Paris, 91120 Palaiseau, France\\ Zentrum Mathematik, Technische Universit\"{a}t M\"{u}nchen, 85748 Garching, Germany}
 
\xdefinecolor{mygreen}      {RGB}{108,187,69}
\newcommand{\tm}[1]{{\color{mygreen}TM:~#1}}
\newcommand{\ca}[1]{{\color{blue}CR:~#1}}

\begin{abstract}
Local update recovery seeks to maintain quantum information by applying local correction maps alternating with and compensating for the action of noise. Motivated by recent constructions based on quantum LDPC codes in the finite-dimensional setting, we establish an analytic upper bound on the fault-tolerance threshold for concatenated $\mathsf{GKP}$-stabilizer codes with local update recovery. Our bound applies to noise channels that are tensor products of one-mode beamsplitters with arbitrary environment states, capturing, in particular, photon loss occurring independently in each mode.  It shows that for loss rates above a threshold given explicitly as a function of the locality of the recovery maps, encoded information is lost at an exponential rate.  This extends an early result by Razborov from discrete to continuous variable (CV) quantum systems. 

To prove our result, we study a metric on bosonic states akin to the Wasserstein distance between two CV density functions, which we call the bosonic Wasserstein distance. It can  be thought of as a CV extension of a quantum Wasserstein distance of order~$1$ recently introduced by De Palma et al. in the context of qudit systems, in the sense that it captures the notion of locality in a CV  setting.  We establish several basic properties, including a relation to the trace distance and diameter bounds for states with finite average photon number. We then study its contraction properties under quantum channels, including tensorization, locality and strict contraction under beamsplitter-type noise channels. Due to the simplicity of its formulation, and the established wide applicability of its finite-dimensional counterpart, we believe that the bosonic Wasserstein distance will become a versatile tool in the study of CV quantum systems.

%In this paper, we prove an analytic upper bound on the fault-tolerance threshold for concatenated $\mathsf{GKP}$-stabilizer codes with local update recovery, thereby extending an early result by Razborov from discrete to continuous variable (CV) quantum systems. Our bounds apply to a natural class of noise channels, namely tensor products of one-mode beamsplitters with arbitrary environment states. This class includes the realistic description of photon losses occurring independently in each mode. Our bound explicitly depends on the locality of the recovery process, as well as on the transmissivity of the noise. To establish our result, we study a metric on bosonic states akin to the Wasserstein distance between two continuous variable density functions, which we coin as bosonic Wasserstein distance. The latter can also be thought of as a continuous variable extension of a quantum Wasserstein distance of order $1$ recently introduced by De Palma et al. in the context of qudit systems, in the sense that it captures the notion of locality in a continuous variable setting.  In particular, we show several basic properties, including its relation to the trace distance and diameter bounds for states with finite average photon number. We then study its contraction properties under quantum channels, including tensorization, locality and strict contraction under beamsplitter-type noises. Due to the simplicity of its formulation, and similar to the quantum Wasserstein distance of order $1$, we believe that it will become a versatile tool in the study of quantum CV systems.

\end{abstract}

\maketitle

\section{Introduction}

Universal quantum computers are envisioned to provide a powerful computational resource. Whether such devices will be available in the future depends on our ability to find experimentally achievable methods to protect the quantum information from the unavoidable presence of noise while simultaneously maintaining the ability to operate on it. The celebrated \textit{threshold theorem} proves that, at least in theory, such fault-tolerance constructions exist~\cite{Aharonov1997}, as long as the individual building blocks are sufficiently reliable: If e.g., the noise rate of individual qubits and physical gates is below a constant value called the \textit{fault-tolerance threshold}, then a size-$S$ quantum circuit can be emulated with error at most~$\epsilon$ by a noisy circuit of size polynomial in~$S$ and $\log 1/\epsilon$. Quantitative lower bounds of this threshold value (obtained by studying concrete fault-tolerance proposals for a given noise model) are of central importance as they provide clear experimental benchmarks. Conversely, upper bounds on the fault-tolerance threshold establish fundamental limitations on quantum computing and may guide the search for better fault-tolerance constructions.

\subsection{Fault-tolerant quantum memories}

{\it Fault-tolerant quantum memories:} To establish an upper bound on the fault-tolerance threshold, it suffices to consider the task of fault-tolerantly implementing the concatenation of~$T$ identity gates: In other words, this is the problem of realizing a noise-resilient memory that preserves quantum information for a time~$T$. To preserve $k$~logical qubits by using $n$~physical ones, a general error correction strategy relies on a quantum code~$\cC\subset (\mathbb{C}^2)^{\otimes n}$ encoding $k$~logical qubits into $n$~physical ones: Starting with an encoded version of the initial state, a recovery map~$\cR_t$ is applied to the~$n$~qubits at each time step~$t=1,\ldots,T$. The choice of~$\left(\cC,\cR=\{\cR_t\}_{t=1}^T\right)$ is intended to ensure that the information is preserved even if the execution of the recovery maps is interleaved with (sufficiently weak) noise. For concreteness, assume that each physical qubit undergoes depolarizing noise $\cN_p(\rho)=(1-p)\rho+p I/2$, with depolarizing parameter~$p\in (0,1)$.  Then we are interested in the (worst-case) fidelity with which the composed channel
\begin{align}
	\cR^{(T)}\coloneqq \bigcirc_{t=T}^{1} (\cR_t\circ\cN_p^{\otimes n})\ \label{eq:composedmap}
\end{align}
preserves states supported on~$\cC$, or more precisely, the trade-off relation between~$T$, this fidelity, and the numbers~$(n,k)$ of physical and logical qubits, respectively. Clearly, this trade-off  depends on the set of allowed recovery maps. We note that to make the notion of preserving states supported on~$\cC$ meaningful, one should generally let the final recovery map~$\cR_T$  be an arbitrary quantum channel even when~$\{\cR_j\}_{j=1}^{T-1}$ are restricted. This is similar to the definition of entanglement fidelity and ensures that we are quantifying the fidelity of the in-principle recoverable information after time~$T$.

As an example, if~$\cC$ is an $[n,k,d]$-stabilizer code with stabilizer generators~$\{S_j\}_{j=1}^{n-k}$, one could consider~$\cR_t=\cR$ for all $t$ with a recovery map~$\cR$ of the form
\begin{align}\label{eq:Recovery}
	\cR(\rho)=\sum_{s\in \{0,1\}^{n-k}}C(s)\Pi(s)\rho\Pi(s)C(s)^\dagger,
\end{align}
where $\Pi(s)=\prod_{j=1}^{n-k}\frac{1}{2}(I+(-1)^{s_j}S_j)$ projects onto the syndrome-$s$ subspace and $C(s)$ is a Pauli-correction classically computed from~$s$, which maps the state back to the code space. Then Eq.~\eqref{eq:composedmap} 
corresponds to a scenario where each correction and recovery operation is ideally realized. In the case of probabilistic Pauli noise, the logical overall error model after time $T$ is then obtained by taking the $T$-fold convolution of the effective logical noise associated with a single step and recovery combination~$\cR\circ\cN_p^{\otimes n}$.

\subsection{Local update recovery}
A more interesting case -- the one considered here -- is where locality is imposed on the operators~$\{\cR_t\}_{t=1}^{L-1}$.  Specifically, let us say that the recovery scheme~$(\cC,\cR=\{\cR_t\}_{t=1}^T)$ is $\ell$-local if for each~$t=1,\ldots,T-1$, the map~$\cR_t={\tiny{\prod_{j}\cR_{A^{(t)}_j}}}$ is a product of quantum channels~${\tiny{\cR_{A^{(t)}_j}}}$ each acting on pairwise disjoint subsets~$A^{(t)}_j$ of at most~$|A^{(t)}_j|\leq \ell$ qubits $A^{(t)}_j\subset\{1,\ldots,n\}$. Clearly, an~$O(1)$-local recovery scheme is attractive from an experimental point of view, but it should also be noted that this kind of recovery is highly restricted. For example, in the case of a stabilizer codes, it is not generally sufficient to have local (i.e., low-weight) stabilizer generators: even if syndrome extraction is achieved by measuring constant-size subsets of qubits, typical decoders pool and globally process corresponding measurement results to determine a suitable (global) correction. In contrast, an $O(1)$-local recovery scheme requires that syndrome extraction, computation and application of an associated correction is achieved by a collection of independent processes acting on disjoint subsets of qubits. We refer to such an $O(1)$-local scheme as a local update recovery.

Remarkably, despite the highly restrictive locality requirement, $O(1)$-local recovery schemes exhibiting a threshold property exist. Indeed, recent work focusing on the break-through discovery of quantum low density parity check (LDPC) codes~\cite{fawzi2018efficient} establishes a threshold property for such a scheme under a more general noise model than independent depolarizing noise on each qubit \cite{gu2023single}. In fact, the result of~\cite{fawzi2018efficient} applies even to the case where the implementation of each recovery map~$\cR_t$ is noisy. The construction is based on the fact that expander codes, i.e.,  codes whose Tanner graph is an expander graph, can be decoded by a local process, a fundamental insight first observed by Sipser and Spielman~\cite{sipser1996expander}. While the original decoder iteratively goes through the Tanner graph, extracting local syndrome information and flipping bits to decrease the syndrome weight, this process can be parallelized.
These ideas were introduced to the quantum context in Refs.~\cite{leverrier2015quantum}, where the so-called small-set-flip decoder was introduced and applied in a fault-tolerance construction. Remarkably, the new results of~\cite{fawzi2018efficient} show that for certain quantum LDPC codes, this locally defined error correction process works even if recovery maps are interspersed with noise maps as in~\eqref{eq:composedmap}. It should be noted that this analysis is different from prior works involving local updates such as Toom's rule~\cite{pastawski2011quantum,pastawski2012quantum} or certain cellular automata~\cite{breuckmann2016local,harrington2004analysis,Dennis2002,herold2015cellular}: The latter studies consider decoding/recovery by iterative application of local maps (possibly assisted by local memories), but  do not analyze the effect of additional errors within iterations.

\subsection{Bounding local update recovery for qubits}

While Ref.~\cite{fawzi2018efficient} gives a scheme for local update recovery for noise rates~$p<p_0$ below some threshold~$p_0$, here we ask for upper bounds on the threshold. In fact, such an upper bound is known \cite{Kempe10,razborov2003upper}: For any $\ell$-local recovery scheme~$(\cC,\cR=\{\cR_t\}_{t=1}^T)$, if the depolarizing probability is above $1-1/\ell$, then at any depth $T\ge C\log(n)$ for a sufficiently large constant $C>0$ and for any two input states $\rho,\sigma$:
\begin{align}\label{eq:rhosigmaupperbound}
	\big\|\cR^{(T)}(\rho)-\cR^{(T)}(\sigma)\big\|_1=\mathcal{O}(1/n)\,.
\end{align}
In words, the circuit loses distinguishability (and hence encoded information) at an exponential rate in~$T$ for any depolarizing probability
\begin{align}\label{depol}
	p>1-\frac{1}{\ell}=:p_0(\ell)\  .
\end{align}
We refer to Section~\ref{qubitproof} below for more details on the bound~\eqref{eq:rhosigmaupperbound}.

\subsection{Continuous variable error correction}

While the previous discussion concerned discrete variable (DV) systems, many systems of practical interest involve continuous variables (CV),
% The implementation of quantum error-correcting codes often comes at the price of large resource overhead to reach a sufficient level of protection, e.g. many more physical qubits are needed in order to encode a single logical qubit. An alternative to this issue inherent to the use of discrete variable (DV) systems consists instead of encoding the quantum information into continuous-variable (CV) systems 
such as harmonic oscillators. The last twenty years have seen the development of various CV codes, e.g. encodings into Fock states \cite{chuang1997bosonic,Knill2001,Michael2016}, coherent states \cite{Cochrane99,Leghtas13}, or position and momentum eigenstates \cite{Lloyd98,Braunstein1998,Gottesman01}. Among these, the so-called Gottesman-Kitaev-Preskill ($\mathsf{GKP}$) codes \cite{Gottesman01} hold a prominent position.

By design, a $\mathsf{GKP}$ code corrects small random shifts in position and momentum. In contrast, the pure loss channel is the most common incoherent error process in optical and microwave cavities \cite{victor16}. Despite this apparent discrepancy, recent numerical studies showed that $\mathsf{GKP}$ codes significantly outperform all other codes for most values of the loss rate \cite{Albert18}. In  \cite{Albert18}, these  numerical results were also backed by an analytical upper bound on the channel infidelity between the noise composed with the $\mathsf{GKP}$ recovery channel and the ideal identity channel. While a CV quantum error-correcting code with a single bosonic mode and a single ancilla qubit like the $\mathsf{GKP}$ code can suppress relevant errors such as photon losses in a hardware-efficient manner, it should also be noted that logical error rates cannot be suppressed to an arbitrarily small value with this minimal architecture. To further suppress the residual errors, these bosonic codes should, for example, be concatenated with a qubit error-correcting code \cite{bravyi1998quantum,Dennis2002}. Recent proposals include concatenation with a repetition code \cite{Fukui17}, the surface code \cite{wang2019quantum,Fukui2018,Vuillot2019} or cluster states in the setting of measurement-based quantum computing \cite{Fukui2018,Menicucci14,fukui2019high}. In those works, the code capacity thresholds were obtained by assuming that only qubits can fail, i.e., gates, state preparations, and measurements are assumed perfect. A more realistic approach was recently taken for the surface code in \cite{Noh2020}. However, these works concentrate on upper bounds on the threshold noise.

The goal of the present paper is to provide an analytic lower bound on the accuracy threshold for $\mathsf{GKP}$-stabiliser concatenated codes with local update recovery, realistic input state preparation and measurements and in presence of photon-loss, akin to the threshold \eqref{depol}. 

\section{Bosonic local update recovery}

\subsection{The setup}

Consider a local recovery scheme~$(\cC,\cR=\{\cR_t\}_{t=1}^T)$ using $n$~physical qubits, based on an $[n,k]$-stabilizer code~$\cC$ and recovery maps~$\cR_t$ given by Clifford circuits as that of~\cite{Gottesman01}.
Using code concatenation, such a scheme can  immediately be lifted to a scheme operating on $n$~bosonic modes. Here we consider the ``standard''  Gottesman-Kitaev-Preskill ($\mathsf{GKP}$) code encoding a single qubit into a single harmonic oscillator. Concatenating the stabilizer~$\cC$ with the $\mathsf{GKP}$ code then provides a stabilizer-$\mathsf{GKP}$-code~$\cC_{\mathsf{osc}}$ encoding $k$~logical qubits into $n$~oscillators. Examples include the surface- or toric-$\mathsf{GKP}$ codes~\cite{noh2020fault,vuillot2019quantum}. A natural recovery (error correction) map for this concatenated code proceeds by applying a bosonic recovery map~$\cR_{\mathsf{GKP}}$ to each mode, and subsequently applying~$\overline{\cR}$ to the collection of $n$~modes. Here~$\overline{\cR}$ is a $\mathsf{GKP}$-encoded version of the original recovery map~$\cR$ for the stabilizer code~$\cC$. We write~$\cR^{\mathsf{osc}}\coloneqq \overline{\cR}\circ\cR_{\mathsf{GKP}}^{\otimes n}$ for this composed map. Applying this transformation 
to each recovery map~$\cR_t$ yields an $\ell$-local recovery scheme~$(\cC_{\mathsf{osc}},\cR_{\mathsf{osc}}=\{\cR^{\mathsf{osc}}_t\}_{t=1}^T)$, whose operations act on disjoint subsets of $\ell$ bosonic modes. To make this construction physical and concrete, we use approximate $\mathsf{GKP}$-states~$\ket{\tilde{0}}$ and~$\ket{\tilde{1}}$ with finite average energy, and a Steane-type error correction circuit for~$\cR_{\mathsf{GKP}}$ relying on unsharp (finite-variance) position- and momentum measurements (see Figure \ref{fig:steane} in the supplementary material).

Here we study to which degree bosonic local update recovery schemes constructed in this way can withstand photon loss. That is, we consider loss of a photon modeled by a one-mode beamsplitter of transmissivity $\lambda\in (0,1)$ with arbitrary environment state $\sigma_E$:
\begin{align}\label{beamsplitter}
	\mathcal{N}_\lambda(\rho)=\tr_{E}\big[U_\lambda(\rho\otimes \sigma_E)U_\lambda^\dagger\big]\,,
\end{align}
with the beamsplitter unitary $U_\lambda\equiv \operatorname{exp}\big((a^\dagger b-b^\dagger a)\operatorname{arccos}(\sqrt{\lambda})\big)$. 
Although this is not required for our results to hold, we will fix the state $\sigma_E$ to be the same each time the noise channel $\cN_\lambda$ is applied, for sake of simplicity. We are interested in the map
\begin{align}
	\cR^{(T)}_{\mathsf{osc}}\coloneqq \bigcirc_{t=T}^{1} (\cR_t^{\mathsf{osc}}\circ\cN_\lambda^{\otimes n})\ ,
\end{align} where each mode undergoes photon loss independently of the other modes, and this noise process is alternated with  recovery maps obtained by concatenation.

\subsection{Bounding local update recovery for concatenated $\mathsf{GKP}$-stabilizer codes}

Our main result is an upper bound for the bosonic setup akin to Eq.~\eqref{eq:rhosigmaupperbound}. It states the following:
\begin{thm}[Limits to local update recovery]\label{mainresult} 
	Let $(\cC_{\mathsf{osc}},\cR_{\mathsf{osc}}=\{\cR^{\mathsf{osc}}_t\}_{t=1}^T)$ be a bosonic recovery scheme on~$n$ modes derived by code concatenation from a (stabilizer-type) recovery scheme~$(\cC,\cR=\{\cR_t\}_{t=1}^{T})$. Then there exist constants~$\kappa$ depending on the noise parameters $\lambda,\sigma_E$, and~$C$ depending only on
	~$\{\cR_t\}_{t=1}^{T}$ such that 
	\begin{align}
		\left\|\cR^{(T)}_{\mathsf{osc}}(\rho)-\cR^{(T)}_{\mathsf{osc}}(\sigma)\right\|_1 & \leq 
		4\kappa\,\big(C\,\sqrt{\lambda}\big)^T\sqrt{2n(n+E)}\,,
	\end{align}
	for any two $n$-mode quantum states $\rho,\sigma$ with $\tr[\rho N],\tr[\sigma N]\le E$, and any photon loss strength (transmissivity)~$\lambda\in (0,1)$.
	Furthermore, if~$(\cC,\cR=\{\cR_t\}_{t=1}^{T})$ is an~$\ell$-local recovery scheme, then~$C=O(\ell)$. 
\end{thm}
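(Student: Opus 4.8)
The plan is to pass to a bosonic Wasserstein distance $W_1$ of order $1$ — a continuous-variable analogue of the qudit distance of De Palma et al., built on the ``neighbouring states'' relation in which $\rho$ and $\sigma$ count as one elementary step whenever $\tr_j\rho=\tr_j\sigma$ for some mode $j$ — and to exploit four of its properties. (i) $W_1$ dominates the trace distance: $\|\rho-\sigma\|_1\le 2\,W_1(\rho,\sigma)$. (ii) It has a controlled diameter on the energy shell: $W_1(\rho,\sigma)\le c\,\sqrt{2n(n+E)}$ whenever $\tr[\rho N],\tr[\sigma N]\le E$, which one gets from a Cauchy--Schwarz bound relating the per-mode transport cost to the local photon number. (iii) The one-mode beamsplitter $\cN_\lambda$ is a strict $W_1$-contraction with coefficient $\sqrt\lambda$, and this tensorises, so the full noise layer satisfies $W_1(\cN_\lambda^{\otimes n}(\alpha),\cN_\lambda^{\otimes n}(\beta))\le\sqrt\lambda\,W_1(\alpha,\beta)$ for all $n$ and all $\alpha,\beta$. (iv) (Locality.) Any channel that is a tensor product of maps each supported on at most $\ell$ modes is $W_1$-Lipschitz with constant $C=O(\ell)$.

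Given these, the argument is a telescoping estimate. Write $\cR^{(T)}_{\mathsf{osc}}=\cR_T^{\mathsf{osc}}\circ\Psi$ with $\Psi=\cN_\lambda^{\otimes n}\circ\cR_{T-1}^{\mathsf{osc}}\circ\cN_\lambda^{\otimes n}\circ\cdots\circ\cR_1^{\mathsf{osc}}\circ\cN_\lambda^{\otimes n}$, so that $\Psi$ carries $T$ noise layers and the first $T-1$ recovery layers. Since $\cR_T^{\mathsf{osc}}$ is a quantum channel, monotonicity of the trace norm gives $\|\cR^{(T)}_{\mathsf{osc}}(\rho)-\cR^{(T)}_{\mathsf{osc}}(\sigma)\|_1\le\|\Psi(\rho)-\Psi(\sigma)\|_1\le 2\,W_1(\Psi(\rho),\Psi(\sigma))$; this step is precisely what allows the last recovery map to be an arbitrary channel, and it is why $C$ need not depend on it. Peeling the layers of $\Psi$ from the outside in, applying (iii) once for each of the $T$ noise layers and (iv) once for each of the $T-1$ local recovery layers $\cR_1^{\mathsf{osc}},\dots,\cR_{T-1}^{\mathsf{osc}}$, one obtains $W_1(\Psi(\rho),\Psi(\sigma))\le(\sqrt\lambda)^{T}C^{\,T-1}\,W_1(\rho,\sigma)\le(C\sqrt\lambda)^{T}\,W_1(\rho,\sigma)$ once $C\ge1$. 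Bounding $W_1(\rho,\sigma)$ by the diameter estimate (ii) under $\tr[\rho N],\tr[\sigma N]\le E$, and absorbing the remaining $n$- and $T$-independent numerical factors (which at most depend on $\lambda$ and $\sigma_E$) into $\kappa$, yields the stated inequality. For the last assertion, code concatenation makes each $\cR_t^{\mathsf{osc}}=\overline{\cR}_t\circ\cR_{\mathsf{GKP}}^{\otimes n}$ a tensor product of channels supported on pairwise disjoint blocks of at most $\ell$ modes — the $\cR_{\mathsf{GKP}}$ factors act on single modes, and the $\overline{\cR}_t$ factors inherit the block structure of the $\ell$-local scheme $\cR$ — so (iv) gives $C=O(\ell)$.

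The main obstacle is (iii): the strict $\sqrt\lambda$-contraction of $\cN_\lambda$ together with its tensorisation, which is what keeps the per-step factor independent of $n$. The clean route is the dual (Lipschitz-observable) description of $W_1$: for traceless $\tau=\rho-\sigma$ one has $W_1(\cN_\lambda(\rho),\cN_\lambda(\sigma))=\sup_H\tr[\cN_\lambda^\dagger(H)\,\tau]$ over $H$ in the Lipschitz ball — the seminorm being built from the commutators $[a_j,\,\cdot\,]$, $[a_j^\dagger,\,\cdot\,]$ (equivalently $[x_j,\,\cdot\,]$, $[p_j,\,\cdot\,]$) — so it suffices to show that the Heisenberg map $\cN_\lambda^\dagger$ shrinks that seminorm by $\sqrt\lambda$. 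Using $\cN_\lambda^\dagger(H)=\tr_E[(I\otimes\sigma_E)\,U_\lambda^\dagger(H\otimes I_E)U_\lambda]$ and the passive-rotation identity $U_\lambda^\dagger\, a\, U_\lambda=\sqrt\lambda\,a+\sqrt{1-\lambda}\,b$ (with $b$ the environment mode, which commutes with $H\otimes I_E$), one gets the operator identity $[a,\cN_\lambda^\dagger(H)]=\sqrt\lambda\,\cN_\lambda^\dagger([a,H])$; since $\cN_\lambda^\dagger$ is unital and completely positive, hence an operator-norm contraction, this gives $\|[a,\cN_\lambda^\dagger(H)]\|\le\sqrt\lambda\,\|[a,H]\|$, and the same modewise identity on $n$ modes furnishes the contraction and its tensorisation at once. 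With this in hand, the locality bound (iv) for block-diagonal channels and the diameter bound (ii) are routine, and the theorem follows by the assembly above; as in the qubit case the estimate is informative exactly when $C\sqrt\lambda<1$, i.e. for transmissivity below a threshold of order $1/\ell^2$, the continuous-variable counterpart of $p_0(\ell)=1-1/\ell$.
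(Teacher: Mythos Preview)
Your argument has a genuine gap at step~(i). You claim $\|\rho-\sigma\|_1\le 2\,W_1(\rho,\sigma)$, and you justify the strict $\sqrt\lambda$-contraction~(iii) by the intertwining identity $[a,\cN_\lambda^\dagger(H)]=\sqrt\lambda\,\cN_\lambda^\dagger([a,H])$. But these two properties belong to two \emph{different} Wasserstein-type distances, and neither one satisfies both. The intertwining identity controls the commutator-based Lipschitz seminorm $\|\nabla X\|=\max_j\|[R_j,X]\|_\infty$ (this is precisely the seminorm the paper uses), and for the dual distance $W_{\operatorname{B}}$ one does \emph{not} have $\|\rho-\sigma\|_1\le C\,W_{\operatorname{B}}(\rho,\sigma)$: there are bounded observables with arbitrarily large $\|\nabla X\|$ --- the quantum analogue of bounded non-differentiable functions --- so by duality no such bound can hold. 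Conversely, the De~Palma-style ``neighbouring states'' distance you start from does dominate the trace distance, but its dual seminorm is $\|X\|_L=2\max_j\min_{X^{(j)}}\|X-X^{(j)}\otimes I_j\|_\infty$, which is \emph{not} the commutator seminorm in the CV setting; the intertwining relation gives you nothing here, and in fact $\cN_\lambda^\dagger$ is only a (non-strict) contraction for $\|\cdot\|_L$. So your telescoping estimate breaks: you cannot both pass from $\|\cdot\|_1$ to $W_1$ and pick up a factor $\sqrt\lambda$ per noise layer.

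The paper's fix is exactly what replaces your step~(i): a \emph{regularity} (smoothing) bound showing that one noise layer converts $W_{\operatorname{B}}$ into trace distance, namely $\|\cN_\lambda^{\otimes n}(\rho-\sigma)\|_1\le \kappa\,W_{\operatorname{B}}(\rho,\sigma)$ with $\kappa=\sqrt{\lambda/(1-\lambda)}\max\{\|[Q,\sigma_E]\|_1,\|[P,\sigma_E]\|_1\}$. This is the CV analogue of ``convolution with a smooth kernel makes any bounded function Lipschitz'', and it is precisely where the constant~$\kappa$ depending on~$\lambda,\sigma_E$ in the theorem comes from --- not from ``absorbing numerical factors'' as you suggest. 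A secondary issue is that you treat the Lipschitz bound~(iv) for the recovery layers as routine; in the paper this is the bulk of the work, since $\cR_t^{\mathsf{osc}}=\overline{\cR}_t\circ\cR_{\mathsf{GKP}}^{\otimes n}$ involves unsharp homodyne measurements with syndrome-dependent displacement feedback, and bounding $\|\nabla(\cdot)\|$ through such maps requires careful integration-by-parts arguments rather than the block-locality bound you invoke.
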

Theorem~\ref{mainresult} implies, in particular, that 
the distinguishability between the outputs $\cR^{(T)}_{\mathsf{osc}}(\rho)$ and $\cR^{(T)}_{\mathsf{osc}}(\sigma)$ decays exponentially in the number~$T$ of time-steps whenever
\begin{align}
	\lambda<\frac{1}{C^2}\,
\end{align}
thus establishing an upper bound on the fault-tolerance threshold of the construction against photon loss errors.

\section{Quantum optimal transport toolbox}

\subsection{Qubit threshold by quantum optimal transport}\label{qubitproof}

Before we introduce our main new tool in the proof of \Cref{mainresult}, let us provide a modern (albeit with slightly worse constants) treatment of the result of Razborov \cite{razborov2003upper} via the use of recently introduced quantum optimal transport metrics \cite{DePalma2021}. The Wasserstein-1 distance between any two $n$-qubit quantum states $\rho,\sigma$ is defined as 
\begin{align}\label{eqW1}
	W_1(\rho,\sigma)\coloneqq \sup_{\|X\|_L\le 1}\,\Big|\tr\big[X(\rho-\sigma)\big]\Big|\,.
\end{align}
The optimization in \eqref{eqW1} is over observables $X$ of Lipschitz constant 
\begin{align}
	\|X\|_L\coloneqq 2\max_{i\in[n]}\, \min_{X^{(i)}}\,\big\|X-X^{(i)}\otimes I_i\big\|_\infty
\end{align}
less than or equal to $1$, where the minimization above is over observables $X^{(i)}$ supported on $[n]\backslash \{i\}$. As opposed to the trace distance, where the optimization is over observables of operator norm $\|X\|_\infty\le 1$, the Wasserstein distance is able to capture local differences between the states $\rho$ and $\sigma$. In particular, whenever $\rho$ and $\sigma$ share the same marginals over a set of $k$ qubits, 
\begin{align}\label{eq1W1}
	\frac{1}{2}\|\rho-\sigma\|_1\le W_1(\rho,\sigma)\le \,\frac{3(n-k)}{4}\,\|\rho-\sigma\|_1\,.
\end{align}
This results in the following control on the Wasserstein distance at the outputs of any $\ell$-local recovery map $\mathcal{R}$ and its inputs: 
\begin{align}\label{eq2W1}
	W_1(\cR(\rho),\cR(\sigma))\le \frac{3\ell}{2}\,W_1(\rho,\sigma)\,.
\end{align}
Moreover, the Wasserstein distance contracts after each layer of independent depolarizing noise:
\begin{align}\label{eq3W1}
	W_1(\cN_p^{\otimes n}(\rho),\cN_p^{\otimes n}(\sigma))\le (1-p) W_1(\rho,\sigma)\,.
\end{align}
Using the bounds \eqref{eq1W1}, \eqref{eq2W1} and \eqref{eq3W1} iteratively, we have
\begin{align*}
	\big\|\cR^{(T)}(\rho-\sigma)\big\|_1&\le 2\,W_1(\cR^{(T)}(\rho),\cR^{(T)}(\sigma))\\
	&\le 2\, \Big(\frac{3\ell (1-p)}{2}\Big)^T\, W_1(\rho,\sigma)\\
	&\le  \frac{3n}{2}\, \Big(\frac{3\ell (1-p)}{2}\Big)^T\,\|\rho-\sigma\|_1\\
	&\le 3n\,\Big(\frac{3\ell (1-p)}{2}\Big)^T\,.
\end{align*}
Therefore, if $p> 1-\frac{2}{3\ell}$, the above bound implies \eqref{eq:rhosigmaupperbound}.

\subsection{A new bosonic Wasserstein distance}

Inspired by the argument laid out in \Cref{qubitproof}, we aim at introducing a distance on $n$-mode quantum states of a bosonic system $\Lambda$ that satisfies similar bounds as \eqref{eq1W1}, \eqref{eq2W1} and \eqref{eq3W1}, namely: (i) a lower bound in terms of the trace distance as well as a uniform upper bound for a large enough class of physically relevant states; (ii) an input-output control for states undergoing a layer of $\ell$-local $\mathsf{GKP}$-stabilizer error correction; and (iii) a strict contraction under the action of the noise channel $\cN_\lambda^{\otimes n}$. A natural candidate in the classical, continuous-variable framework is the distance dual to the Lipschitz constant $\|\nabla f\|\coloneqq \max_{j\in [n]}\|\partial_{x_j}f\|_\infty$ on the class of bounded, differentiable functions $f:\mathbb{R}^n\to\mathbb{R}$. By a standard replacement of partial derivatives $\partial_{x_j}$ by commutators $[P_j,(\cdot)]$, $[Q_j,(\cdot)]$ with the position and momentum operators $P_j,Q_j$, $j\in[n]$, we arrive at the following definition.
\begin{definition}
	The bosonic Lipschitz constant of an observable $X$ on $\Lambda$ is defined as
	\begin{equation}
		\|\nabla X\|  \coloneqq\,\sup\,|\langle \psi|[R_j,X]\otimes I_R|\varphi\rangle | \equiv \max_j\,\|\nabla_j X\| \,,\nonumber
	\end{equation}
	where the supremum is over all reference systems $R$, all quadrature operators $R_j\in\{P_j,Q_j\},\,j\in \Lambda$, and any two pure states $|\psi\rangle,|\varphi\rangle$ of the joint system $\Lambda R$ with finite total photon number on $\Lambda$. When this is finite, it coincides with $\max_{R_j}\|[R_j,X]\|_\infty$. We then define the \textit{bosonic Wasserstein distance} between two $n$-mode quantum states $\rho,\sigma$ as
	\begin{equation}\label{dualWasserstein}
		W_{\operatorname{B}}(\rho,\sigma)\coloneqq\sup_{\|\nabla X\|\le 1}\,\Big|\tr\big[X(\rho-\sigma)\big]\Big|\,.
	\end{equation}
	where the supremum is over all bounded observables $X$.
\end{definition}

In the supplementary material, we show that the distance $W_{\operatorname{B}}$ satisfies bounds analogous to \eqref{eq1W1}, \eqref{eq2W1} and \eqref{eq3W1} in the qubit setting. To show a bosonic version of \eqref{eq1W1}, it would be enough to have that, for any observable $X$, $\|\nabla X\|_\infty\le C\|X\|_\infty$ for some constant $C>0$. Indeed, this would imply that $\|\rho-\sigma\|_1\le C\,W_{\operatorname{B}}(\rho_1,\rho_2)$. In analogy with the classical, continuous variable setting, this condition should, however, not hold since one can easily construct bounded, non-differentiable functions. However, when convoluted with a smooth density $g$, any function $f$ acquires the regularity of $g$, so that $\|\nabla (f\ast g)\|_\infty\le C\|f\|_\infty$. Since the noise channel $\cN_\lambda^{\otimes n}$ can be thought of as a quantum convolution \cite{konig2014entropy}, one can expect that a similar smoothing occurs in the bosonic framework. This intuition can be rigorously formalized and leads to the following bound:
\begin{prop}[see \Cref{regularity}]
	For any two $n$-mode states $\rho,\sigma$ and $\lambda\in (0,1)$, we have 
	\begin{align}\label{eqT1W1}
		\|\cN^{\otimes n}_\lambda(\rho_1-\rho_2)\|_1\le \kappa \, W_{\operatorname{B}}(\rho_1,\rho_2)\,,
	\end{align}
	where $\kappa\coloneqq \sqrt{\frac{\lambda}{1-\lambda}}\,\max\left\{\big\|[Q,\sigma_{E}]\big\|_1,\big\|[P,\sigma_{E}]\big\|_1\right\}$.
\end{prop}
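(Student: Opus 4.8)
\emph{Proof strategy.} The plan is to pass to the Heisenberg picture and reduce the estimate to a single-mode ``smoothing'' identity for the adjoint channel $\cN_\lambda^{*}$. Since $\cN_\lambda^{\otimes n}(\rho_1-\rho_2)$ is trace class, duality between the trace norm and the operator norm gives
\begin{align*}
\big\|\cN_\lambda^{\otimes n}(\rho_1-\rho_2)\big\|_1=\sup_{\|Y\|_\infty\le 1}\big|\tr\big[\cN_\lambda^{*\otimes n}(Y)\,(\rho_1-\rho_2)\big]\big|,
\end{align*}
where $\cN_\lambda^*$ is the adjoint channel, which is unital and completely positive. By homogeneity of $W_{\operatorname{B}}$ in its test observable, each term on the right is at most $\|\nabla\cN_\lambda^{*\otimes n}(Y)\|\,W_{\operatorname{B}}(\rho_1,\rho_2)$, so it is enough to show $\|\nabla\cN_\lambda^{*\otimes n}(Y)\|\le\kappa$ whenever $\|Y\|_\infty\le 1$. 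By the last sentence of the definition of the bosonic Lipschitz constant, this amounts to proving $\|[R_j,\cN_\lambda^{*\otimes n}(Y)]\|_\infty\le\kappa$ for every mode $j$ and every quadrature $R\in\{Q,P\}$.

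The heart of the argument is a single-mode identity. Using $\cN_\lambda^*(X)=\tr_E\big[(I\otimes\sigma_E)\,U_\lambda^\dagger(X\otimes I)U_\lambda\big]$ and the fact that the beamsplitter acts on quadratures as an orthogonal rotation of angle $\arccos\sqrt\lambda$ mixing the system and environment modes (so that, for instance, $U_\lambda^\dagger R_a U_\lambda=\sqrt\lambda\,R_a+\sqrt{1-\lambda}\,R_E$ and $U_\lambda R_a U_\lambda^\dagger=\sqrt\lambda\,R_a-\sqrt{1-\lambda}\,R_E$), I would commute $R_a$ past $I\otimes\sigma_E$ and through $U_\lambda,U_\lambda^\dagger$, and then apply the ``integration by parts'' identity $\tr_E[(I\otimes\sigma_E)\,[I\otimes R_E,\,M]]=-\tr_E[(I\otimes[R_E,\sigma_E])\,M]$ to move the derivative from the fixed environment mode onto $\sigma_E$. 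Collecting the terms still involving $R_a$, they reassemble into $[R_a,\cN_\lambda^*(X)]$, which yields a closed relation; solving it gives, for $\lambda\in(0,1)$,
\begin{align*}
[R_a,\cN_\lambda^*(X)]=-\sqrt{\tfrac{\lambda}{1-\lambda}}\;\tr_E\big[(I\otimes[R_E,\sigma_E])\,U_\lambda^\dagger(X\otimes I)U_\lambda\big].
\end{align*}
This is exactly the anticipated smoothing: the right-hand side is a bounded operator even though $[R_a,X]$ need not be. Combining it with the elementary estimate $\|\tr_E[(I\otimes K)M]\|_\infty\le\|K\|_1\,\|M\|_\infty$ (write a singular-value decomposition $K=\sum_i s_i|u_i\rangle\langle v_i|$, so that $\tr_E[(I\otimes K)M]=\sum_i s_i\,\langle v_i|M|u_i\rangle$ and each compression $\langle v_i|M|u_i\rangle$ has operator norm $\le\|M\|_\infty$) and with $\|U_\lambda^\dagger(X\otimes I)U_\lambda\|_\infty=\|X\|_\infty$ gives $\|[R_a,\cN_\lambda^*(X)]\|_\infty\le\sqrt{\tfrac{\lambda}{1-\lambda}}\,\|[R,\sigma_E]\|_1\,\|X\|_\infty$.

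To reach $n$ modes, I would factor $\cN_\lambda^{*\otimes n}=(\cN_\lambda^*)_{(j)}\circ\bigotimes_{i\ne j}(\cN_\lambda^*)_{(i)}$: fixing $j$ and setting $Z:=\big(\bigotimes_{i\ne j}(\cN_\lambda^*)_{(i)}\big)(Y)$, the fact that unital completely positive maps are operator-norm contractions gives $\|Z\|_\infty\le\|Y\|_\infty\le 1$, while $[R_j,\cN_\lambda^{*\otimes n}(Y)]=[R_j,\big((\cN_\lambda^*)_{(j)}\otimes\mathrm{id}\big)(Z)]$, so the single-mode estimate applied with the remaining $n-1$ modes as an inert passenger bounds this by $\sqrt{\tfrac{\lambda}{1-\lambda}}\,\max\{\|[Q,\sigma_E]\|_1,\|[P,\sigma_E]\|_1\}=\kappa$. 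Since this holds for all $j$ and $R\in\{Q,P\}$, we obtain $\|\nabla\cN_\lambda^{*\otimes n}(Y)\|\le\kappa$ and the claim follows. The main obstacle here is rigor rather than strategy: the manipulations involve the unbounded operators $Q_a,P_a$ (and $R_E$, $[R_E,\sigma_E]$), and a priori $\cN_\lambda^*(X)$ need not lie in the domain of $[R_a,\cdot]$ for arbitrary bounded $X$. I would therefore read the displayed identity as an identity of sesquilinear forms on the dense subspace of finite-photon-number vectors (which lies in $\dom Q_a\cap\dom P_a$), justify the rotation step and the integration by parts at the level of matrix elements $\langle\psi|\cdot|\varphi\rangle$, and only then use the uniform bound on the right-hand side to conclude that the form is implemented by a bounded operator; finiteness of $\|[R,\sigma_E]\|_1$ — which is tacit in the definition of $\kappa$ — is precisely what makes this extension possible.
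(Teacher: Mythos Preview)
Your proposal is correct and follows essentially the same approach as the paper: pass to the Heisenberg picture via trace-norm/operator-norm duality, establish the single-mode smoothing identity $[R,\cN_\lambda^*(X)]=-\sqrt{\tfrac{\lambda}{1-\lambda}}\,\tr_E\big[(I\otimes[R,\sigma_E])\,U_\lambda^\dagger(X\otimes I)U_\lambda\big]$, bound it via $\|\tr_E[(I\otimes K)M]\|_\infty\le\|K\|_1\|M\|_\infty$, and tensorize. The only cosmetic difference is that the paper first rewrites $U_\lambda^\dagger(X\otimes I)U_\lambda=U_{1-\lambda}(I\otimes X)U_{1-\lambda}^\dagger$ (via a SWAP identity) before running the ``closed relation'' computation in its Lemma~\ref{techlemma}, and phrases the final bound directly at the level of sesquilinear forms on finite-photon-number vectors---exactly the rigorous reading you identify at the end of your sketch.
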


In turn, the bosonic Wasserstein distance can be controlled in terms of the average photon number of the states $\rho,\sigma$:

\begin{prop}[see \Cref{W1toT1}]
	For any two $n$-mode states $\rho,\sigma$ of average total photon number bounded by $ E\ge 0$, 
	\begin{align}
		W_{\operatorname{B}}(\rho,\sigma)\le 4 \sqrt{2\,n\,(n+E)}\,.\label{eq:diameterbound}
	\end{align}
\end{prop}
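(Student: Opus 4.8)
The plan is to compare $\rho$ and $\sigma$ individually to the vacuum and exploit that the single-mode pure-loss (quantum Ornstein--Uhlenbeck) semigroup flows every finite-energy state to the vacuum exponentially fast while contracting the bosonic gradient at the same rate.

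\emph{Reduction.} By the triangle inequality for $W_{\operatorname B}$ it suffices to show $W_{\operatorname B}(\rho,\ketbra00^{\otimes n})\le 2\sqrt{2nE}$ for every $n$-mode state with $\tr[\rho N]\le E$, since then $W_{\operatorname B}(\rho,\sigma)\le W_{\operatorname B}(\rho,\ketbra00^{\otimes n})+W_{\operatorname B}(\sigma,\ketbra00^{\otimes n})\le 4\sqrt{2nE}\le 4\sqrt{2n(n+E)}$ (the $n$ inside the square root is only slack). Fix a bounded $X$ with $\|\nabla X\|\le1$, so that $\|[R_j,X]\|_\infty\le1$ for every quadrature $R_j\in\{Q_j,P_j\}$. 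Let $\mathcal L=\sum_{j=1}^n\mathcal L_j$ with $\mathcal L_j(\rho)=a_j\rho a_j^\dagger-\tfrac12\{a_j^\dagger a_j,\rho\}$ be the generator of independent single-mode photon loss, and $(e^{t\mathcal L})_{t\ge0}$ the corresponding semigroup, which drives every finite-energy state to $\ketbra00^{\otimes n}$ in trace norm. The fundamental theorem of calculus then gives
\begin{align*}
\tr\big[X(\rho-\ketbra00^{\otimes n})\big]=-\int_0^\infty\tr\big[\mathcal L^*\!\big(e^{t\mathcal L^*}X\big)\,\rho\big]\,dt=-\sum_{j=1}^n\int_0^\infty\tr\big[\mathcal L_j^*\!\big(e^{t\mathcal L^*}X\big)\,\rho\big]\,dt.
\end{align*}

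\emph{Gradient contraction and term-by-term estimate.} The crux is the intertwining identity $[a_j,\mathcal L^*(Y)]=\mathcal L^*([a_j,Y])-\tfrac12[a_j,Y]$, a direct computation with the Lindbladian (the $-\tfrac12$ comes from the mode-$j$ dissipator, and $a_j$ commutes with all other $\mathcal L_k$). Together with the fact that the maps $e^{t\mathcal L^*}$ are unital and completely positive, hence $\|\cdot\|_\infty$-contractions, this yields $[R_j,e^{t\mathcal L^*}X]=e^{-t/2}e^{t\mathcal L^*}([R_j,X])$ and therefore $\|[R_j,e^{t\mathcal L^*}X]\|_\infty\le e^{-t/2}$ for every quadrature $R_j$. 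Writing $Y_t:=e^{t\mathcal L^*}X$ and expanding $\mathcal L_j^*(Y_t)=\tfrac{1}{2\sqrt2}\big(-a_j^\dagger[Q_j,Y_t]-ia_j^\dagger[P_j,Y_t]+[Q_j,Y_t]a_j-i[P_j,Y_t]a_j\big)$, each of the four summands of $\tr[\mathcal L_j^*(Y_t)\rho]$ is bounded by Cauchy--Schwarz after splitting $\rho=\rho^{1/2}\rho^{1/2}$; e.g. $|\tr[[Q_j,Y_t]a_j\rho]|\le\|[Q_j,Y_t]\|_\infty\,\|a_j\rho^{1/2}\|_2\,\|\rho^{1/2}\|_2\le e^{-t/2}\sqrt{\bar n_j}$ with $\bar n_j:=\tr[\rho\,a_j^\dagger a_j]$, and similarly for the $a_j^\dagger$-terms. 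Hence $|\tr[\mathcal L_j^*(Y_t)\rho]|\le\sqrt2\,e^{-t/2}\sqrt{\bar n_j}$, and integrating in $t$ and applying Cauchy--Schwarz over the modes,
\begin{align*}
\big|\tr\big[X(\rho-\ketbra00^{\otimes n})\big]\big|\le 2\sqrt2\sum_{j=1}^n\sqrt{\bar n_j}\le 2\sqrt2\,\sqrt{n\textstyle\sum_{j}\bar n_j}\le 2\sqrt{2nE}.
\end{align*}
Taking the supremum over $X$ and combining with the reduction step proves the claim.

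\emph{Main obstacle.} The only real difficulty is technical rather than conceptual: $\mathcal L_j^*(Y_t)$ is built from unbounded operators, so the intertwining identity, the fundamental-theorem-of-calculus step, the convergence $e^{t\mathcal L}(\rho)\to\ketbra00^{\otimes n}$, and the interchange of limit and trace must all be justified on a suitable dense domain of finite-photon-number vectors, with the estimates extended by approximating $\rho$ by energy-truncated states. What makes the argument run is precisely the exponential gradient contraction $\|[R_j,e^{t\mathcal L^*}X]\|_\infty\le e^{-t/2}\|[R_j,X]\|_\infty$: it makes the $t$-integral converge and, crucially, produces the $\sqrt{\bar n_j}$ dependence rather than the naive $\bar n_j$ that a cruder bound would give.
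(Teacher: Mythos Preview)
Your argument is correct and follows essentially the same strategy as the paper: interpolate via an Ornstein--Uhlenbeck/loss semigroup, use the intertwining identity $[a_j,e^{t\mathcal L^*}X]=e^{-ct}e^{t\mathcal L^*}([a_j,X])$ to get exponential gradient contraction, expand the generator in commutators, bound each term by Cauchy--Schwarz to extract $\sqrt{\bar n_j}$, integrate in $t$, and finish with Cauchy--Schwarz over modes and the triangle inequality through the fixed point.

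The differences are minor but worth noting. The paper works with the finite-temperature semigroup $\mathcal L_\beta$ (fixed point $\sigma_\beta$) and only takes $\beta\to\infty$ at the very end; this produces extra $e^{-\beta/2}$ emission terms that contribute $\tr(\rho(I+N_j))^{1/2}$ rather than $\tr(\rho N_j)^{1/2}$, which is where the ``$n+$'' in $\sqrt{n(n+E)}$ comes from. By going directly to the pure-loss semigroup with the vacuum as fixed point, you avoid those terms and in fact prove the sharper bound $W_{\operatorname B}(\rho,\sigma)\le 4\sqrt{2nE}$, so the stated inequality holds with slack. On the technical side, the paper handles the unbounded-operator issues you flag by decomposing $e^{t\mathcal L}(\rho a_j^\dagger)$ etc.\ into rank-one pieces supported on $\operatorname{dom}(\sqrt N)$ via a lemma on moment-limited channels; your sketch of the obstacle is accurate, and the same device would close it.
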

The proof of the bound \eqref{eq:diameterbound} relies on a semigroup interpolation result detailed in \Cref{diameterboundsec}. 
Next, we can use the canonical commutation relations in order to find the following bosonic analogue of \eqref{eq3W1}:
\begin{prop}[see \Cref{prop:contractionNlambda}] For any $\lambda\in [0,1]$ and any two $n$-mode state $\rho,\sigma$
	\begin{align}
		W_{\operatorname{B}}(\cN_\lambda^{\otimes n}(\rho),\cN_\lambda^{\otimes n}(\sigma)) \le \sqrt{\lambda}\,W_{\operatorname{B}}(\rho,\sigma)\,.\label{NlW11}
	\end{align}
\end{prop}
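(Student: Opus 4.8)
The plan is to pass to the Heisenberg (dual) picture. By the duality defining $W_{\operatorname{B}}$,
\begin{align*}
 W_{\operatorname{B}}(\cN_\lambda^{\otimes n}(\rho),\cN_\lambda^{\otimes n}(\sigma))
 =\sup_{\|\nabla X\|\le 1}\big|\tr\big[(\cN_\lambda^{\otimes n})^*(X)\,(\rho-\sigma)\big]\big|,
\end{align*}
where $(\cN_\lambda^{\otimes n})^*$ is the (unital, completely positive) adjoint channel, so it suffices to establish the dual contraction estimate $\|\nabla (\cN_\lambda^{\otimes n})^*(X)\|\le\sqrt\lambda\,\|\nabla X\|$ for every bounded observable $X$. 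Indeed, writing $Y\coloneqq(\cN_\lambda^{\otimes n})^*(X)$, which is again bounded, one gets for $\|\nabla X\|\le 1$ that $\|\nabla (Y/\sqrt\lambda)\|\le 1$, hence $|\tr[X(\cN_\lambda^{\otimes n}(\rho)-\cN_\lambda^{\otimes n}(\sigma))]|=|\tr[Y(\rho-\sigma)]|\le\sqrt\lambda\,W_{\operatorname{B}}(\rho,\sigma)$, and taking the supremum over $X$ yields the claim. (For $\lambda=1$, $\cN_1=\id$ and the bound is trivial; for $\lambda=0$ the channel $\cN_0^{\otimes n}$ is constant and the left-hand side vanishes; so assume $\lambda\in(0,1)$.) Moreover one may assume $\|\nabla X\|<\infty$, in which case $[R_j,X]$ is a bounded operator for every quadrature $R_j\in\{P_j,Q_j\}$ with $\max_{R_j}\|[R_j,X]\|_\infty=\|\nabla X\|$ by the definition of the bosonic Lipschitz constant; in particular all operators occurring below are bounded on the finite-particle domain and the manipulations are legitimate.

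The heart of the argument will be the single-mode identity $[R,\cN_\lambda^*(X)]=\sqrt\lambda\,\cN_\lambda^*([R,X])$ for each quadrature $R\in\{P,Q\}$. Writing $\cN_\lambda^*(X)=\tr_E\big[(I\otimes\sigma_E)\,U_\lambda^\dagger(X\otimes I_E)U_\lambda\big]$ and using that the system quadrature $R$ commutes with $I\otimes\sigma_E$, one pulls $R$ inside the partial trace and conjugates by $U_\lambda$ to get
\begin{align*}
 [R,\cN_\lambda^*(X)]=\tr_E\big[(I\otimes\sigma_E)\,U_\lambda^\dagger\,\big[\,U_\lambda(R\otimes I_E)U_\lambda^\dagger,\,X\otimes I_E\,\big]\,U_\lambda\big].
\end{align*}
The beamsplitter implements the Bogoliubov transformation $U_\lambda(R\otimes I_E)U_\lambda^\dagger=\sqrt\lambda\,R\otimes I_E-\sqrt{1-\lambda}\,I\otimes R_E$ on quadratures (with $R_E$ the corresponding environment quadrature), and $I\otimes R_E$ commutes with $X\otimes I_E$; hence the inner commutator equals $\sqrt\lambda\,[R,X]\otimes I_E$, which gives the stated identity. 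Since $\cN_\lambda^{\otimes n}$ acts mode-wise and $R_j$ is supported on mode $j$ only, this tensorizes to $[R_j,(\cN_\lambda^{\otimes n})^*(X)]=\sqrt\lambda\,(\cN_\lambda^{\otimes n})^*([R_j,X])$.

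To finish, I would invoke that $(\cN_\lambda^{\otimes n})^*$ is unital and completely positive, hence an operator-norm contraction, so
\begin{align*}
 \big\|[R_j,(\cN_\lambda^{\otimes n})^*(X)]\big\|_\infty=\sqrt\lambda\,\big\|(\cN_\lambda^{\otimes n})^*([R_j,X])\big\|_\infty\le\sqrt\lambda\,\big\|[R_j,X]\big\|_\infty.
\end{align*}
Taking the maximum over $j$ and over the two quadratures, and bounding $\|\nabla Y\|$ (a supremum over a restricted family of states) by $\max_{R_j}\|[R_j,Y]\|_\infty$, gives $\|\nabla (\cN_\lambda^{\otimes n})^*(X)\|\le\sqrt\lambda\,\max_{R_j}\|[R_j,X]\|_\infty=\sqrt\lambda\,\|\nabla X\|$, and the first paragraph then concludes the proof. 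The one delicate point is the interaction between the unbounded quadratures $R_j$ and the partial-trace/adjoint-channel expression; I expect this to be the main obstacle, and would handle it by working on the dense finite-particle domain (on which all the displayed identities hold as genuine operator identities, since $X$ and $[R_j,X]$ are bounded), or equivalently by unfolding $(\cN_\lambda^{\otimes n})^*$ through a Stinespring dilation built from $U_\lambda^{\otimes n}$ and a purification of $\sigma_E^{\otimes n}$, which turns $\langle\psi|[R_j,(\cN_\lambda^{\otimes n})^*(X)]\otimes I_R|\varphi\rangle$ into $\sqrt\lambda\,\langle\tilde\psi|[R_j,X]\otimes I_{R'}|\tilde\varphi\rangle$ with $|\tilde\psi\rangle,|\tilde\varphi\rangle$ still of finite average photon number on $\Lambda$ (the beamsplitter mapping finite-energy inputs to finite-energy outputs). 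The remaining work is just bookkeeping of sign and normalization conventions for $U_\lambda$ and the adjoint channel.
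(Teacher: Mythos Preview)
Your proof is correct and follows the same strategy as the paper: pass to the dual picture and exploit the intertwining relation $[R_j,(\cN_\lambda^{\otimes n})^\dagger(X)]=\sqrt{\lambda}\,(\cN_\lambda^{\otimes n})^\dagger([R_j,X])$, which is precisely the content of the paper's Lemma~\ref{techlemma} (stated there for $a,a^\dagger$). The only minor difference is in the final step: you bound $\|(\cN_\lambda^{\otimes n})^\dagger([R_j,X])\|_\infty$ directly via the operator-norm contraction of unital completely positive maps, whereas the paper stays at the level of matrix elements and decomposes $(\cN_\lambda^{\otimes n}\otimes\id_R)(|\psi\rangle\langle\varphi|)=\sum_i\alpha_i|e_i\rangle\langle f_i|$ using its Lemma~\ref{lem.technical} to ensure the vectors $|e_i\rangle,|f_i\rangle$ remain in $\operatorname{dom}(\sqrt{N}\otimes I_R)$ --- your route is a touch more direct and equally rigorous once one uses that $\|\nabla X\|=\max_{R_j}\|[R_j,X]\|_\infty$ when finite.
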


Finally, we would like to find bosonic analogues of \eqref{eq2W1}. First, we show that the $1$-mode GKP error correction scheme satisfies an input-output control bound akin to \eqref{eq2W1}. We refer the interested reader to \Cref{Gauss} for a short review on $\mathsf{GKP}$ error correction.

\begin{prop}[see \Cref{prop:ECW1}]
	The $1$-mode $\mathsf{GKP}$ quantum error correction scheme modeled by the channel $\cR_{\mathsf{GKP}}$ introduced in \Cref{Gauss} satisfies
	\begin{align}\label{GKPbound}
		W_{\operatorname{B}}(\cR_{\mathsf{GKP}}^{\otimes n}(\rho),\cR_{\mathsf{GKP}}^{\otimes n}(\sigma))\le 2 W_{\operatorname{B}}(\rho,\sigma)
	\end{align}
	for all $n$-mode states $\rho,\sigma$.
\end{prop}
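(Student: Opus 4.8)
The plan is to argue by duality and reduce everything to a one-mode statement about the adjoint channel $\cR_{\mathsf{GKP}}^*$. By the definition of $W_{\operatorname{B}}$ together with the Heisenberg picture,
\[
W_{\operatorname{B}}\big(\cR_{\mathsf{GKP}}^{\otimes n}(\rho),\cR_{\mathsf{GKP}}^{\otimes n}(\sigma)\big)=\sup_{\|\nabla X\|\le 1}\big|\tr\big[(\cR_{\mathsf{GKP}}^*)^{\otimes n}(X)\,(\rho-\sigma)\big]\big|,
\]
where $(\cR_{\mathsf{GKP}}^*)^{\otimes n}=(\cR_{\mathsf{GKP}}^{\otimes n})^*$. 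Since $|\tr[Z(\rho-\sigma)]|\le\|\nabla Z\|\,W_{\operatorname{B}}(\rho,\sigma)$ for every bounded $Z$ by homogeneity of $\|\nabla\,\cdot\|$, it suffices to prove that $\|\nabla (\cR_{\mathsf{GKP}}^*)^{\otimes n}(X)\|\le 2\,\|\nabla X\|$ for all bounded $X$. Because $\|\nabla\,\cdot\|=\max_j\|\nabla_j\,\cdot\|$, because $(\cR_{\mathsf{GKP}}^*)^{\otimes n}=\cR_{\mathsf{GKP},j}^*\circ\bigotimes_{k\ne j}\cR_{\mathsf{GKP},k}^*$, and because each $\cR_{\mathsf{GKP},k}^*$ ($k\ne j$) is unital completely positive—hence contracts $\|\cdot\|_\infty$ and commutes with $[R_j,\cdot]$—the tensor estimate reduces to the one-mode bound
\[
\|[R,\cR_{\mathsf{GKP}}^*(Y)]\|_\infty\le 2\max\{\|[Q,Y]\|_\infty,\|[P,Y]\|_\infty\}\qquad(R\in\{Q,P\})
\]
for an arbitrary bounded $Y$ (which may act on spectator modes).

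To prove the one-mode bound I would use the explicit model of $\cR_{\mathsf{GKP}}$ from \Cref{Gauss}: it is the composition of (i) adjoining finite-energy approximate $\mathsf{GKP}$ ancillas in a fixed state $\tau$, (ii) a Gaussian (SUM-type) unitary $V$ coupling data and ancilla, (iii) a finite-variance homodyne measurement of an ancilla quadrature with POVM $\{E_s\}$, (iv) a feedback displacement $D_s=e^{-ic(s)R'}$ on the data ($R'$ conjugate to the corrected quadrature, $c$ the modular-reduction function), and (v) discarding the ancilla. Taking adjoints, $\cR_{\mathsf{GKP}}^*(Y)=\Phi^*\!\big(\int ds\,(D_s^\dagger Y D_s)\otimes E_s\big)$ with $\Phi^*(Z)=\tr_{\mathrm{anc}}[(I\otimes\tau)\,V^\dagger Z V]$ unital completely positive. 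Moving $R$ through $\Phi^*$ and then through $V$, one has $V R V^\dagger=$ an affine combination of at most two quadratures of the data–ancilla pair (this is where a SUM gate produces a coefficient sum equal to $2$), and the constant part together with the feedback $D_s$ drops out of all commutators. The data-quadrature term contributes $D_s^\dagger[R,Y]D_s$ under the integral, hence at most $\max\{\|[Q,Y]\|_\infty,\|[P,Y]\|_\infty\}$ after applying the contraction $\Phi^*$ and using completeness $\int ds\,E_s=I$. The ancilla-quadrature term is $\int ds\,(D_s^\dagger Y D_s)\otimes[R_{\mathrm{anc}},E_s]$; writing $[R_{\mathrm{anc}},E_s]=i\,\partial_s E_s$, folding the POVM modulo the lattice, and integrating by parts to trade $\partial_s E_s$ for $\partial_s(D_s^\dagger Y D_s)=\pm i\,D_s^\dagger[R,Y]D_s$ shows this term is likewise controlled by $\max\{\|[Q,Y]\|_\infty,\|[P,Y]\|_\infty\}$. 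Summing the two contributions yields the factor $2$; the complementary position/momentum roles of the two EC rounds ensure that composing them keeps the overall factor at $2$ rather than $4$ (one round is "factor $1$" on one quadrature and "factor $\le 2$" on the conjugate, and vice versa for the other round).

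I expect the integration-by-parts step to be the main obstacle. The modular-reduction correction $c(s)$ is only piecewise linear, jumping at lattice midpoints, so $\partial_s(D_s^\dagger Y D_s)$ carries distributional boundary/jump contributions there; one must verify that, after smearing by the finite-variance homodyne POVM and weighting by the finite-energy ancilla $\tau$, these contributions remain bounded by a constant multiple of $\max\{\|[Q,Y]\|_\infty,\|[P,Y]\|_\infty\}$—and, for the stated constant, are absorbable into the factor $2$. This forces care with operator domains (so that the boundary terms genuinely vanish) and with the precise normalization of the homodyne POVM fixed in \Cref{Gauss}. The remaining ingredients—duality, the mode-by-mode reduction, contractivity of unital completely positive maps, and the disappearance of displacement generators from commutators—are routine.
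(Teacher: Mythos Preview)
Your overall strategy is the paper's: pass to the dual, use tensorization to reduce to a single mode, push the quadrature commutator through the Gaussian entangling unitary to split it into a data-quadrature term and an ancilla-quadrature term, and then convert the ancilla commutator $[R_{\mathrm{anc}},E_s]$ into $i\partial_s E_s$ so that an integration-by-parts identity relates the ancilla term back to a data term. Two points of comparison are worth noting.

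First, your anticipated obstacle---boundary/jump contributions from the modular-reduction function $c(s)$---does not arise in the paper's model. In the definition of $\Phi^{\mathrm{St}}$ the feedback is $\cV^{(q,p)}=e^{iqP}e^{-ipQ}(\cdot)e^{ipQ}e^{-iqP}$ with the \emph{raw} outcomes $(q,p)$, not their reductions modulo $\sqrt{\pi}$. Consequently the integration by parts (the paper's Lemma~\ref{lem:Pmoveeq}) has no interior jump terms; the only thing to check is the vanishing of the endpoint contribution as $|q|\to\infty$, which follows from the Gaussian decay of the unsharp POVM density $m_{\hat{Q}}(q)$ together with the finite-photon-number assumption on the test vectors.

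Second, the paper does not compose the two EC rounds and multiply round-wise factors as you sketch. It analyses the full Steane map $\Phi^{\mathrm{St}}$ at once and obtains an \emph{equality}: the integration-by-parts lemma shows the ancilla term equals $\pm$ the data term. With the $\mathrm{CNOT}$ action $\cU(P_j)=P_j-P_{C_j}$ and $\cU(Q_j)=Q_j+Q_{B_j}$, this yields $[P_j,\Phi^{\mathrm{St}\dagger}(X)]=2\Delta_j$ (the two terms add) while $[Q_j,\Phi^{\mathrm{St}\dagger}(X)]=0$ (they cancel), so the constant $2$ comes from one quadrature and $0$ from the other, rather than from a $1\times 2$ and $2\times 1$ composition argument. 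Your heuristic would give the right bound but not this clean mechanism. Finally, the reduction ``unital CP maps contract $\|\cdot\|_\infty$'' is too coarse here, since $[R_j,X]$ need not be bounded for bounded $X$; the paper instead works with the state-vector definition of $\|\nabla\cdot\|$ and uses the moment-limited property of the intermediate channels (Lemma~\ref{lem.technical}) to keep all test vectors inside $\operatorname{dom}(\sqrt{N}\otimes I_R)$.
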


At this stage, combining \eqref{GKPbound} together with \eqref{NlW11}, one can show that the Wasserstein distance decreases exponentially with the circuit depth whenever $\lambda\le 1/4$. This is perhaps not so surprising since the quantum capacity of a quantum loss channel is known to reach $0$ at $\lambda=1/2$ \cite{Wolf07}. However, it is interesting to note that the convergence in Wasserstein or trace distance that we establish here implies the stronger convergence of any (energy-constrained) capacity of the noisy circuit $\mathcal{C}$, and in particular of its energy constrained classical capacity \cite{Shirokov2018}. We also mention that an upper bound on the channel infidelity between the noise composed with the $\mathsf{GKP}$ recovery channel and the ideal identity channel
as a function of $\lambda$ was found in \cite[Equation (7.24)]{Albert18}.

Finally, we consider recovery maps obtained from code concatenation. Specifically, we consider an $[n,k,d]$-stabilizer code $\cC$ - where $n$ denotes the number of physical qubits, $k$ the number of logical qubits and $d$ the code distance- with stabilizer generators~$\{S_j\}_{j=1}^{n-k}$  and an associated recovery map~$\cR$ of the form given in \eqref{eq:Recovery}. We are interested in local recovery operations, and formalize this as follows: We assume that there is a  Clifford circuit~$U$ (composed of one- and two-qubits) on the system and additional~$r=n-k$ auxiliary qubits $A_1\dots A_{n-k}$ such that syndrome information can be extracted by applying~$U$ and subsequently measuring each individual qubit.
% We have 
%  \begin{align}
%  \Pi(s)\rho\Pi(s)&=\tr_{A^{m-k}}\left( \bigotimes_{j=1}^{m-k} |s_j\rangle \langle s_j|_{A_j} \cU(\rho\otimes |0\rangle\langle 0|^{\otimes m-k})\right)\qquad\textrm{ for every }\qquad s\in \{0,1\}^{m-k}\ .
%  \end{align}
We capture locality as follows: first, we assume that the backward lightcone of every 
ancilla qubit $A_j$ under the action of $U$ has size at most $\ell_{\mathrm{meas}}$.
%  \begin{align}
% |\cL^{\leftarrow}_{\cU}(A_j)|&\leq \ell_{\mathrm{meas}}\qquad\textrm{ for }\qquad j=1,\ldots,m-k\ .\label{eq:finitelightconeassumption}
%  \end{align}
Furthermore, we assume that the correction operation can be computed by applying local functions to~$s$: There is a partition $[n]={\bigcup}_{j=1}^r \cF_j$ of $[n]$ into disjoint subsets~$\cF_1,\ldots,\cF_r$ and functions $C_j:\{0,1\}^{\cF_j}\rightarrow\{I,X,Y,Z\}^{\otimes |\cF'_j|}$ such that 
\begin{align}
	C(s)&=\prod_{j=1}^r C_j(s_{\cF_j})_{\cF'_j}\qquad\textrm{ for all }\qquad s\in \{0,1\}^{n-k}\ ,\label{eq:correctionlocalfunction}
\end{align}
where $s_{\cF_j}$ denotes the restriction of~$s$ to~$\cF_j$, i.e., the corresponding substring of syndrome bits. That is, for each~$j=1,\ldots,r$,  each operator~$C_j(s_{\cF_j})$ has support contained in a set~$\cF'_j$. For what follows, we further assume that the sets $\{\cF'_j\}_j$ are pairwise disjoint. Then, we suppose that there are constants~$\ell_{\mathrm{corr}},\ell'_{\mathrm{corr}}$ such that $|\cF_j| \leq \ell_{\mathrm{corr}}$, resp.~ $|\cF'_j|  \leq \ell'_{\mathrm{corr}}$ for all $j=1,\ldots,r$.

\begin{prop}[see \Cref{prop:upperboundlocalrecovery}]
	The recovery map $\overline{\cR}$ satisfies  
	\begin{align}\label{refGamma}
		W_{\operatorname{B}}(\overline{\cR}(\rho),\overline{\cR}(\sigma))\le  \Gamma \,W_{\operatorname{B}}(\rho,\sigma)
	\end{align}
	for all $n$-mode states $\rho,\sigma$, where 
	$$\Gamma \coloneqq \ell_{\operatorname{meas}}+ \frac{\ell_{\operatorname{meas}}\,2^{1+\ell_{\operatorname{meas}}(1+\ell_{\operatorname{corr}})}\ell_{\operatorname{meas}}\cdot \ell_{\operatorname{corr}}'}{\sqrt{\alpha_{\min}\pi}}\,,$$
	where $\alpha_{\operatorname{min}}$ denotes the minimal variance of the unsharp position and momentum measurements done in the scheme. 
\end{prop}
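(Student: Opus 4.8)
The plan is to decompose the encoded recovery map $\overline{\cR}$ into its two constituent stages — syndrome extraction via the Clifford circuit $U$ followed by an unsharp measurement of each ancilla, and the conditional correction $C(s)$ — and to track how the bosonic Lipschitz constant $\|\nabla X\|$ of an observable is amplified when pulled back through each stage. Since $W_{\operatorname{B}}$ is defined by duality against observables with $\|\nabla X\|\le 1$, it suffices to show that for every bounded observable $X$ on the output, the Heisenberg-picture image $\overline{\cR}^{*}(X)$ satisfies $\|\nabla \overline{\cR}^{*}(X)\| \le \Gamma\,\|\nabla X\|$; then $|\tr[X(\overline{\cR}(\rho)-\overline{\cR}(\sigma))]| = |\tr[\overline{\cR}^{*}(X)(\rho-\sigma)]| \le \Gamma\,\|\nabla X\|\,W_{\operatorname{B}}(\rho,\sigma)$, and taking the supremum gives \eqref{refGamma}. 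Throughout I would work mode-by-mode: fix a mode $i\in\Lambda$ and a quadrature $R_i\in\{P_i,Q_i\}$, and estimate $\|[R_i,\overline{\cR}^{*}(X)]\|_\infty$.

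First I would handle the correction stage. In the $\mathsf{GKP}$-encoded picture, each logical Pauli $C_j(s_{\cF_j})$ becomes a product of $\mathsf{GKP}$ displacement-type operations (bounded unitaries up to the finite-energy approximation), and by the locality assumption $C(s)=\prod_j C_j(s_{\cF_j})_{\cF'_j}$ the operators act on pairwise-disjoint mode sets $\cF'_j$. Conjugating $X$ by $C(s)$ is unitary, so it does not change the operator norm, but taking the commutator with a quadrature $R_i$ on mode $i\in\cF'_j$ produces, via the Leibniz rule, terms involving $[R_i, C_j(s_{\cF_j})_{\cF'_j}]$ — and since $C_j$ lives on at most $\ell'_{\operatorname{corr}}$ modes these commutators are controlled, contributing the factor $\ell'_{\operatorname{corr}}$ and the exponential-in-$\ell_{\operatorname{meas}}$ prefactor that appears in $\Gamma$. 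Crucially, the syndrome $s$ is not a classical label here but is produced by a quantum measurement, so $\overline{\cR}^{*}(X) = \sum_s (\text{POVM element } M_s) \otimes (\text{stuff}) \cdot C(s)^{*} U^{*}(\ldots)$; I would write $\overline{\cR}^{*}(X)$ as an integral over measurement outcomes against the unsharp (Gaussian-smeared) POVM, whose density is bounded by $1/\sqrt{\alpha_{\min}\pi}$ — this is where $\alpha_{\min}$ enters.

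Next I would handle the syndrome-extraction stage. The Clifford $U$ maps each quadrature $R_i$ to a linear combination of at most (constant) many quadratures in its forward lightcone, but the relevant bound is on the \emph{backward} lightcone of the ancilla qubits: since each $A_j$ depends on at most $\ell_{\operatorname{meas}}$ input modes, pulling the ancilla measurement back through $U$ keeps each measured observable local to $\le\ell_{\operatorname{meas}}$ modes, so commuting $R_i$ past $U^{*}$ touches at most $\ell_{\operatorname{meas}}$ terms — giving the leading additive $\ell_{\operatorname{meas}}$ in $\Gamma$ and a factor $\ell_{\operatorname{meas}}$ inside the second term. Assembling: $\|[R_i,\overline{\cR}^{*}(X)]\|_\infty \le \ell_{\operatorname{meas}}\|\nabla X\| + \ell_{\operatorname{meas}}\cdot\frac{2^{1+\ell_{\operatorname{meas}}(1+\ell_{\operatorname{corr}})}\ell_{\operatorname{meas}}\ell'_{\operatorname{corr}}}{\sqrt{\alpha_{\min}\pi}}\|\nabla X\| = \Gamma\|\nabla X\|$, where the $2^{1+\ell_{\operatorname{meas}}(1+\ell_{\operatorname{corr}})}$ factor comes from summing the Leibniz expansion over the $\le 2^{\ell_{\operatorname{meas}}(1+\ell_{\operatorname{corr}})}$ possible syndrome substrings feeding each correction block and the $\le\ell_{\operatorname{meas}}$ overlapping correction regions. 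I expect the main obstacle to be bookkeeping the intertwined locality structure correctly: because the sets $\{\cF_j\}$ (syndrome bits feeding block $j$) and $\{\cF'_j\}$ (support of the correction) and the ancilla backward lightcones are three \emph{different} partitions/coverings, one must verify that a single physical mode $i$ is affected by only a constant number of blocks in each, and that the unsharp-measurement smearing genuinely converts the otherwise-unbounded displacement corrections into a bounded integral operator with the claimed $1/\sqrt{\alpha_{\min}}$ density bound — the finite-energy approximation of the $\mathsf{GKP}$ states is what makes all the intermediate operators bounded so that the "coincides with $\max_{R_j}\|[R_j,X]\|_\infty$" clause of the definition applies.
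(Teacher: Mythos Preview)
Your high-level duality strategy is right, but there is a genuine gap in how you handle the two terms, and your attribution of the constants is off.

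First, the ``correction stage'' Leibniz argument does not work as you describe. Because the $\mathsf{GKP}$-encoded Pauli corrections are displacements~$V_{f(s)}$, and displacements commute with quadratures up to a \emph{scalar}, the cross-terms in the Leibniz expansion cancel exactly: $[R_i,V_{f(s)}^\dagger X V_{f(s)}]=V_{f(s)}^\dagger[R_i,X]V_{f(s)}$. So commuting $R_i$ past the correction contributes nothing beyond the conjugation; the factor $\ell'_{\mathrm{corr}}$ does \emph{not} arise here. In the paper, the first (additive) $\ell_{\mathrm{meas}}$ in~$\Gamma$ comes entirely from pulling $R_i$ through the Gaussian unitary $\overline{\cU}$, writing $\overline{\cU}(R_i)=\sum_k u_{i,k}R_k$, and keeping only the at most $\ell_{\mathrm{meas}}$ \emph{system}-quadrature terms, each of which hits $X(s)$ and yields $V_{f(s)}^\dagger[R_k,X]V_{f(s)}$ as above.

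The real work---which your sketch does not supply---is the term where the \emph{ancilla} quadratures in $\overline{\cU}(R_i)$ hit the POVM elements $\overline{P}(s)$, producing $\sum_s X(s)\otimes[R_{2m+k},\overline{P}(s)]$. This is proportional to $X$, not $[R,X]$, so bounding $\|[R_{2m+k},\overline{P}(s)]\|\le 1/(\pi\sqrt{\alpha_{\min}})$ alone gives only a $\|X\|_\infty$ bound, not a $\|\nabla X\|$ bound. The paper closes this gap with two ideas you are missing: (i) a \emph{subtraction trick}---rewrite the sum over $s=(v,w)$ so that only the differences $\Delta_vX(w)\coloneqq X(v,w)-X(0^{\delta_j},w)$ survive (the $v=0$ reference term has $\overline{P}(w)$ commuting with the relevant ancilla quadratures by the lightcone condition); and (ii) an \emph{interpolation lemma}---writing $\Delta_vX(w)=X-\cV_{-\xi}(X)$ with $\xi=\Delta_vf(w)$ and differentiating along $x\mapsto V_{x\xi}^\dagger X V_{x\xi}$ to get $|\langle e|\Delta_vX(w)|f\rangle|\le \|\xi\|_1\cdot\|\nabla X\|$. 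The locality of $f$ (Eq.~\eqref{eq:fsdirectsum}) then gives $\|\xi\|_1\le 2\sqrt{\pi}\,\ell_{\mathrm{meas}}\ell'_{\mathrm{corr}}$ (this is where $\ell'_{\mathrm{corr}}$ actually enters) and shows $\xi$ depends only on $(v,w_1)$ with $|w_1|\le\delta_j\ell_{\mathrm{corr}}$, so the sum over syndromes collapses to $2^{\delta_j(1+\ell_{\mathrm{corr}})}\le 2^{\ell_{\mathrm{meas}}(1+\ell_{\mathrm{corr}})}$ terms. Without (i)+(ii) you cannot get back to $\|\nabla X\|$ from the ancilla-commutator term.
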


Finally, by combining \eqref{eq:diameterbound}, \eqref{refGamma}, \eqref{GKPbound}, \eqref{NlW11} and \eqref{eqT1W1}, we have proved that, for any two states $\rho,\sigma$ of average total photon number bounded by $E\ge 0$,
\begin{align*}
	\|\mathcal{R}_{\mathsf{osc}}^{(T)}(\rho-\sigma)\|_1\le 4\kappa\,\big(2\Gamma\,\sqrt{\lambda}\big)^T\sqrt{2n(n+E)}\,,
\end{align*}
which is what we claimed in \Cref{mainresult} for $C=2\Gamma$.

\section*{Acknowledgments}
% \mcc{Other people to thank for discussions?}
R.K.  gratefully acknowledges support by the European Research Council under grant agreement no.~101001976 (project EQUIPTNT).

C.R. acknowledges the support of the Munich Center for Quantum Science and Technology, as well as that of the Humboldt Foundation.

\bibliographystyle{ieeetr}
\bibliography{biblio}

\onecolumngrid

\newpage 

\appendix
\section*{Supplementary material}\label{sec:proofnoisycircuit}

\section{Notations and definitions}

\subsection{Operators and norms}

Given a separable Hilbert space $\cH$, we denote by $\cB(\cH)$ the space of bounded linear operators on $\cH$, and by $\cT_p(\cH)$, the \textit{Schatten $p$-class}, which is the Banach subspace of $\cB(\cH)$ formed by all bounded linear operators whose Schatten $p$-norm, defined as $\|X\|_{p}=\left(\tr|X|^p\right)^{1/p}$,  is finite. Henceforth, we refer to $\cT_1(\cH)$ as the set of \textit{trace class} operators. The set of quantum states (or density matrices), that is positive semi-definite operators $\rho \in \cT_1(\cH)$ of unit trace, is denoted by $\cD(\cH)$. The Schatten $1$-norm, $\|\cdot\|_1$, is the {trace norm}, and the corresponding induced distance (e.g.\ between quantum states) is the {trace distance}. Note that the Schatten $2$-norm, $\|\cdot\|_2$, coincides with the \textit{Hilbert--Schmidt norm}.

For a pair of positive semi-definite operators, $X,Y$ with domains $\dom(X),\dom(Y) \subseteq \cH$, $X\geq Y$ if and only if $\dom\left(X^{1/2}\right)\subseteq \dom\left(Y^{1/2}\right)$ and $\left\|X^{1/2}\ket{\psi}\right\|^2\geq \left\|Y^{1/2}\ket{\psi}\right\|^2$ for all $\ket{\psi}\in \dom\left(X^{1/2}\right)$. 
If $\rho$ is a quantum state with spectral decomposition $\rho=\sum_i p_i |{\phi_i}\rangle\langle \phi_i|$, and $X$ is a positive semi-definite operator, the \textit{expected value} of $X$ on $\rho$ is defined as
\begin{equation}\tr[\rho X]\coloneqq \sum_{i:\, p_i>0} p_i \left\|X^{1/2}|{\phi_i}\rangle \right\|^2 \in \RR_+\cup \{+\infty\}\, ;
	\label{expected positive}
\end{equation}
here we use the convention that $\tr[\rho X]=+\infty$ if the above series diverges or if there exists an index $i$ for which $p_i>0$ and $\ket{\phi_i}\notin \dom\left(X^{1/2}\right)$. This definition can be extended to a generic densely defined self-adjoint operator $X$ on $\cH$, by considering its decomposition $X=X_+-X_-$ into positive and negative parts, with $X_\pm$ being positive semi-definite operators with mutually orthogonal supports. The operator $X$ is said to have a \textit{finite expected value on $\rho$} if $(i)$ $\ket{\phi_i}\in \dom\big(X_+^{1/2}\big)\cap \dom\big(X_-^{1/2}\big)$ for all $i$ for which $p_i>0$, and $(ii)$ the two series $\sum_i p_i \big\|X_\pm^{1/2} \ket{\phi_i}\big\|^2$ both converge. In this case, the following quantity is called the \textit{expected value} of $X$ on $\rho$:
\begin{equation}
	\tr[\rho X]\coloneqq \sum_{i:\, p_i>0} p_i \left\|X_+^{1/2} \ket{\phi_i}\right\|^2 + \sum_{i:\, p_i<0} p_i \left\|X_-^{1/2} \ket{\phi_i}\right\|^2
	\label{expected}
\end{equation}
Obviously, for a pair of operators $X,Y$ satisfying $X\geq Y$, we have that $\tr[\rho X]\geq \tr[\rho Y]$.

Here we adopt standard notations from quantum information theory: given a multipartite quantum system with associated Hilbert space $\cH=\cH_A\otimes \cH_B$, and a state $\rho\in \cD(\cH)$, we denote by $\rho_A\coloneqq\tr_{\cH_B}(\rho)$ the marginal state on system $A$. Similarly, an observable, i.e. a self-adjoint operator $X$ on $\cH_A$ is identified with the observable $X_A\otimes I_B$ on the system $AB$, that is over the joint Hilbert space $\cH_A\otimes \cH_B$, where $I_B$ denotes the identity operator on $\cH_B$. A quantum channel with input system $A$ and output system $B$ is any completely positive, trace-preserving (CPTP) linear map $\cN:\cT_1(\cH_A)\to\cT_1(\cH_B)$, where $\cH_A, \cH_B$ are the Hilbert spaces corresponding to $A,B$, respectively. We denote identity superoperator over a system $A$ by $\id_A$.  

\subsection{Phase-space formalism}

In this paper, given $m\in\NN$, we are concerned with the Hilbert space $\cH_m\coloneqq L^2(\RR^m)$ of a so-called $m$-mode oscillator, which is the space of square-integrable functions on $\RR^m$. We often denote by $\Lambda=\{1,\dots,m\}$ the (arbitrarily ordered) set of $m$-modes. We denote by $Q_j$ and $P_j$ the canonical position and momentum operators on the $j^{\text{th}}$ mode. The $j^{\text{th}}$ creation and annihilation operators $a_j=(Q_j+iP_j)/\sqrt{2}$ and $a_j^\dag=(Q_j-iP_j)/\sqrt{2}$ satisfy the well-known \textit{canonical commutation relations} (CCR):
\begin{align}
	\label{CCRlie}
	[a_j,a_k]=0\,,\qquad [a_j,a_k^\dagger]=\delta_{jk}I\,,
\end{align}
where $I$ denotes the identity operator on $\cH_m$. The Hilbert space $\cH_m$ can be understood as a projective representation of the symplectic group \cite{holevo2013quantum}. We denote by $\mathbb{M}_{2m}(\RR)$ the set of $2m \times 2m$ real matrices, and by $\operatorname{Sp}_{2m}(\RR)$, the set of symplectic matrices in $\mathbb{M}_{2m}(\RR)$, i.e.~matrices $S \in\mathbb{M}_{2m}(\RR)$ satisfying the condition $S\Omega_{m}S^{\operatorname{T}} = \Omega_{m}$, where $\Omega_{m}$ denotes the $2m\times 2m$ commutation matrix:
\begin{equation} \label{comm}
	\Omega_{m} \coloneqq \begin{pmatrix} 0 & -1 \\ 1 & 0 \end{pmatrix}^{\oplus m}\, .
\end{equation}
Any symplectic matrix $S$ has determinant equal to one and is invertible with $S^{-1} \in \operatorname{Sp}_{2m}(\RR)$. Hence, $\operatorname{Sp}_{2m}(\RR)$ is a subgroup of the special linear group $\operatorname{SL}_{2m}(\mathbb{R})$. In terms of the vector $R\coloneqq (Q_1,P_1,\dots,Q_m,P_m)$, the above relations take the compact form $[R_j,R_{k}]=-i (\Omega_{m})_{jk}$, where $\Omega_m$ denotes the $2m\times 2m$ standard symplectic form defined in \eqref{comm}. We will often omit the subscript $m$ if the number of modes is fixed. The \textit{total photon number} is defined by
\begin{equation}
	N\coloneqq \sum_{j=1}^m a_j^\dag a_j = \sum_{j=1}^m \frac{Q_j^2 + P_j^2}{2} - \frac{m}{2}\, .
	\label{total_photon_number}
\end{equation}
The operator $N$ is diagonal in the multi-mode Fock basis $\{|\mathbf{k}\rangle\}_{k\in \mathbb{N}^m}$, with 
\begin{equation}
	N\,|\mathbf{k}\rangle=\left(\sum_{i=1}^m k_i\right)\,|\mathbf{k}\rangle\,,\qquad \mathbf{k}\equiv (k_1,\dots, k_m)\,.
\end{equation}
We define the \textit{displacement operator} $\D(z)$ associated with a complex vector $z\in\CC^m$ as
\begin{equation}
	\D(z) = \exp \left[ \sum_j (z_j a^\dag_j - \overline{z}_j a_j) \right].
	\label{D}
\end{equation}
Thus, $\D(z)$ is a unitary operator and satisfies $\D(z)^\dag=\D(-z)$ and 
\begin{equation}
	\D(z) \D(w) = \D(z+w) \,e^{\frac12 (z^\intercal \overline{w} - z^\dag w)} ,
	\label{CCR Weyl}
\end{equation}
valid for all $z,w\in \CC^m$. We will also consider the real representation of the displacement operators: given $x=\in\mathbb{R}^{2m}$,
\begin{align*}
	V_x\coloneqq e^{iR^\intercal x}\equiv \D\left(\frac{-x_2+ix_1}{\sqrt{2}},\dots, \frac{-x_{2m}+ix_{2m-1}}{\sqrt{2}}\right)\,.
\end{align*}
A quantum state on $\cH_m$ is fully determined by its \textit{characteristic function} $\chi_\rho:\CC^{m}\to\CC$, given by
\begin{equation}
	\chi_\rho(z)\coloneqq \tr[\rho\,\D ( z)]\,.
	\label{chi}
\end{equation}
In what follows, we are going to use coherent states: given $\alpha=(\alpha_1,\dots,\alpha_m)\in\mathbb{C}^m$,
\begin{equation}
	|\alpha\rangle \equiv
	|\alpha_1,\dots,\alpha_m\rangle \coloneqq \D(\alpha)|0\rangle\,.
\end{equation}
The inner product between two single mode coherent states satisfies
\begin{equation*}
	\langle \beta|\alpha\rangle=e^{-\frac{|\alpha|^2+|\beta|^2}{2}+\alpha\overline{\beta}}\,.
\end{equation*}

In this article, we mainly consider a class of noises which includes the realistic description of the loss of a photon in each mode independently of the other, namely a tensor product of one-mode beamsplitters of transmissivity $\lambda\in (0,1)$ with arbitrary environment state $\sigma_E$:
\begin{align}\label{beamsplitter}
	\mathcal{N}_\lambda(\rho)=\tr_{E}\big[U_\lambda(\rho\otimes \sigma_E)U_\lambda^\dagger\big]\,,
\end{align}
where the beamsplitter unitary $U_\lambda\equiv \operatorname{exp}\big((a^\dagger b-b^\dagger a)\operatorname{arccos}(\sqrt{\lambda})\big)$ satisfies the relations
\begin{align}\label{eq:commutator}
	&    U_\lambda^\dagger a U_\lambda=\sqrt{\lambda}a+\sqrt{1-\lambda} b\,;\\
	&U_\lambda^\dagger b U_\lambda=-\sqrt{1-\lambda}a+\sqrt{\lambda} b\,.\nonumber
\end{align}
where $b$ stands for the annihilation operator associated to the environment mode. Note that we do not assume the environment state to be Gaussian. Although this is not required for our results to hold, we will fix the state $\sigma_E$ to be the same each time the noise $\cN_\lambda$ is applied, for sake of simplicity. For any subset $A\subseteq \Lambda$, we denote by $\cN_A\coloneqq\cN_\lambda^{\otimes |A|}$ the $|A|$-fold product of the channel $\cN_\lambda$ defined in \Cref{beamsplitter} acting on $A$.

\subsection{Gottesman-Kitaev-Preskill codes}\label{Gauss}

The Weyl commutation relations \eqref{CCR Weyl} imply the commutation of $\D(\alpha)$ and $\D(\beta)$ for any two complex numbers $\alpha,\beta$ up to a phase. In particular, if 
\begin{align}\label{alphabetacommute}
	\beta\overline{\alpha}-\overline{\beta}\alpha=i\pi\,,
\end{align}
the two operators anticommute, while if $\beta\overline{\alpha}-\alpha\overline{\beta}=2i\pi$, they commute. This leads to the choice of logical Pauli operators $\overline{X}=\D(\alpha)$ and $\overline{Z}=\D(\beta)$ where $\alpha$ and $\beta$ are any two complex numbers that satisfy \eqref{alphabetacommute}, which ensures that $\overline{X}\,\overline{Z}=-\overline{Z}\,\overline{X}$. We also define the logical Pauli operator $\overline{Y}\coloneqq \D(\alpha+\beta)=i\overline{X}\,\overline{Z}$. In order for the operators $\overline{X}$, $\overline{Y}$ and $\overline{Z}$ to behave like true Pauli operators, we then define the $\mathsf{GKP}$ logical codespace to be the simultaneous $+1$ eigenspace of the two operators 
\begin{align*}
	S_X\coloneqq \overline{X}^2=\D(2\alpha)\,,\qquad S_Z\coloneqq \overline{Z}^2=\D(2\beta)\,.
\end{align*}
This can be done since these operators commute. Therefore, the set $\{S_X^k,S_Z^l\}$ for $k,l\in\mathbb{Z}$ can be interpreted as the stabilizer group of the $\mathsf{GKP}$ code. The latter can be defined by choosing $\pm 1$ eigenstates of $\overline{Z}$. Unfortunately, the codewords are non-normalizable, which means that there is no physical process that can prepare a state lying exactly in the $\mathsf{GKP}$ codespace. In practice, one is required to resort to approximations of the latter. There exist various ways of defining approximate $\mathsf{GKP}$ states, and we refer to \cite{GrimsmoPuri2021} for a comprehensive review. For our purposes, we will not need an explicit expression for the codestates, and we simply refer to them as $|\widetilde{0}\rangle$ and $|\widetilde{1}\rangle$. The only characteristic of the state that we will need is that they have finite average energy with respect to the photon number operator $N\coloneqq a^\dagger a$, 
\begin{align*}
	\langle \widetilde{0}|N|\widetilde{0}\rangle\le E_0  ,\quad  \langle \widetilde{1}|N|\widetilde{1}\rangle\le E_0\,.
\end{align*}

Next, we discuss the ideal procedure for quantum error correction with the $\mathsf{GKP}$ code. To this end, we first define two generalized quadratures $\hat{Q}\coloneqq i(\overline{\beta}a-\beta a^\dagger)/\sqrt{\pi}$ and $\hat{P}\coloneqq -i(\overline{\alpha}a-\alpha a^\dagger)/\sqrt{\pi}$, such that $[\hat{Q},\hat{P}]=iI$. Here and for sake of simplicity, we stick to the so-called square code for which $\alpha=\sqrt{\pi/2}$ and $\beta=i\sqrt{\pi/2}$, and will often identify $Q=\hat{Q}$ and $P=\hat{P}$. Next, there are two canonical ways to perform $\mathsf{GKP}$ error correction using $\mathsf{GKP}$-encoded ancillae. These are continuous variables versions of Steane \cite{Steane97} and Knill \cite{Knill2005}. Here, we focus on Steane's method, see \Cref{fig:steane}. There, the entanglement gate is a $2$-mode CNOT gate, whose dual action on quadrature operators takes the form
\begin{align}
	(\mathcal{U}^{\operatorname{CNOT}}_{AA'})^{\dagger }&: Q_A\to Q_A,\,P_A\to P_A-P_{A'},\nonumber\\
	&\,\,\,\,\,Q_{A'}\to Q_A+Q_{A'},\,P_{A'}\to P_{A'}\,,\label{eq:CNOT}
\end{align}
where the control register is $A$.

\begin{figure}[h!]
	\centering
	\includegraphics[width=0.48\textwidth]{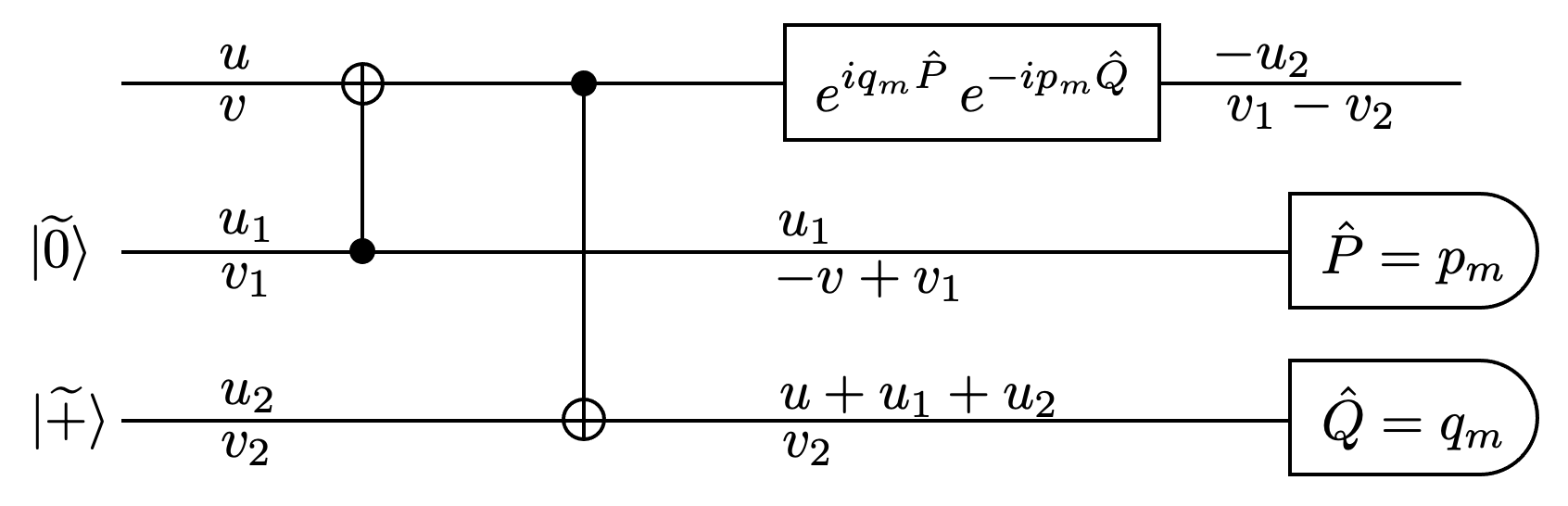}
	\caption{Illustration of Steane's error correction circuit. The labels $\{u,v\}$ next to a rail indicate a general displacement error $e^{-iu\hat{P}}e^{iv\hat{Q}}$, and the diagram indicates how the incoming errors propagate through the circuit. The two measurements are of the $\hat{P}$ and $\hat{Q}$ quadratures of the $\mathsf{GKP}$ code, respectively. For the correction shifts, we use the measurement outcomes modulo lattice spacing $\sqrt{\pi}$. Here, the approximate $\mathsf{GKP}$ state $|\widetilde{+}\rangle$ corresponds to the logical $|+\rangle$ state.}\label{fig:steane}
\end{figure}

In this paper, we also consider the concatenation of the $\mathsf{GKP}$ code with qubit stabilizer codes such as the toric code. For this, we need destructive logical measurements in any Pauli basis: 
\begin{align}\label{GKPlogicalpauli}
	\overline{X}=e^{-i\sqrt{\pi}\hat{P}},\qquad \overline{Z}=e^{i\sqrt{\pi}\hat{Q}},\qquad \overline{Y}=e^{i\sqrt{\pi}(\hat{Q}-\hat{P})}\,.
\end{align}
Then, Pauli measurements are performed by measuring one of three respective quadratures
\begin{align*}
	&    \mathcal{M}_X:\quad \text{ measure } -\hat{P}\\
	&    \mathcal{M}_Y:\quad \text{ measure } \hat{Q}-\hat{P}\\
	&  \mathcal{M}_Z:\quad \text{ measure } \hat{Q}\,,
\end{align*}
and rounding to the nearest multiple of $\sqrt{\pi}$. If the result is an even multiple, report a $+1$ outcome, and otherwise report a $-1$ outcome. The error correction schemes described so far include homodyne-type measurements of the quadratures $\hat{Q}$ and $\hat{P}$. However, the latter correspond to the infinite-squeezing limit of more realistic Gaussian POVM measurements as introduced in \cite{holevo2021structure}. For simplicity, we consider the unsharp (noisy) position and momentum measurements with variances $\alpha_q ,\alpha_p>0$ and $q,p\in\mathbb{R}$
\begin{align}\label{homodyneapprox}
	&m_{\hat{Q}}(q)=\frac{1}{2\pi\sqrt{\alpha}_q}V_{(0,-q)^t}\,e^{-\frac{1}{2\alpha_q}\hat{Q}^2}\,V_{(0,-q)^t}^\dagger=\frac{1}{2\pi\sqrt{\alpha_q}}e^{-\frac{1}{2\alpha_q}(\hat{Q}-q)^2}\,, \\
	& m_{\hat{P}}(p)=\frac{1}{2\pi\sqrt{\alpha}_p}V_{(p,0)^t}\,e^{-\frac{1}{2\alpha_p}\hat{P}^2}\,V_{(p,0)^t}^\dagger= \frac{1}{2\pi\sqrt{\alpha_p}}e^{-\frac{1}{2\alpha_p}(\hat{P}-p)^2}\,,\label{homodyneapproxbis}
\end{align}
and denote $\alpha_{\min}\coloneqq \min\{\alpha_p,\alpha_q\}$. The above operator valued densities are associated to POVMs $A\mapsto M(A)=\int_A m(x)\,dx$ for $m\in\{m_{\hat{P}},m_{\hat{Q}}\}$.

For our qubit stabilizer code, we consider a general situation where each of the $m$ modes corresponds to a vertex of a given graph $G=(V,E)$. Then, we examine a simple scheme of nondestructive measurement involving the introduction of one bosonic ancilla system per stabilizer, logical CNOT gates entangling modes associated with a stabilizer together with an ancilla, and logical Pauli measurements on the latter (see e.g.~\Cref{fig:meas} for the case of the 2D Toric code). This scheme was first considered in \cite{Dennis2002} for the qubit setting. Then, upon the measurement of all the stabilizers, the corresponding outcomes $s=\{s_k\}_{k=1}^K$ form a syndrome which is then used to construct an error correction unitary in the form of a displacement operator $D(f(s))$ for some classical function $f$ of the syndromes. We denote by $\Phi^{\operatorname{Stab}}_f$ the overall quantum error correction channel acting on the $m$ modes.

\begin{figure}[h!]
	\centering
	\includegraphics[width=0.48\textwidth]{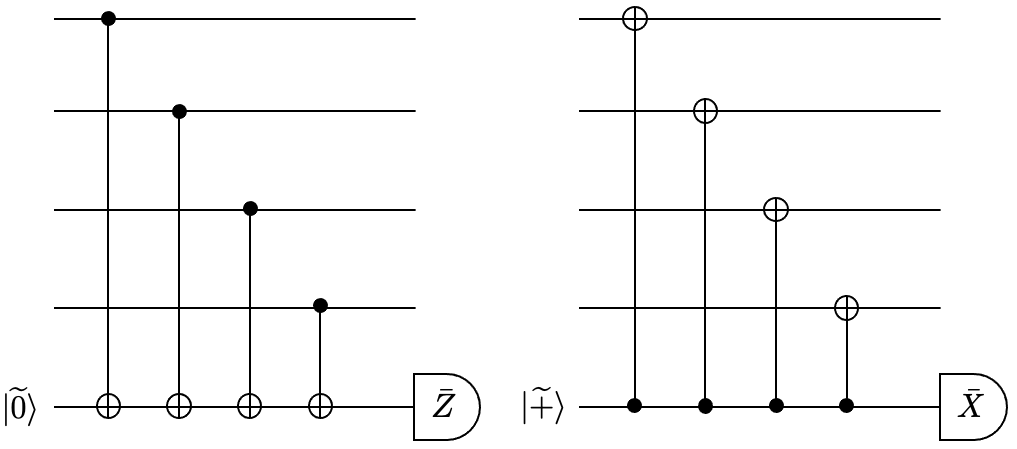}
	\caption{Ciruits for measurement of plaquette $\bar{Z}^{\otimes 4}$ and star $\bar{X}^{\otimes 4}$ stabilizers for the concatenated 2D Toric code.}\label{fig:meas}
\end{figure}

\section{Bosonic Wasserstein distance}\label{sec:proofnoisycircuit}

Our main technical contribution is the introduction and use of a non-commutative extension of the 
classical Lipschitz-type constant $\|\nabla f\|\coloneqq \max_{j\in[m]}\|\partial_j f\|_\infty$ of a real, continuously differentiable function $f$ of $m$ variables, and the dual notion of a Wasserstein-type distance between two distributions over $\mathbb{R}^m$.

\begin{definition}
	The bosonic Lipschitz constant of an operator $X\in \cB(\cH_m)$ is defined as
	\begin{equation}
		\|\nabla X\| ^2 \coloneqq\,\max_{j\in \Lambda}\,\max_{R_j\in\{Q_j,P_j\}}\,\sup_{R}\sup_{|\psi\rangle,|\varphi\rangle}\,|\langle \psi|[R_j,X]\otimes I_R|\varphi\rangle |^2 \equiv \max_j\,\|\nabla_j X\| \,,\nonumber
	\end{equation}
	where the supremum is over all pure states $|\psi\rangle,|\varphi\rangle\in\operatorname{dom}(\sqrt{N}\otimes I_R)$ for an arbitrarily large reference system $R$. By duality, we then define the \textit{bosonic Wasserstein distance} between two $m$-mode quantum states $\rho,\sigma\in \mathcal{D}(L_2(\mathbb{R}^m))$ as
	\begin{equation}\label{dualWasserstein1}
		W_{\operatorname{B}}(\rho,\sigma)\coloneqq\sup_{\|\nabla X\|\le 1}\,\Big|\tr\big[X(\rho-\sigma)\big]\Big|\,.
	\end{equation}
	where the supremum is over all bounded, self-adjoint operators $X$.
\end{definition}

In the next sections, we provide some basic properties of the bosonic Wasserstein distance $W_{\operatorname{B}}$.

\subsection{Relation to the trace distance}

First, we seek for an upper bound on the trace distance in terms of $W_{\operatorname{B}}$. By duality of both metrics, this amounts to finding an upper bound on the Lipschitz constant $\|\nabla X\|$ of any bounded operator $X$ in terms of its operator norm $\|X\|_\infty$. However, a bound of that sort does not exist (as classically, one can easily think of bounded observables which are not \textit{smooth}). In the classical setting, the problem can be handled by first \textit{smoothing} the function $f$, e.g. by convolving it with a Gaussian density $g$. In that case, one proves that there exists a finite constant $C>0$ such that $\|\nabla (f\ast g)\|\le C\|f\|_\infty$. Our first technical result consists in a bosonic analogue of this bound (see also \cite[Proposition 6.4]{gao2021ricci} for a single-mode variant).

\begin{prop}\label{regularity}
	For any two states $\rho_1,\rho_2\in\cD(L_2(\mathbb{R}^m))$ and $\lambda\in (0,1)$, we have 
	\begin{align}
		\|\cN^{\otimes m}_\lambda(\rho_1-\rho_2)\|_1\le \sqrt{\frac{\lambda}{1-\lambda}}\,\max\left\{\big\|[Q,\sigma_{E}]\big\|_1,\big\|[P,\sigma_{E}]\big\|_1\right\}\, W_{\operatorname{B}}(\rho_1,\rho_2)\,.
	\end{align}
	
\end{prop}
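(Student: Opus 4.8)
The plan is to combine duality with a single commutator identity for the beamsplitter channel. First I would use that $\rho_1-\rho_2$ is self-adjoint and trace class and that $\cN_\lambda^{\otimes m}$ preserves both properties, so that $\|\cN_\lambda^{\otimes m}(\rho_1-\rho_2)\|_1=\sup\,|\tr[X\,\cN_\lambda^{\otimes m}(\rho_1-\rho_2)]|$, the supremum over self-adjoint $X\in\cB(L_2(\mathbb{R}^m))$ with $\|X\|_\infty\le 1$. Passing to the Heisenberg picture, set $\widetilde X\coloneqq(\cN_\lambda^{\otimes m})^{*}(X)$; this is again self-adjoint and, since $\cN_\lambda^{*}$ is unital and completely positive, $\|\widetilde X\|_\infty\le 1$. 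As $\tr[X\,\cN_\lambda^{\otimes m}(\rho_1-\rho_2)]=\tr[\widetilde X(\rho_1-\rho_2)]$, the definition of $W_{\operatorname{B}}$ gives $|\tr[\widetilde X(\rho_1-\rho_2)]|\le\|\nabla\widetilde X\|\,W_{\operatorname{B}}(\rho_1,\rho_2)$ by homogeneity, so the whole statement reduces to the bound $\|\nabla\widetilde X\|\le\sqrt{\lambda/(1-\lambda)}\,\max\{\|[Q,\sigma_E]\|_1,\|[P,\sigma_E]\|_1\}$ valid for every such $X$.

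To estimate $\|\nabla\widetilde X\|=\max_{j}\max_{R_j\in\{Q_j,P_j\}}\|[R_j,\widetilde X]\|_\infty$ I would first localize. Writing $(\cN_\lambda^{\otimes m})^{*}=\Phi_1\circ\cdots\circ\Phi_m$ with $\Phi_i$ applying $\cN_\lambda^{*}$ to mode $i$ and the identity elsewhere (these commute), one has $\widetilde X=\Psi_j(\Phi_j(X))$ with $\Psi_j\coloneqq\bigcirc_{i\neq j}\Phi_i$ acting trivially on mode $j$. Since $\Psi_j$ commutes with left and right multiplication by operators supported on mode $j$, $[R_j,\widetilde X]=\Psi_j([R_j,\Phi_j(X)])$, and because $\Psi_j$ is unital completely positive it is an $\|\cdot\|_\infty$-contraction, giving $\|[R_j,\widetilde X]\|_\infty\le\|[R_j,\Phi_j(X)]\|_\infty$. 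Everything is thus reduced to a one-mode bound on $\|[R_j,\cN_\lambda^{*}(X)]\|_\infty$, the channel acting on mode $j$ only and the remaining modes being inert bystanders.

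The heart of the argument is a commutator identity obtained from the dilation $\cN_\lambda^{*}(X)=\tr_{E}[(I\otimes\sigma_E)\,U_\lambda^\dagger(X\otimes I_E)U_\lambda]$ together with the Heisenberg action of the beamsplitter, $U_\lambda^\dagger R_S U_\lambda=\sqrt\lambda\,R_S+\sqrt{1-\lambda}\,R_E$ and $U_\lambda R_S U_\lambda^\dagger=\sqrt\lambda\,R_S-\sqrt{1-\lambda}\,R_E$ (for $R_S\in\{Q_S,P_S\}$ and the matching environment quadrature $R_E$). Computed weakly on finite-photon-number vectors (which is all the definition of $\|\nabla\cdot\|$ requires), pulling the commutator $[R_S\otimes I_E,\,U_\lambda^\dagger(X\otimes I_E)U_\lambda]$ outside the conjugation by $U_\lambda$ turns $R_S$ into $\sqrt\lambda R_S-\sqrt{1-\lambda}R_E$, and the $R_E$-part disappears because $R_E$ commutes with $X\otimes I_E$; this yields $[R_S,\cN_\lambda^{*}(X)]=\sqrt\lambda\,\cN_\lambda^{*}([R_S,X])$. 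Expanding $\cN_\lambda^{*}([R_S,X])$ instead by moving $R_S$ to the other side of $U_\lambda$ replaces it by $\sqrt\lambda R_S+\sqrt{1-\lambda}R_E$ and produces a second term $\sqrt{1-\lambda}\,\tr_E[(I\otimes\sigma_E)[R_E,\,U_\lambda^\dagger(X\otimes I_E)U_\lambda]]$, which by cyclicity of the partial trace over the $E$-factor equals $-\sqrt{1-\lambda}\,\tr_E[(I\otimes[R_E,\sigma_E])\,U_\lambda^\dagger(X\otimes I_E)U_\lambda]$. Eliminating $\cN_\lambda^{*}([R_S,X])$ between the two relations gives the clean identity
\[
[R_S,\cN_\lambda^{*}(X)]=-\sqrt{\tfrac{\lambda}{1-\lambda}}\;\tr_E\big[(I_S\otimes[R_E,\sigma_E])\,U_\lambda^\dagger(X\otimes I_E)U_\lambda\big].
\]

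Finally I would bound the right-hand side using $\|\tr_E[(I_S\otimes A)Z]\|_\infty\le\|A\|_1\|Z\|_\infty$ for trace-class $A$ on $E$ and bounded $Z$ on $SE$ (test matrix elements of the left side against unit vectors on $S$ and apply H\"older). With $A=[R_E,\sigma_E]$ and $Z=U_\lambda^\dagger(X\otimes I_E)U_\lambda$, so $\|Z\|_\infty=\|X\|_\infty\le 1$, this gives $\|[R_S,\cN_\lambda^{*}(X)]\|_\infty\le\sqrt{\lambda/(1-\lambda)}\,\|[R_E,\sigma_E]\|_1$; taking the maximum over $R\in\{Q,P\}$ and chaining back through the localization and duality steps yields the claim. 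I expect the main obstacle to be the operator-domain bookkeeping: the commutator identity must be justified as an identity of quadratic forms on finite-photon-number vectors, and the partial-trace cyclicity manipulations involve the unbounded $R_E$ and the operator $[R_E,\sigma_E]$ (trace class precisely when $\sigma_E$ has finite energy, which is when the stated constant is finite and the bound non-vacuous); the algebra itself is short.
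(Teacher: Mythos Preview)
Your proposal is correct and follows essentially the same route as the paper: duality reduces the claim to bounding $\|\nabla(\cN_\lambda^{\otimes m})^*(X)\|$ for $\|X\|_\infty\le 1$, localization to a single mode, the commutator identity $[R_S,\cN_\lambda^*(X)]=-\sqrt{\lambda/(1-\lambda)}\,\tr_E[(I\otimes[R_E,\sigma_E])\,U_\lambda^\dagger(X\otimes I_E)U_\lambda]$, and H\"older. The paper derives the same identity via the relation $U_\lambda^\dagger(X\otimes I)U_\lambda=U_{1-\lambda}(I\otimes X)U_{1-\lambda}^\dagger$ (a $\mathsf{SWAP}$ trick) rather than your direct elimination, and it handles the domain issues you flag by working throughout with matrix elements on finite-photon-number vectors and invoking a tracial property for the partial-trace cyclicity step.
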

In order to prove \Cref{regularity}, we need the following technical lemma.

\begin{lemma}\label{techlemma}
	Given a two-mode bosonic system $AE$ with associated annihilation operators $a,b$, any reference system $C$, any $X\in\cB(L^2(\mathbb{R}))$ and any $|\varphi\rangle,|\psi\rangle\in\operatorname{dom}(\sqrt{N}\otimes I_C)$,
	\begin{align}
		&\langle \varphi|[a,\cN^\dagger_\lambda(X)]|\psi\rangle=\sqrt{\lambda}\,\langle\varphi| \cN^\dagger_\lambda ([a,X])|\psi\rangle= -\sqrt{\frac{\lambda}{1-\lambda}} \, \langle \varphi|\tr_E
		\left((I\otimes [b,\sigma]) U_{1-\lambda}(I\otimes X_{E})U_{1-\lambda}^\dagger\right)|\psi\rangle \label{eq:commutationrelationscnX}\\
		&\langle \varphi|[a^\dagger,\cN^\dagger_\lambda(X)]|\psi\rangle=\sqrt{\lambda}\,\langle\varphi| \cN^\dagger_\lambda ([a^\dagger,X])|\psi\rangle= -\sqrt{\frac{\lambda}{1-\lambda}} \, \langle \varphi|\tr_E
		\left((I\otimes [b^\dagger,\sigma]) U_{1-\lambda}(I\otimes X_{E})U_{1-\lambda}^\dagger\right)|\psi\rangle\label{eq:commutationrelationscnX1}
	\end{align}
	where $\cN_\lambda^\dagger$ is assumed to act on the system $A$ above, and we omitted the operator $I_C$ for ease of notations. 
\end{lemma}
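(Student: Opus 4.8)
The plan is to argue in the Heisenberg picture, writing the adjoint channel as $\cN_\lambda^\dagger(X)=\tr_E\!\big[(I_A\otimes\sigma)\,U_\lambda^\dagger(X\otimes I_E)U_\lambda\big]$, and to use the linear beamsplitter relations \eqref{eq:commutator} to convert a commutator with the mode-$A$ operator $a$ into one with the mode-$E$ operator $b$, which is then transferred onto $\sigma$ by cyclicity of the partial trace over $E$. Since $a,a^\dagger$ are unbounded, all commutators are read as sesquilinear forms on $\dom(\sqrt N\otimes I_C)$; the cleanest way to make every step below a bona fide operator identity is to first prove the claim for $X=D(z)$ a displacement operator — for which $[a,D(z)]=zD(z)$ is bounded and $U_\lambda^\dagger D(z)_A U_\lambda=D_A(\sqrt\lambda\,z)\,D_E(\sqrt{1-\lambda}\,z)$ is a genuine unitary conjugation — and then extend to arbitrary bounded $X$ by weak-$\ast$ density, the matrix elements on both sides being pairings of $X$ with a fixed trace-class operator (using that $[b,\sigma]$, equivalently $[Q_E,\sigma]$ and $[P_E,\sigma]$, is trace class).

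For the left equality, note that $a$ acts only on $A$ and so commutes through the partial expectation $Z\mapsto\tr_E[(I_A\otimes\sigma)Z]$ over $E$; hence $[a,\cN_\lambda^\dagger(X)]=\tr_E\!\big[(I_A\otimes\sigma)\,[a\otimes I_E,\,U_\lambda^\dagger(X\otimes I_E)U_\lambda]\big]$. Conjugating the inner commutator out of $U_\lambda$ and using $U_\lambda(a\otimes I_E)U_\lambda^\dagger=\sqrt\lambda\,(a\otimes I_E)-\sqrt{1-\lambda}\,(I_A\otimes b)$ (a consequence of \eqref{eq:commutator}) together with $[I_A\otimes b,\,X\otimes I_E]=0$, that commutator collapses to $\sqrt\lambda\,U_\lambda^\dagger([a,X]\otimes I_E)U_\lambda$, which, reinserted, gives $[a,\cN_\lambda^\dagger(X)]=\sqrt\lambda\,\cN_\lambda^\dagger([a,X])$.

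The crux is the remaining equality, which I would obtain by computing $\cN_\lambda^\dagger([a,X])$ a second time, now conjugating the inner commutator $[a\otimes I_E,X\otimes I_E]$ out of $U_\lambda$ via the other form $U_\lambda^\dagger(a\otimes I_E)U_\lambda=\sqrt\lambda\,(a\otimes I_E)+\sqrt{1-\lambda}\,(I_A\otimes b)$ from \eqref{eq:commutator}. This splits $\cN_\lambda^\dagger([a,X])$ into an ``$a$''-term, which equals $\sqrt\lambda\,[a,\cN_\lambda^\dagger(X)]$ by the same argument as above, and a ``$b$''-term, for which cyclicity over $E$ yields $\tr_E[(I\otimes\sigma)(I\otimes b)Z]-\tr_E[(I\otimes\sigma)Z(I\otimes b)]=-\tr_E[(I\otimes[b,\sigma])Z]$ with $Z=U_\lambda^\dagger(X\otimes I_E)U_\lambda$. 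Thus $\cN_\lambda^\dagger([a,X])=\sqrt\lambda\,[a,\cN_\lambda^\dagger(X)]-\sqrt{1-\lambda}\,\tr_E[(I\otimes[b,\sigma])\,U_\lambda^\dagger(X\otimes I_E)U_\lambda]$; inserting the left equality $[a,\cN_\lambda^\dagger(X)]=\sqrt\lambda\,\cN_\lambda^\dagger([a,X])$ turns this into the scalar relation $(1-\lambda)\,\cN_\lambda^\dagger([a,X])=-\sqrt{1-\lambda}\,\tr_E[(I\otimes[b,\sigma])\,U_\lambda^\dagger(X\otimes I_E)U_\lambda]$, which I solve for $\cN_\lambda^\dagger([a,X])$ and then multiply by $\sqrt\lambda$.

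Finally, to put the answer in the stated form I would identify $U_\lambda^\dagger(X\otimes I_E)U_\lambda=U_{1-\lambda}(I_A\otimes X_E)U_{1-\lambda}^\dagger$: since $\arccos\sqrt\lambda+\arccos\sqrt{1-\lambda}=\pi/2$, the product $U_\lambda U_{1-\lambda}$ is the balanced beamsplitter $S\coloneqq\exp\!\big((a^\dagger b-b^\dagger a)\pi/2\big)$, which satisfies $S^\dagger(a\otimes I_E)S=I_A\otimes b$ and hence $S^\dagger(X\otimes I_E)S=I_A\otimes X_E$ (checked first on displacements, then extended to all bounded $X$); rearranging gives the identity. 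The computation for $[a^\dagger,\cN_\lambda^\dagger(X)]$ is word-for-word the same with $(a,b)$ replaced by $(a^\dagger,b^\dagger)$ and \eqref{eq:commutator} used in Hermitian-conjugated form. I expect the only genuine difficulty to be bookkeeping: tracking the partial trace/expectation over $E$ through the two successive conjugations by $U_\lambda$, and keeping the unbounded-operator manipulations legitimate on $\dom(\sqrt N\otimes I_C)$ — which is precisely why routing through displacement operators first is worthwhile.
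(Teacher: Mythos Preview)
Your proposal is correct and uses the same ingredients as the paper: the Heisenberg-picture formula $\cN_\lambda^\dagger(X)=\tr_E[(I\otimes\sigma)U_\lambda^\dagger(X\otimes I)U_\lambda]$, the beamsplitter relations~\eqref{eq:commutator}, cyclicity of the partial trace over~$E$, and the swap identity $U_\lambda^\dagger(X\otimes I)U_\lambda=U_{1-\lambda}(I\otimes X)U_{1-\lambda}^\dagger$ (which the paper phrases as $U_\lambda^\dagger=U_{1-\lambda}\mathsf{SWAP}(I\otimes Z)$).

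The only substantive difference is the route to the self-referential equation for the second identity. The paper first passes to the $U_{1-\lambda}(I\otimes X)U_{1-\lambda}^\dagger$ form and then expands $[a\otimes I,\,U_{1-\lambda}(I\otimes X)U_{1-\lambda}^\dagger]$ via the Leibniz rule $[A,UYU^\dagger]=[A,U]YU^\dagger+UY[A,U^\dagger]$, arriving at an equation of the form $[a,\cN_\lambda^\dagger(X)]=(1-\sqrt{1-\lambda})[a,\cN_\lambda^\dagger(X)]-\sqrt{\lambda}\,\tr_E[\cdots]$. You instead compute $\cN_\lambda^\dagger([a,X])$ by conjugating $a\otimes I$ through $U_\lambda$ directly, obtaining $\cN_\lambda^\dagger([a,X])=\sqrt\lambda\,[a,\cN_\lambda^\dagger(X)]-\sqrt{1-\lambda}\,\tr_E[\cdots]$, and then feed the already-proved first identity back in. Your version is marginally cleaner since it avoids the product-rule bookkeeping and reuses the first equality, and your suggestion to route the unbounded-operator steps through displacement operators and weak-$\ast$ density is a sound way to justify the manipulations; the paper instead keeps everything as sesquilinear forms on $\dom(\sqrt N\otimes I_C)$ and invokes a tracial identity from~\cite{brown1990jensen} for the cyclicity step.
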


\begin{proof}
	We only prove the first line \eqref{eq:commutationrelationscnX}, since \eqref{eq:commutationrelationscnX1} follows analogously with $a$ replaced by $a^\dagger$ and $b$ by $b^\dagger$. The adjoint map~$\cN^\dagger_\lambda$ is given by
	\begin{align}
		\cN^\dagger_\lambda(X)&=\tr_E \left((I\otimes \sigma) U_\lambda^\dagger (X\otimes I) U_\lambda\right)\  ,\label{eq:singlemodeintertwining}
	\end{align}
	and the first identity follows from this expression and Eq.~\eqref{eq:commutator} by a straightforward computation (see~\cite[Proof of Proposition 6.2]{gao2021ricci}). 
	The proof of~the second identity uses the identity     $U_\lambda^\dagger=U_{1-\lambda}e^{-\frac{\pi}{2}(a^\dagger b-b^\dagger a)}=
	U_{1-\lambda}\mathsf{SWAP}(I\otimes Z)$ where $\mathsf{SWAP}$ interchanges the two modes and $Z$ is a Gaussian unitary acting as~$ZbZ^\dagger=-b$, $Zb^\dagger Z^\dagger=-b^\dagger$ on the second system.    
	This implies that $U^\dagger_\lambda (X\otimes I)U_\lambda=U_{1-\lambda}(I\otimes X)U_{1-\lambda}^\dagger$ and thus 
	\begin{align}
		[a, \,\cN_\lambda^{\dagger}(X)]  &=\big[a,\tr_{E}\left((I\otimes \sigma_{E})U_\lambda^\dagger(X\otimes I_{E})U_\lambda\right)\big]\nonumber\\
		&=\left[a,\tr_{E}\left((I\otimes \sigma_{E})U_{1-\lambda}(I\otimes X)U_{1-\lambda}^\dagger\right)\right]\nonumber\\
		&=\tr_E\left( (I\otimes \sigma) \left[a\otimes I,
		U_{1-\lambda}(I\otimes X)U_{1-\lambda}^\dagger    \right]
		\right)\ ,\label{eq:intermediateanlambdax1}
	\end{align}
	where these identities hold when evaluated on states $|\varphi\rangle,|\psi\rangle\in\operatorname{dom}(\sqrt{N}\otimes I_C)$. Observe that
	\begin{align*}
		\left[a\otimes I,
		U_{1-\lambda}(I\otimes X)U_{1-\lambda}^\dagger    \right]
		&= \left[a\otimes I,U_{1-\lambda}\right](I\otimes X)U^\dagger_{1-\lambda}
		+U_{1-\lambda}(I\otimes X)\left[a\otimes I,U^\dagger_{1-\lambda}\right]
		+ U_{1-\lambda}[a\otimes I,I\otimes X]U_{1-\lambda}^\dagger
	\end{align*}
	where the third term on the right vanishes.
	Using the commutation relations~\eqref{eq:commutator} then gives
	\begin{align}
		\left[a\otimes I,
		U_{1-\lambda}(I\otimes X)U_{1-\lambda}^\dagger    \right]
		&= (1-\sqrt{1-\lambda}) \left[(a\otimes I), U_{1-\lambda}(I\otimes X)U_{1-\lambda}^\dagger\right]
		+\sqrt{\lambda}\left[ (I\otimes b), U_{1-\lambda}(I\otimes X) U_{1-\lambda}^\dagger\right]\nonumber\\
		&=(1-\sqrt{1-\lambda})
		\left[a\otimes I,    U^\dagger_\lambda (X\otimes I)U_\lambda\right]+\sqrt{\lambda}
		\left[ (I\otimes b), U_{1-\lambda}(I\otimes X) U_{1-\lambda}^\dagger\right]\ .\label{eq:aotimesidulambda1}
	\end{align}
	Inserting Eq.~\eqref{eq:aotimesidulambda1} into Eq.~\eqref{eq:intermediateanlambdax1}
	gives
	\begin{align*}
		[a,\,\cN_\lambda^{\dagger}(X)]&=(1-\sqrt{1-\lambda})[a,\cN_\lambda^\dagger(X)]+\sqrt{\lambda}\tr_E\left( (I\otimes \sigma)[I\otimes b,U_{1-\lambda}(I\otimes X) U_{1-\lambda}^\dagger] \right)\\
		&=(1-\sqrt{1-\lambda})[a,\cN_\lambda^\dagger(X)]-\sqrt{\lambda}\tr_E\left( (I\otimes [b,\sigma])U_{1-\lambda}(I\otimes X) U_{1-\lambda}^\dagger] \right)\ 
	\end{align*}
	as claimed. Here we used the tracial property in~\cite[Theorem 17]{brown1990jensen}. 
	The second identity  in Eq.~\eqref{eq:commutationrelationscnX} follows from this. 
\end{proof}

\begin{proof}[Proof of \Cref{regularity}]

	Eq.~\eqref{eq:commutationrelationscnX} immediately extends to several modes, giving for every $j\in \Lambda$ the identities
	\begin{align}
		\begin{matrix}
			[a_j,\cN^{\dagger\otimes m}_\lambda(X)]&=&\sqrt{\lambda} \cN^{\dagger\otimes m}_\lambda ([a_j,X])\\
			[a_j^\dagger,\cN^{\dagger\otimes m}_\lambda(X)]&=&\sqrt{\lambda} \cN^{\dagger\otimes m}_\lambda ([a_j^\dagger,X])\ 
		\end{matrix}
		\label{eq:multimodeintertwining}
	\end{align}
	because e.g., $[a_j,\cN^{\otimes m}(X)]=\bigotimes_{k\neq j}\cN_k\left([a_j, \cN_j(X)]\right)=
	\sqrt{\lambda}\left(\bigotimes_{k\neq j}\cN_k\right)\circ\cN_j\left([a_j,X]\right)=\sqrt{\lambda}\cN^{\otimes m}([a_j,X])$ where $\cN_k$ denotes the channel~$\cN=\cN_\lambda$ acting on the~$j$-th factor. We can similarly show that, for every $j\in \Lambda$,  
	\begin{align}
		\begin{matrix}
			[a_j,\cN_\lambda^{\dagger \otimes m}(X)]=-\sqrt{\frac{\lambda}{1-\lambda}}\,
			\cN^\dagger_{\Lambda\backslash \{j\}}\left(
			\tr_{E_j}\Big[(I\otimes [b_j,\sigma_{E_j}])U_{1-\lambda}(I\otimes X)U_{1-\lambda}^\dagger\Big]\right)\,,\\
			[a^\dagger_j ,\cN_\lambda^{\dagger \otimes m}(X)]=-\sqrt{\frac{\lambda}{1-\lambda}}\,
			\cN^\dagger_{\Lambda\backslash \{j\}}\left(
			\tr_{E_j}\Big[(I\otimes [b^\dagger_j,\sigma_{E_j}])U_{1-\lambda}(I\otimes X)U_{1-\lambda}^\dagger\Big]\right)\,.
		\end{matrix}\label{eq:toprovemultimodemodex}
	\end{align}
	where $b_j$ denotes the annihilation operator on the environment system~$E_j$. 
	Indeed, we have
	$ [a_j,\cN_\lambda^{\dagger \otimes m}(X)]=\cN^\dagger_{\Lambda\backslash \{j\}}\left(
	[a_j,\cN_j(X)]
	\right)$
	and similarly for~$a_j^\dagger$, hence Eq.~\eqref{eq:toprovemultimodemodex} is an immediate adaptation of the proof of~\eqref{techlemma}. Hence, for any two pure states $|\psi\rangle,\,|\varphi\rangle\in\operatorname{dom}(\sqrt{N}\otimes I_R)$, 
	\begin{align*}
		|\langle \psi|[Q_j,\cN_\lambda^{\dagger \otimes m}(X)]\otimes I_R|\varphi\rangle|&\le\sqrt{\frac{\lambda}{1-\lambda}}\,\sup_{\|y\|_1\le 1}\tr \big[(y\otimes [Q_j',\sigma_{E_j}])\,U_{1-\lambda}(I\otimes X)U_{1-\lambda}^\dagger\big]\\
		&\le \sqrt{\frac{\lambda}{1-\lambda}}\,\big\|[Q_j',\sigma_{E_j}]\big\|_1\,\|X\|_\infty\,,
	\end{align*}
	and similarly for $ |\langle \psi|\big[P_j,\cN_\lambda^{\dagger \otimes m}(X)\big]\otimes I_R|\varphi\rangle |$. The result follows by duality.
\end{proof}
\Cref{regularity} can be used to prove that the bosonic Wasserstein distance is a metric on quantum states:
\begin{cor}
	$W_{\operatorname{B}}:\cD(L_2(\mathbb{R}^m))\times \cD(L_2(\mathbb{R}^m))\to \mathbb{R}_+$ is a metric on the set of quantum states over $L_2(\mathbb{R}^m)$.
\end{cor}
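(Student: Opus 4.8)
The plan is to verify the three metric axioms, noting that all are elementary except the separation of points, which is precisely where \Cref{regularity} is used. Non-negativity of $W_{\operatorname{B}}(\rho,\sigma)$ is immediate, since the supremum in \eqref{dualWasserstein1} includes $X=0$. Symmetry holds because $|\tr[X(\rho-\sigma)]|=|\tr[X(\sigma-\rho)]|$ for every $X$, so the two defining suprema coincide. For the triangle inequality, fix states $\rho,\sigma,\tau$ and any self-adjoint $X$ with $\|\nabla X\|\le1$; then $|\tr[X(\rho-\tau)]|\le|\tr[X(\rho-\sigma)]|+|\tr[X(\sigma-\tau)]|\le W_{\operatorname{B}}(\rho,\sigma)+W_{\operatorname{B}}(\sigma,\tau)$, and taking the supremum over such $X$ gives $W_{\operatorname{B}}(\rho,\tau)\le W_{\operatorname{B}}(\rho,\sigma)+W_{\operatorname{B}}(\sigma,\tau)$. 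Finiteness of $W_{\operatorname{B}}$ follows from the diameter bound \eqref{eq:diameterbound} once one restricts to states of bounded average photon number; on the full state space it is to be read as an extended metric. Finally, $\rho=\sigma$ trivially forces $W_{\operatorname{B}}(\rho,\sigma)=0$.

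The substantive step is the converse, $W_{\operatorname{B}}(\rho,\sigma)=0\Rightarrow\rho=\sigma$. Here I would argue: assume $W_{\operatorname{B}}(\rho,\sigma)=0$. Since $W_{\operatorname{B}}$ does not depend on the environment state, apply \Cref{regularity} with $\sigma_E=|0\rangle\langle0|$, for which $[Q,\sigma_E]$ and $[P,\sigma_E]$ are finite-rank, hence trace class, and not both zero, so that the prefactor on the right-hand side is finite and strictly positive for every $\lambda\in(0,1)$. This yields $\cN_\lambda^{\otimes m}(\rho)=\cN_\lambda^{\otimes m}(\sigma)$ for all $\lambda\in(0,1)$, where $\cN_\lambda$ is now the pure loss channel. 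It then remains to let $\lambda\to1^-$: since $\arccos\sqrt{\lambda}\to0$ and the beamsplitter unitaries $U_\lambda=\exp((a^\dagger b-b^\dagger a)\arccos\sqrt{\lambda})$ form a strongly continuous one-parameter group, $U_\lambda\to I$ strongly; writing the trace-class operator $\rho\otimes\sigma_E$ in its singular value decomposition and invoking dominated convergence shows $U_\lambda(\rho\otimes\sigma_E)U_\lambda^\dagger\to\rho\otimes\sigma_E$ in trace norm, and as the partial trace is trace-norm contractive, $\cN_\lambda(\rho)\to\rho$ in trace norm; tensorizing over the $m$ modes gives $\cN_\lambda^{\otimes m}(\rho)\to\rho$ and $\cN_\lambda^{\otimes m}(\sigma)\to\sigma$, whence $\rho=\sigma$.

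The only delicate point is this $\lambda\to1$ limit, which has to be extracted from \emph{strong} (not norm) convergence of $U_\lambda$ via the singular value decomposition argument; equivalently, one could use that the pure loss channel is injective for $\lambda\in(0,1)$, since $\chi_{\cN_\lambda(\rho)}(z)=\chi_\rho(\sqrt{\lambda}z)\,e^{-(1-\lambda)|z|^2/2}$ determines $\chi_\rho$. I should also note that \Cref{regularity} can in fact be bypassed: the relations $\D(z)^\dagger R_j\D(z)=R_j+c_j(z)\,I$ with $R_j\in\{Q_j,P_j\}$ and real scalars $c_j(z)$ give $[R_j,\D(z)]=c_j(z)\,\D(z)$, which is bounded, so $\|\nabla\D(z)\|<\infty$; feeding the self-adjoint operators $\D(z)+\D(-z)$ and $i(\D(z)-\D(-z))$ (also of finite bosonic Lipschitz constant) into the definition of $W_{\operatorname{B}}$ shows that $W_{\operatorname{B}}(\rho,\sigma)=0$ forces $\chi_\rho\equiv\chi_\sigma$, hence $\rho=\sigma$ since a state is determined by its characteristic function. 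Either route establishes separation and completes the verification that $W_{\operatorname{B}}$ is a metric.
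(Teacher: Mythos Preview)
Your main argument is correct and follows essentially the same route as the paper: the easy axioms are dispatched directly, and separation of points is obtained from \Cref{regularity} together with the strong continuity of $\lambda\mapsto\cN_\lambda^{\otimes m}$ in trace norm as $\lambda\to 1^-$; you simply spell out the limit more carefully (choice of $\sigma_E=|0\rangle\langle 0|$, singular-value decomposition plus dominated convergence) than the paper does.

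Your alternative route via displacement operators is a genuine addition not in the paper and is worth highlighting: since $[R_j,\D(z)]=c_j(z)\,\D(z)$ is bounded, the self-adjoint combinations $\D(z)+\D(-z)$ and $i(\D(z)-\D(-z))$ are admissible test observables in~\eqref{dualWasserstein1}, and $W_{\operatorname{B}}(\rho,\sigma)=0$ then forces $\chi_\rho\equiv\chi_\sigma$, hence $\rho=\sigma$. This bypasses \Cref{regularity} entirely and gives a cleaner, self-contained proof of separation; the paper's route, by contrast, reuses machinery already in place and thereby keeps the corollary as a one-line consequence. Either way the result stands, and your remark that on the full state space $W_{\operatorname{B}}$ should be read as an extended metric (finiteness being guaranteed only under a moment constraint via~\eqref{eq:diameterbound}) is a fair caveat that the paper leaves implicit.
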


\begin{proof}
	The triangle inequality and positivity are clear from the definition. Moreover $W_{\operatorname{B}}(\rho,\rho)=0$ for all $\rho\in \cD(L_2(\mathbb{R}^m))$. Finally, Given two states $\rho_1,\rho_2\in\cD(L_2(\mathbb{R}^m))$, \Cref{regularity} shows that $W_{\operatorname{B}}(\rho_1,\rho_2)=0$ implies $\cN_\lambda^{\otimes m}(\rho_1)=\cN_\lambda^{\otimes m}(\rho_2)$ for any $\lambda\in(0,1)$. Therefore, that $\rho_1=\rho_2$ follows from the strong continuity of $\lambda\mapsto \cN_\lambda^{\otimes m}$ with respect to the trace distance.
\end{proof}

\subsection{Diameter bound under moment constraints}\label{diameterboundsec}

Next, we prove an upper bound on the Wasserstein distance between two quantum states under a constraint on their moments.

\begin{prop}\label{W1toT1}
	For any two states $\rho,\sigma\in\cD(\cH_m)$ such that $\tr[\rho N],\tr[\sigma N]\le E<\infty$, 
	\begin{align*}
		W_{\operatorname{B}}(\rho,\sigma)\le 4 \sqrt{2\,m\,(m+E)}\,.
	\end{align*}
\end{prop}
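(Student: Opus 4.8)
The plan is to run a semigroup interpolation along the family of pure-loss channels, exploiting that the vacuum is their unique attracting fixed point. I would set $\cT_t\coloneqq\mathcal{E}_{e^{-t}}^{\otimes m}$, where $\mathcal{E}_\eta$ denotes the single-mode attenuator (beamsplitter with vacuum environment) of transmissivity $\eta$; these compose as $\cT_s\circ\cT_t=\cT_{s+t}$, satisfy $\cT_0=\id$, dissipate energy at the sharp rate $\tr[N\,\cT_t(\rho)]=e^{-t}\tr[N\rho]$, and send every state to the vacuum, $\cT_t(\rho)\to|0\rangle\langle 0|^{\otimes m}$ in trace norm as $t\to\infty$. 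Using the triangle inequality for $W_{\operatorname{B}}$ (established in the corollary just above), it suffices to bound $W_{\operatorname{B}}(\rho,|0\rangle\langle 0|^{\otimes m})$ for a single state with $\tr[\rho N]\le E$ and then double the estimate; equivalently, to bound $\big|\tr[X(\rho-|0\rangle\langle 0|^{\otimes m})]\big|$ uniformly over bounded self-adjoint $X$ with $\|\nabla X\|\le 1$.

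Fixing such an $X$, since $|0\rangle\langle 0|^{\otimes m}$ is a fixed point of $\cT_t$ and is the $t\to\infty$ limit of $\cT_t(\rho)$, the fundamental theorem of calculus along $t\mapsto\cT_t(\rho)$ gives
\[
\tr\big[X(\rho-|0\rangle\langle 0|^{\otimes m})\big]=-\int_0^\infty \tr\big[\cL^\dagger(X)\,\cT_t(\rho)\big]\,dt\,,\qquad \cL^\dagger(X)=\tfrac12\sum_{j=1}^m\big(a_j^\dagger[X,a_j]+[a_j^\dagger,X]a_j\big)\,,
\]
with $\cL^\dagger$ the Heisenberg-picture generator of the loss semigroup. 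The hypothesis $\|\nabla X\|\le1$ forces $[Q_j,X]$ and $[P_j,X]$ to be bounded operators of norm at most $1$ (the remark in the definition of the bosonic Lipschitz constant), hence $\|[a_j,X]\|_\infty\le\sqrt2$ and $\|[a_j^\dagger,X]\|_\infty\le\sqrt2$. Writing $\rho_t\coloneqq\cT_t(\rho)$, splitting off $\rho_t^{1/2}$ into a Hilbert--Schmidt inner product, e.g. $\tr[a_j^\dagger[X,a_j]\rho_t]=\langle a_j\rho_t^{1/2},[X,a_j]\rho_t^{1/2}\rangle_{\mathrm{HS}}$, and applying Cauchy--Schwarz (first for each $j$, then over the mode index) yields
\[
\big|\tr[\cL^\dagger(X)\rho_t]\big|\le\sqrt2\sum_{j=1}^m\big\|a_j\rho_t^{1/2}\big\|_2\le\sqrt{2m}\,\Big(\sum_{j=1}^m\tr[a_j^\dagger a_j\rho_t]\Big)^{1/2}=\sqrt{2m}\,\sqrt{\tr[N\rho_t]}\le\sqrt{2mE}\,e^{-t/2}\,.
\]
Integrating, $\big|\tr[X(\rho-|0\rangle\langle 0|^{\otimes m})]\big|\le\sqrt{2mE}\int_0^\infty e^{-t/2}\,dt=2\sqrt{2mE}$, and the triangle inequality gives $W_{\operatorname{B}}(\rho,\sigma)\le 4\sqrt{2mE}\le 4\sqrt{2m(m+E)}$. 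In fact this route gives the slightly stronger $4\sqrt{2mE}$, which already implies the stated bound.

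The part I expect to need the most care is the differentiation identity $\tfrac{d}{dt}\tr[X\cT_t(\rho)]=\tr[\cL^\dagger(X)\cT_t(\rho)]$ and the Hilbert--Schmidt splitting of $\tr[a_j^\dagger[X,a_j]\rho_t]$: both involve the unbounded operators $a_j,a_j^\dagger$ paired against a state of which only the first moment of $N$ is controlled. The clean way around this is to prove everything first for a trace-norm-dense set of well-behaved states --- e.g.\ finite linear combinations of number states, or states with $\tr[e^{\beta N}\rho]<\infty$ --- for which $\cT_t(\rho)$ has all moments finite and $\cL(\cT_t(\rho))$ is trace class, so that differentiation under the trace and the splitting are routine; the uniform-in-$t$ estimate above then lets one pass to a general finite-energy $\rho$ by trace-norm continuity of both sides. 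A secondary point, the trace-norm convergence $\cT_t(\rho)\to|0\rangle\langle 0|^{\otimes m}$, is standard for the attenuator semigroup, and the boundedness $\|[Q_j,X]\|_\infty\le\|\nabla X\|$ (used above) is exactly the content of the cited remark, so one may freely restrict the supremum defining $W_{\operatorname{B}}$ to such smooth observables.
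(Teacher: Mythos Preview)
Your proposal is correct and follows the same semigroup-interpolation strategy as the paper, but with a cleaner execution. The paper works with the Bose Ornstein--Uhlenbeck semigroup at finite inverse temperature~$\beta$ (invariant state~$\sigma_\beta$) and only takes $\beta\to\infty$ at the end; you go directly to the $\beta=\infty$ limit, the pure-loss semigroup with the vacuum as its fixed point. For the key time-integral estimate, the paper pulls the semigroup onto~$X$ and invokes the intertwining identities $[a_j,\cN_\lambda^{\dagger\otimes m}(X)]=\sqrt{\lambda}\,\cN_\lambda^{\dagger\otimes m}([a_j,X])$ to extract the exponential decay, then applies Lemma~\ref{lem.technical} to the objects $e^{t\cL_\beta}(\rho a_j^\dagger)$ via the spectral decomposition of~$\rho$. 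You instead keep the semigroup on~$\rho$, bound $|\tr[\cL^\dagger(X)\rho_t]|$ directly by the Hilbert--Schmidt Cauchy--Schwarz splitting, and use the explicit energy dissipation $\tr[N\rho_t]=e^{-t}\tr[N\rho]$. This bypasses both the intertwining and Lemma~\ref{lem.technical}, and because the pure-loss generator has no $a_j(\cdot)a_j^\dagger$ term (hence no $a_ja_j^\dagger=I+N_j$ contribution), you obtain the sharper constant $4\sqrt{2mE}$ rather than $4\sqrt{2m(m+E)}$. What the paper's route buys in exchange is a more explicit treatment of the unbounded-operator technicalities (via spectral decomposition rather than your density argument) and, as an intermediate byproduct, the finite-$\beta$ bound $W_{\operatorname{B}}(\rho,\sigma_\beta)\le \tfrac{2\sqrt{2}\,m^{1/2}\cosh(\beta/2)}{\sinh(\beta/2)}(m+\tr[\rho N])^{1/2}$ against a thermal reference state.
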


Before proving \Cref{W1toT1}, we first state and prove a simple technical lemma which we will use multiple times in the section.
\begin{lemma}\label{lem.technical}
	Let $\Phi:\mathcal{T}_1(\cH_m)\to\mathcal{T}_1(\cH_{m'})$ be a moment-limited quantum channel, i.e.~for any  $\rho\in\cD(\cH_m)$ with $\tr[\rho N_m]<\infty$, $\tr[\Phi(\rho)N_{m'}]<\infty$. Then, for any reference system $R$ and any two pure states $|\varphi\rangle,|\psi\rangle\in\operatorname{dom}(\sqrt{{N}_m}\otimes I_R)$, the operator $A\coloneqq(\Phi\otimes \id_R)(|\psi\rangle\langle\varphi|)$ is trace-class, $\|A\|_1\le 1$, and $\operatorname{ran}(A), \operatorname{ran}(A^\dagger)\subseteq \operatorname{dom}(\sqrt{{N}_{m'}}\otimes I_R)$. In other words, $A$ admits the following representation
	\begin{align*}
		A=\sum_{\ell}\,\alpha_\ell \,|e_\ell\rangle\langle f_\ell|\,,
	\end{align*}
	for some sequences $\alpha\equiv \{\alpha_\ell\}$ of positive numbers with $\|\alpha\|_{\ell_1}\le 1$, and of orthonormal vectors $\{|e_\ell\rangle\}, \{|f_\ell\rangle\}$ in the domain of $\sqrt{{N}_{m'}}\otimes I_R$.
\end{lemma}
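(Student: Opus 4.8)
The plan is to unpack the statement of the lemma into three claims about $A\coloneqq (\Phi\otimes \id_R)(|\psi\rangle\langle\varphi|)$: first that $A$ is trace-class with $\|A\|_1\le 1$; second that $\operatorname{ran}(A)$ and $\operatorname{ran}(A^\dagger)$ lie in $\operatorname{dom}(\sqrt{N_{m'}}\otimes I_R)$; and third that these two facts together force the singular-value decomposition $A=\sum_\ell \alpha_\ell |e_\ell\rangle\langle f_\ell|$ with the $|e_\ell\rangle,|f_\ell\rangle$ in the domain of $\sqrt{N_{m'}}\otimes I_R$. The last claim is essentially linear algebra / spectral theory once the first two are in hand, so the work is concentrated in the first two.

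For the trace-norm bound, I would first observe that $|\psi\rangle\langle\varphi|$ has $\||\psi\rangle\langle\varphi|\|_1 = \||\psi\rangle\|\,\||\varphi\rangle\| = 1$, and that both $\Phi$ and $\id_R$ are trace-preserving completely positive maps (hence $\Phi\otimes\id_R$ is a quantum channel and in particular a trace-norm contraction on $\mathcal{T}_1$). Therefore $\|A\|_1 = \|(\Phi\otimes\id_R)(|\psi\rangle\langle\varphi|)\|_1 \le \||\psi\rangle\langle\varphi|\|_1 = 1$. A mild subtlety is that $|\psi\rangle\langle\varphi|$ is not a state but a rank-one (generally non-self-adjoint) trace-class operator; the contractivity of CPTP maps under $\|\cdot\|_1$ applies to all trace-class inputs (e.g.\ by writing the operator as a linear combination of at most four states, or directly from the dual being a contraction in operator norm), so this causes no real difficulty.

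The heart of the lemma is the range/domain statement, and this is where the moment-limited hypothesis on $\Phi$ is used. The idea is to reduce the two mixed states $\rho_\psi\coloneqq \operatorname{tr}_R|\psi\rangle\langle\psi|$ and $\rho_\varphi\coloneqq\operatorname{tr}_R|\varphi\rangle\langle\varphi|$ on the $m$-mode system: since $|\psi\rangle,|\varphi\rangle\in\operatorname{dom}(\sqrt{N_m}\otimes I_R)$, both $\rho_\psi$ and $\rho_\varphi$ have finite photon number, $\operatorname{tr}[\rho_\psi N_m],\operatorname{tr}[\rho_\varphi N_m]<\infty$. By the moment-limited assumption, $\Phi(\rho_\psi)$ and $\Phi(\rho_\varphi)$ then have finite photon number on the output $m'$ modes, and since $N_{m'}\otimes I_R$ annihilates nothing in the $R$ factor, $(\Phi\otimes\id_R)(|\psi\rangle\langle\psi|)$ and $(\Phi\otimes\id_R)(|\varphi\rangle\langle\varphi|)$ have finite $\operatorname{tr}[\,\cdot\,(N_{m'}\otimes I_R)]$ as well. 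A positive (trace-class) operator with finite expectation of $N_{m'}\otimes I_R$ has every vector in its range inside $\operatorname{dom}(\sqrt{N_{m'}}\otimes I_R)$ --- this follows by the definition of $\operatorname{tr}[\cdot\, X]$ for positive $X$ recalled in the excerpt, since in the spectral decomposition $\sum_i p_i|\phi_i\rangle\langle\phi_i|$ finiteness of $\sum_i p_i\|(\sqrt{N_{m'}}\otimes I_R)^{1/2}\,\cdots$\,--\,more precisely finiteness of $\sum_i p_i\|(N_{m'}\otimes I_R)^{1/2}|\phi_i\rangle\|^2$ forces each $|\phi_i\rangle\in\operatorname{dom}(\sqrt{N_{m'}}\otimes I_R)$, and the closed operator $\sqrt{N_{m'}}\otimes I_R$ then maps the (closed) span, i.e.\ the range, into its domain. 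Finally, $\operatorname{ran}(A)\subseteq\operatorname{ran}\big((\Phi\otimes\id_R)(|\psi\rangle\langle\psi|)\big)$ and $\operatorname{ran}(A^\dagger)\subseteq\operatorname{ran}\big((\Phi\otimes\id_R)(|\varphi\rangle\langle\varphi|)\big)$ because $|\psi\rangle\langle\varphi| = |\psi\rangle\langle\psi|\,\cdot\,|\psi\rangle^{-?}$ is not literally true, so instead I would argue via complete positivity: $|\psi\rangle\langle\psi| \pm \tfrac12(|\psi\rangle\langle\varphi| + |\varphi\rangle\langle\psi|) \ge 0$ and $|\psi\rangle\langle\psi| + |\varphi\rangle\langle\varphi| \ge \pm\, i\,(|\psi\rangle\langle\varphi| - |\varphi\rangle\langle\psi|)$, so $A$ and $A^\dagger$ are dominated (as operators, after applying the positive map $\Phi\otimes\id_R$) by $(\Phi\otimes\id_R)(|\psi\rangle\langle\psi| + |\varphi\rangle\langle\varphi|)$, whence their ranges are contained in the range of the latter, which lies in $\operatorname{dom}(\sqrt{N_{m'}}\otimes I_R)$ by the previous step.

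The main obstacle I anticipate is purely a matter of operator-domain bookkeeping: being careful that $\sqrt{N_{m'}}\otimes I_R$ is an unbounded (but closed, self-adjoint) operator, so statements like ``$\operatorname{ran}(A)\subseteq\operatorname{dom}(\sqrt{N_{m'}}\otimes I_R)$'' and the passage from ``finite expectation value'' to ``range in the domain'' must be justified using the convention for $\operatorname{tr}[\rho X]$ with unbounded positive $X$ recalled in the excerpt, rather than by naive manipulations; and that the domination-of-ranges step genuinely requires complete positivity of $\Phi\otimes\id_R$ (ordinary positivity of $\Phi$ alone would not obviously suffice on the tensor extension), but this is available since $\Phi$ is a quantum channel. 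Once the ranges are controlled, extracting the stated $\ell_1$-summable singular-value expansion with domain-respecting singular vectors is standard: write the polar/Schmidt decomposition of the trace-class operator $A$, note $\|\alpha\|_{\ell_1}=\|A\|_1\le 1$, and observe that each right singular vector $|f_\ell\rangle$ lies in $\operatorname{ran}(A^\dagger)$ and each left singular vector $|e_\ell\rangle$ in $\operatorname{ran}(A)$, hence both in $\operatorname{dom}(\sqrt{N_{m'}}\otimes I_R)$.
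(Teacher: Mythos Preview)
Your approach is valid and takes a genuinely different route from the paper's. The paper argues directly: for arbitrary $|\chi\rangle$ it expands $\|(\sqrt{N_{m'}}\otimes I_R)\,A|\chi\rangle\|^2$ in the Fock basis, applies Cauchy--Schwarz and then the operator (Kadison--)Schwarz inequality $\Phi^\dagger(Y^*)\Phi^\dagger(Y)\le \Phi^\dagger(Y^*Y)$ to bound this quantity by $\tr\big[(\Phi\otimes\id_R)(|\psi\rangle\langle\psi|)\,(N_{m'}\otimes I_R)\big]$, which is finite by the moment-limited hypothesis. The trace-norm bound is handled exactly as you do. Your strategy instead reduces the rank-one non-self-adjoint case to the positive case $B\coloneqq(\Phi\otimes\id_R)(|\psi\rangle\langle\psi|+|\varphi\rangle\langle\varphi|)$ via operator domination, and then passes from finite expectation to a domain statement.

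Two points in your write-up need tightening. First, the inequality $|\psi\rangle\langle\psi|\pm\tfrac12(|\psi\rangle\langle\varphi|+|\varphi\rangle\langle\psi|)\ge 0$ is false (take $|\psi\rangle\perp|\varphi\rangle$); what you want is $(|\psi\rangle\pm|\varphi\rangle)(\langle\psi|\pm\langle\varphi|)\ge 0$ and its $\pm i$ variant, which give $|\psi\rangle\langle\psi|+|\varphi\rangle\langle\varphi|\ge \pm(A_0+A_0^\dagger)$ and $\ge\pm i(A_0-A_0^\dagger)$ for $A_0=|\psi\rangle\langle\varphi|$ --- indeed this is the version you actually invoke in your conclusion. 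Second, ``dominated implies range contained'' is not automatic: from $-B\le H\le B$ with $H$ self-adjoint one obtains, via Douglas' lemma applied to $0\le B\pm H\le 2B$, only $\operatorname{ran}(H)\subseteq\operatorname{ran}(B^{1/2})$, not $\operatorname{ran}(B)$. This is still enough, because a Cauchy--Schwarz estimate in the eigenbasis $B=\sum_i p_i|\phi_i\rangle\langle\phi_i|$ shows that $\operatorname{ran}(B^{1/2})\subseteq\operatorname{dom}(\sqrt{N_{m'}}\otimes I_R)$ whenever $\sum_i p_i\|(\sqrt{N_{m'}}\otimes I_R)|\phi_i\rangle\|^2=\tr[B(N_{m'}\otimes I_R)]<\infty$. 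With these two fixes your argument goes through; the paper's Kadison--Schwarz calculation is more direct, while your domination route is a legitimate and somewhat more structural alternative.
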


\begin{proof}
	Let $\{|e_i\rangle\}$ be an orthonormal basis in $\cH_R$ and let $|\chi\rangle\in \cH_{m'}\otimes \cH_R$. We have
	\begin{align*}
		\|{N}_{m'}^s\otimes I_R(\Phi\otimes \id_R)(|\psi\rangle\langle\varphi|)|\chi\rangle\|^2&=\sum_{k\in \mathbb{N}^{m'}}\sum_i\,E_k\,|\langle\chi|\,(\Phi\otimes \id_R)(|\varphi\rangle\langle \psi|)(|k\rangle\otimes |e_i\rangle)|^2\\
		&=\sum_{k\in\mathbb{N}^{m'}}\sum_i\,E_k\,|\langle\psi|(\Phi^\dagger\otimes \id_R)(|k\rangle\otimes |e_i\rangle\langle\chi|)\,|\varphi\rangle|^2\\
		&\le \sum_{k\in\mathbb{N}^{m'}}\sum_i\,E_k\,\langle\psi|(\Phi^\dagger\otimes \id_R)(|k\rangle\otimes |e_i\rangle\langle\chi|)(\Phi^\dagger\otimes\id_R)(|\chi\rangle\langle k|\otimes \langle e_i|)|\psi\rangle\\
		&\le \sum_{k\in\mathbb{N}^{m'}}\sum_i\,E_k\,\langle\psi| (\Phi^\dagger\otimes \id_R)(|k\rangle\langle k|\otimes |e_i\rangle\langle e_i|)|\psi\rangle\\
		&=\tr[(\Phi\otimes\id_R)(|\psi\rangle\langle\psi|)(N_{m'}\otimes I_R)]
	\end{align*}
	where $N_{m'}|k\rangle=E_k|k\rangle$, and where in the last inequality we have used the operator Schwarz inequality \cite[Theorem 5.3]{wolftour}. Now, since $\Phi$ is moment-limited and $|\psi\rangle\in\operatorname{dom}(\sqrt{{N}_m}\otimes I_R)$, we have proved that $\|\sqrt{{N}_{m'}}\otimes I_R(\Phi\otimes \id_R)(|\psi\rangle\langle\varphi|)|\chi\rangle\|<\infty$, and therefore $\operatorname{ran}(A)\subseteq\operatorname{dom}(\sqrt{{N}_{m'}}\otimes I_R)$. The same holds for $A^\dagger$ after exchanging the vectors $|\psi\rangle$ and $|\varphi\rangle$. Finally, since $\Phi$ is a channel, we have $\|\alpha\|_{\ell_1}=\|(\Phi\otimes \id_R)(|\psi\rangle\langle\varphi|)\|_1\le \||\psi\rangle\langle \varphi|\|_1=1$. 
\end{proof}

\begin{proof}[Proof of \Cref{W1toT1}]
	First, we denote the thermal Gaussian state at inverse temperature $\beta>0$ as
	\begin{align*}
		\sigma_\beta\coloneqq \,\frac{e^{-\beta\,N_\Lambda}}{\tr[e^{-\beta N_\Lambda}]}\,.
	\end{align*}
	The state $\sigma_\beta$ is the unique invariant state of the so-called Bose Ornstein-Uhlenbeck semigroup $(e^{t\cL_\beta})_{t\ge 0}$ of generator in the Heisenberg picture defined on a suitable subspace as~\cite{carlen2017gradient}
	\begin{align*}
		\cL_\beta^\dagger(X)\coloneqq \frac{1}{2}\sum_{j\in\Lambda}\left(\,e^{\frac{\beta}{2}}\,\Big(a_j^\dagger[X,a_j]+[a_j^\dagger,X]a_j\Big)+e^{-\frac{\beta}{2}}\,\Big( a_j[X,a_j^\dagger]+[a_j,X]a_j^\dagger\Big)\,\right).
	\end{align*}
	Next, we show an upper bound on the Wasserstein distance between an arbitrary state $\rho$ satisfying the constraints of the Proposition and $\sigma_\beta$, see Eq.~\eqref{eq:upperboundrhosigmabeta} below.  To this end, we proceed as follows. 
	Given a bounded, self-adjoint operator $X$ on $\cH_m$ satisfying the constraints in the definition of $W_{\operatorname{B}}$, we write $X_t\coloneqq e^{t\cL_\beta^\dagger}(X)$ and $\rho_t\coloneqq e^{t\cL_\beta}(\rho)$ such that 
	\begin{align}
		\tr[X(\sigma_\beta-\rho)]  &= \tr\left(X\lim_{t\rightarrow\infty} (e^{t\cL_\beta}(\rho)-\rho)\right)\nonumber \\
		&=\tr\left(X\int_0^\infty\cL_\beta(\rho_t) dt\right)\label{eq:xtintegral}\\
		&=\frac{1}{2}\,\sum_{j\in\Lambda}\int_0^\infty\,e^{\frac{\beta}{2}}\Big(\tr\big[[X_t,a_j]\rho a_j^\dagger\big]+\tr\big[a_j\rho [a_j^\dagger,X_t]\big]\Big)+e^{-\frac{\beta}{2}}\Big(\tr\big[[X_t,a_j^\dagger]\rho a_j\big]+\tr\big[a_j^\dagger\rho [a_j,X_t] \big]\Big)\,dt\,.\nonumber 
	\end{align}
	The maps $e^{t\cL_\beta}$ turn out to be equal to the channels $\cN_\lambda^{\otimes m}$ for $\lambda\equiv e^{-2\operatorname{sinh}(\beta/2)t}$ and environment states $\sigma_{E_j}=\sigma_\beta$. Therefore, by the  relations~\eqref{eq:multimodeintertwining} and the definition of~$X_t$ we have 
	\begin{align}
		\tr\left([X_t,a_j]\rho a_j^\dagger\right)
		&=e^{-\operatorname{\sinh}(\beta/2)t}\tr\left(e^{t\cL_\beta^\dagger}([X,a_j])\rho a_j^\dagger\right)=e^{-\operatorname{\sinh}(\beta/2)t}\tr\left([X,a_j]e^{t\cL_\beta}(\rho a_j^\dagger)\right)\ \label{eq:firstidentitycommutat}
	\end{align}
	and similarly
	\begin{align}
		\begin{matrix}
			\tr\left(a_j\rho[a_j^\dagger,X_t]\right)&=e^{-\operatorname{\sinh}(\beta/2)t}
			\tr\left(e^{t\cL_\beta}(a_j\rho)[a_j^\dagger,X]\right)\\
			\tr\left([X_t,a_j^\dagger]\rho a_j\right)&=e^{-\operatorname{\sinh}(\beta/2)t}
			\tr\left([X,a_j^\dagger] e^{t\cL_\beta}(\rho a_j)\right)\\
			\tr\left(a_j^\dagger \rho[a_j,X_t]\right)&=e^{-\operatorname{\sinh}(\beta/2)t}
			\tr\left(e^{t\cL_\beta}(a_j^\dagger\rho)[a_j,X]\right)
		\end{matrix}\ \quad .\label{eq:secondidentitiescommutatm}
	\end{align}
	Inserting~\eqref{eq:firstidentitycommutat} and~\eqref{eq:secondidentitiescommutatm} into~\eqref{eq:xtintegral} we obtain
	\begin{align}
		& \tr[X(\sigma_\beta-\rho)]=\frac{1}{2}\int_0^\infty e^{-\operatorname{sinh}(\beta/2)t}\,\sum_{j\in\Lambda}\,e^{\frac{\beta}{2}}\left(
		\tr\left([X,a_j]e^{t\cL_\beta}(\rho a_j^\dagger)\right)
		+   \tr\left(e^{t\cL_\beta}(a_j\rho)[a_j^\dagger,X]\right)\right)
		\label{fourtraces}\\
		&\qquad\qquad\qquad\qquad\quad\qquad\qquad\qquad+e^{-\frac{\beta}{2}}\left(
		\tr\left([X,a_j^\dagger] e^{t\cL_\beta}(\rho a_j)\right)+\tr\left(e^{t\cL_\beta}(a_j^\dagger\rho)[a_j,X]\right)
		\right)\,dt\,.\nonumber
	\end{align}
	Next, we control each of the traces above separately. For this, we consider the spectral decomposition $\rho\coloneqq \sum_{\ell}\lambda_\ell |\psi_\ell\rangle\langle\psi_\ell|$. We first consider the operator $e^{t\cL_\beta }(|\psi_\ell\rangle\langle \varphi_\ell| )$ for some arbitrary eigenvector $|\psi_\ell\rangle$ of $\rho$ such that $a_j|\psi_\ell\rangle\ne 0$ and $|\varphi_\ell\rangle\coloneqq a_j|\psi_\ell\rangle /\|a_j|\psi_\ell\rangle\|$. Then, by assumption on $\rho$, $|\psi_\ell\rangle,|\varphi_\ell\rangle\in\operatorname{dom}(\sqrt{N})$. Moreover, since $e^{t\cL_\beta}$ is Gaussian for all $t\ge 0$, we have that for any state $\rho'$ such that $\tr[\rho'N]<\infty$, $\tr[e^{t\cL_\beta}(\rho')N]<\infty$. By \Cref{lem.technical}, the operator $e^{t\cL_\beta}(|\psi_\ell\rangle\langle\varphi_\ell|)$ admits a decomposition onto rank-one operators of the form
	\begin{align*}
		e^{t\cL_\beta}(|\psi_\ell\rangle\langle\varphi_\ell|)=\sum_i \alpha^{(\ell)}_i\,|e^{(\ell)}_i\rangle\langle f^{(\ell)}_i|\,,\qquad\text{ with }\qquad  \|\alpha^{(\ell)}\|_{\ell_1}\le 1\,,\quad \{|e^{(\ell)}_i\rangle,|f^{(\ell)}_i\rangle\}\in\operatorname{dom}(\sqrt{N})\,.
	\end{align*}
	Moreover, we have
	\begin{align}
		\tr\left(e^{t\cL_\beta^\dagger} ([X,a_j]) \rho a_j^\dagger\right)
		&=\tr\left([X,a_j]e^{t\cL_\beta}(\rho a_j^\dagger)\right)\\
		&=\sum_{\ell}\lambda_\ell \|a_j\ket{\psi_\ell}\|\cdot \tr\left(
		[X,a_j] e^{t\cL_\beta}(\ket{\psi_\ell}\bra{\varphi_\ell})\right)\\
		&=\sum_{\ell,i}\lambda_\ell\,\alpha^{(\ell)}_i\,\|a_j|\psi_\ell\rangle\|\,|\langle f_i^{(\ell)}|[X,a_j]|e_i^{(\ell)}\rangle|\\
		&\le \sqrt{2} \,\|\nabla X\|\,\sum_{\ell}\,\lambda_\ell\,\|a_j|\psi_\ell\rangle\|\\
		&\le \sqrt{2}\,\|\nabla X\|\,\tr(\rho N_j)^{\frac{1}{2}}\,, 
	\end{align}
	where the last inequality follows by Jensen's inequality. Therefore, for all $t\ge 0$,
	\begin{align}
		\big|\tr[e^{t\cL_\beta^\dagger}([X,a_j])\rho a_j^\dagger]\big|\le \sqrt{2}\,\|\nabla X\|\,\tr(\rho N_j)^{\frac{1}{2}}\,.\label{eq:firstupperboundxaj}
	\end{align}
	Similarly, we find that 
	\begin{align}
		\big|\tr[a_j\rho\,e^{t\cL_\beta^\dagger}([a_j^\dagger,X ])]\big|\le \sqrt{2}\,\|\nabla X\|\,\tr(\rho N_j)^{\frac{1}{2}}\,.\label{eq:secondupperboundxaj}
	\end{align}
	and 
	\begin{align}
		\big|\tr[e^{t\cL_\beta^\dagger}([X,a_j^\dagger ])\rho a_j]\big|,\,   \big|\tr[a_j^\dagger \rho\,e^{t\cL_\beta^\dagger}([a_j,X ])]\big|\le \sqrt{2}\,\|\nabla X\|\,\tr(\rho (I+N_j))^{\frac{1}{2}}\,.\label{eq:thirdupperboundxaj}
	\end{align}
	Therefore, inserting~\eqref{eq:firstupperboundxaj},~\eqref{eq:secondupperboundxaj} and
	~\eqref{eq:thirdupperboundxaj} into Eq.~\eqref{fourtraces}
	and integrating over~$t$ gives 
	\begin{align*}
		\big|\tr[X(\sigma_\beta-\rho)]\big|&\le \frac{\sqrt{2}\,\|\nabla X\|}{\operatorname{sinh}(\beta/2)}\,\sum_{j\in\Lambda}\, \left(e^{\frac{\beta}{2}}\,\tr(\rho N_j)^{\frac{1}{2}}+e^{-\frac{\beta}{2}}\, \tr(\rho(I+N_j))^{\frac{1}{2}}\right)\\
		&\le \frac{2\sqrt{2}\,\,\|\nabla X\|\operatorname{cosh}(\beta/2)}{\operatorname{sinh}(\beta/2)}\,\sum_{j\in\Lambda}\tr(\rho (I+N_j))^{\frac{1}{2}}\\
		&\le \frac{2\sqrt{2}\,\,|\Lambda|^{\frac{1}{2}}\|\nabla X\|\operatorname{cosh}(\beta/2)}{\operatorname{sinh}(\beta/2)}\,\Big(\sum_{j\in\Lambda}\tr(\rho(I+N_j))\Big)^{\frac{1}{2}}\\
		&= \frac{2\sqrt{2}\,\,|\Lambda|^{\frac{1}{2}}\|\nabla X\|\operatorname{cosh}(\beta/2)}{\operatorname{sinh}(\beta/2)}\,\Big(|\Lambda|+\tr(\rho N)\Big)^{\frac{1}{2}}\,.
	\end{align*}
	From duality, i.e., expression~\eqref{dualWasserstein1}, it follows immediately that
	\begin{align}
		W_{\operatorname{B}}(\rho,\sigma_\beta)\le \frac{2\sqrt{2}\,|\Lambda|^{\frac{1}{2}}\operatorname{cosh}(\beta/2)}{\operatorname{sinh}(\beta/2)}\,\Big(|\Lambda|+\tr(\rho N)\Big)^{\frac{1}{2}}\,.\label{eq:upperboundrhosigmabeta}
	\end{align}
	Applying the same reasoning to~$\sigma$ instead of~$\rho$ and using the triangle inequality implies
	\begin{align*}
		W_{\operatorname{B}}(\rho,\sigma)\le \frac{4\sqrt{2}\,|\Lambda|^{\frac{1}{2}}\operatorname{cosh}(\beta/2)}{\operatorname{sinh}(\beta/2)}\,\Big(|\Lambda|+\tr(\rho N)\Big)^{\frac{1}{2}}\,.
	\end{align*}
	The result then follows  optimizing over~$\beta$, i.e., taking the limit~$\beta\rightarrow\infty$.
\end{proof}

\Cref{W1toT1} can be easily refined in the case where the states $\rho$ and $\sigma$ have coinciding marginals on a subset of the modes.

\begin{prop}
	Let $A\subset [m]$ and $\rho,\sigma\in\cD(\cH_m)$ be such that $\rho_A=\sigma_A$ and $\tr[\rho N_{A^c}], \tr[\sigma N_{A^c}]\le E<\infty$ Then 
	\begin{align*}
		W_{\operatorname{B}}(\rho,\sigma)\le 4\sqrt{2|A^c|(|A^c|+E)}\,.
	\end{align*}
\end{prop}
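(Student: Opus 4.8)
The plan is to rerun the proof of \Cref{W1toT1} almost verbatim, but driving the Bose Ornstein--Uhlenbeck semigroup only on the modes outside $A$. Let $\cL_\beta^{A^c}$ be the generator obtained from $\cL_\beta$ by restricting the sum over modes to $j\in A^c$; equivalently $e^{t\cL_\beta^{A^c}}=\big(\bigotimes_{j\in A^c}e^{t\cL_\beta^{(j)}}\big)\otimes\id_A$, where $e^{t\cL_\beta^{(j)}}$ is the single--mode semigroup with unique invariant state the thermal state $\sigma_\beta^{(j)}$. Set $\rho_t\coloneqq e^{t\cL_\beta^{A^c}}(\rho)$ and $\sigma_\beta^{A^c}\coloneqq\bigotimes_{j\in A^c}\sigma_\beta^{(j)}=e^{-\beta N_{A^c}}/\tr[e^{-\beta N_{A^c}}]$. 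Since each single--mode factor converges to the projection $X\mapsto\tr[X]\,\sigma_\beta^{(j)}$, we have $\rho_t\to\rho_A\otimes\sigma_\beta^{A^c}$ and $e^{t\cL_\beta^{A^c}}(\sigma)\to\sigma_A\otimes\sigma_\beta^{A^c}$ as $t\to\infty$, and the hypothesis $\rho_A=\sigma_A$ makes these two limit states coincide. Calling this common state $\rho_\infty$, the triangle inequality reduces the claim to bounding $W_{\operatorname{B}}(\rho,\rho_\infty)$ and $W_{\operatorname{B}}(\sigma,\rho_\infty)$ separately.

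For the first, I would reproduce \eqref{eq:xtintegral}--\eqref{fourtraces}: for any self-adjoint bounded $X$ with $\|\nabla X\|\le 1$,
\[
\tr[X(\rho_\infty-\rho)]=\tr\!\Big[X\int_0^\infty\cL_\beta^{A^c}(\rho_t)\,dt\Big]=\frac12\sum_{j\in A^c}\int_0^\infty\!\big(\cdots\big)\,dt,
\]
the four traces in the integrand being those of \eqref{fourtraces}, now only for $j\in A^c$. The intertwining relations \eqref{eq:multimodeintertwining} still apply since the noise acts on the $A^c$ modes, and \Cref{lem.technical} applies because $e^{t\cL_\beta^{A^c}}$ is moment-limited (it is the identity on $A$ and Gaussian on $A^c$). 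Thus the estimates \eqref{eq:firstupperboundxaj}--\eqref{eq:thirdupperboundxaj} and the ensuing Cauchy--Schwarz step go through verbatim with $[m]$ replaced by $A^c$, giving
\[
\big|\tr[X(\rho_\infty-\rho)]\big|\le\frac{2\sqrt2\,\cosh(\beta/2)}{\sinh(\beta/2)}\Big(\sum_{j\in A^c}\tr[\rho(I+N_j)]\Big)^{1/2}=\frac{2\sqrt2\,\cosh(\beta/2)}{\sinh(\beta/2)}\,|A^c|^{1/2}\big(|A^c|+\tr[\rho N_{A^c}]\big)^{1/2}.
\]
By duality and $\tr[\rho N_{A^c}]\le E$ this bounds $W_{\operatorname{B}}(\rho,\rho_\infty)$, and the identical bound holds for $W_{\operatorname{B}}(\sigma,\rho_\infty)$. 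Adding the two and letting $\beta\to\infty$ (so $\cosh(\beta/2)/\sinh(\beta/2)\to1$) gives $W_{\operatorname{B}}(\rho,\sigma)\le 4\sqrt2\,|A^c|^{1/2}(|A^c|+E)^{1/2}=4\sqrt{2|A^c|(|A^c|+E)}$, as claimed.

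All the commutator algebra is identical to \Cref{W1toT1}, and I would simply refer back to that proof. The only genuinely new points needing care are: (i) the convergence $\rho_t\to\rho_A\otimes\sigma_\beta^{A^c}$, which follows from the ergodicity of each single--mode semigroup together with $\rho_A=\sigma_A$ to identify the two limits; and (ii) the domain bookkeeping in the partial setting --- one needs the eigenvectors $|\psi_\ell\rangle$ of $\rho$ and the normalized vectors $a_j|\psi_\ell\rangle/\|a_j|\psi_\ell\rangle\|$ (for $j\in A^c$) to lie in $\dom(\sqrt{N_{A^c}}\otimes I_R)$, which holds because $\sum_\ell\lambda_\ell\langle\psi_\ell|N_j|\psi_\ell\rangle=\tr[\rho N_j]<\infty$ for each $j\in A^c$. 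I expect this bookkeeping, rather than any new idea, to be the only mild obstacle.
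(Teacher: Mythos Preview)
Your approach is correct and essentially identical to the paper's: run the Bose Ornstein--Uhlenbeck semigroup only on the modes in $A^c$, use the triangle inequality through the common limit state $\rho_A\otimes\sigma_\beta^{A^c}=\sigma_A\otimes\sigma_\beta^{A^c}$, and repeat the estimates of \Cref{W1toT1} with the sum over~$j$ restricted to~$A^c$. One small slip: in your displayed inequality the factor $|A^c|^{1/2}$ from the Cauchy--Schwarz step is missing on the left of the equality sign (the right-hand side and final answer are correct).
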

\begin{proof}
	The proof is similar to tat of \Cref{W1toT1}. Instead, we consider the decomposition $W_{\operatorname{B}}(\rho,\sigma)\le W_{\operatorname{B}}(\rho,\rho_A\otimes \sigma_\beta)+W_{\operatorname{B}}(\sigma_A\otimes \sigma_\beta,\sigma)$, where $\sigma_\beta$ is the invariant state of the quantum Ornstein-Uhlenbeck semigroup acting on the subsystem $A^c$. 
\end{proof}

\section{Bosonic Wasserstein contraction coefficients}

The proof of the main result of this work relies on the analysis of the Wasserstein distance between two arbitrary states passing through layers of the noisy error correction protocol. For this, we need to relate the Wasserstein distance at the output of a gate included in the latter in terms of the Wasserstein distance at its input. This notion is well captured by the concept of a contraction coefficient:
\begin{definition}
	The bosonic Lipschitz contraction coefficient of a quantum channel $\Phi:\cT_1(\cH_m)\to \cT_1(\cH_m)$ is defined as
	\begin{align*}
		\|\Phi^\dagger\|_{\nabla\to\nabla}\coloneqq \sup_{\|\nabla X\|\le 1}\,\|\nabla \Phi^\dagger(X)\|\,,
	\end{align*}
	where the optimization is over all bounded, self-adjoint operators $X$. Similarly, the bosonic Wasserstein contraction coefficient of $\Phi$ is defined as 
	\begin{align*}
		\|\Phi\|_{W_{\operatorname{B}}\to W_{\operatorname{B}}}\coloneqq \sup_{\rho\ne\sigma\in\cD(\cH_m)}\frac{W_{\operatorname{B}}(\Phi(\rho),\Phi(\sigma))}{W_{\operatorname{B}}(\rho,\sigma)}\,.
	\end{align*}
	
\end{definition}
By duality of $W_{\operatorname{B}}$ and $\|\nabla(.)\|$, we clearly have that
\begin{align}\label{eq:duality}
	\|\Phi\|_{W_{\operatorname{B}}\to W_{\operatorname{B}}}\le \|\Phi^\dagger\|_{\nabla\to\nabla}\,.
\end{align}
Indeed, for any $X$ with $\|\nabla X\|\leq 1$, we have that $Y\coloneqq \frac{1}{\|\Phi^\dagger\|_{\Delta\to\Delta}}\Phi^\dagger(X)$ satisfies $\|\nabla Y\|\leq 1$, hence 
\begin{align*}
	|\tr\left(\Phi^\dagger(X)(\rho_1-\rho_2)\right)|=\|\Phi^\dagger\|_{\nabla\to\nabla}\,|\tr(Y(\rho_1-\rho_2))|\leq \|\Phi^\dagger\|_{\nabla\to\nabla} \,W_{\operatorname{B}}(\rho_1,\rho_2)
\end{align*}
and the claim \eqref{eq:duality} follows by taking the supremum over~$X$.

In the next proposition, we show that the bosonic Lipschitz contraction coefficient tensorizes:

\begin{prop}\label{tensorization}
	Given $\Phi=\bigotimes_{j=1}^m\Phi_j$ a tensor product of moment-limited quantum channels $\Phi_j:\cT_1(\cH_1)\to \cT_1(\cH_1)$. Then
	\begin{align*}
		\|\Phi^\dagger\|_{\nabla\to\nabla }\le  \max_{j\in [m]}\,\|\Phi^\dagger_j\otimes \id_{[m]\backslash \{j\}}\|_{\nabla_j\to\nabla_j } \,,
	\end{align*}
	where $\|\Phi^\dagger_j\otimes \id_{[m]\backslash \{j\}}\|_{\nabla_j\to\nabla_j }\coloneqq \sup_{\|\nabla_j X\|\le 1}\,\|\nabla_j \Phi^\dagger_j\otimes \id_{[m]\backslash \{j\}}(X)\|$.
\end{prop}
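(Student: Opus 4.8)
The plan is to reduce the multi-mode bound to a mode-by-mode estimate. Fix a bounded self-adjoint $X\in\cB(\cH_m)$ with $\|\nabla X\|\le 1$; since $\|\nabla X\|=\max_j\|\nabla_j X\|$, each $\|\nabla_j X\|\le 1$ is finite, so $[Q_j,X]$ and $[P_j,X]$ are bounded with $\max_{R_j\in\{Q_j,P_j\}}\|[R_j,X]\|_\infty=\|\nabla_j X\|$. Fix $j\in[m]$ and set $\Psi_j\coloneqq\bigotimes_{k\ne j}\Phi_k$, a quantum channel on the modes $[m]\backslash\{j\}$; then $\Phi^\dagger=(\Phi_j^\dagger\otimes\id_{[m]\backslash\{j\}})\circ(\id_j\otimes\Psi_j^\dagger)$, the two factors commuting because they act on disjoint tensor factors. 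Put $Y_j\coloneqq(\id_j\otimes\Psi_j^\dagger)(X)$, a bounded self-adjoint operator. I would establish the statement from two ingredients: (i) applying a channel to the \emph{other} modes does not increase the $j$-th Lipschitz seminorm, $\|\nabla_j Y_j\|\le\|\nabla_j X\|$; and (ii) $\|\nabla_j\Phi^\dagger(X)\|=\|\nabla_j(\Phi_j^\dagger\otimes\id_{[m]\backslash\{j\}})(Y_j)\|\le\|\Phi_j^\dagger\otimes\id_{[m]\backslash\{j\}}\|_{\nabla_j\to\nabla_j}\,\|\nabla_j Y_j\|$, which follows from the definition of the contraction coefficient and the degree-one homogeneity of $\|\nabla_j(\cdot)\|$ (the seminorm $\|\nabla_j Y_j\|$ being finite by (i)).

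For (i), the key point is that the derivation $[R_j,\cdot]$ on mode $j$ commutes with a channel acting on the complement. Concretely, I would take a Stinespring dilation $\Psi_j(\omega)=\tr_E[V\omega V^\dagger]$ with an isometry $V\colon\cH_{[m]\backslash\{j\}}\to\cH_{[m]\backslash\{j\}}\otimes\cH_E$, so that $\Psi_j^\dagger(Z)=V^\dagger(Z\otimes I_E)V$ and $Y_j=\tilde V^\dagger(X\otimes I_E)\tilde V$ with $\tilde V\coloneqq I_j\otimes V$ an isometry $\cH_m\to\cH_m\otimes\cH_E$. Two elementary facts then combine: since $\tilde V$ leaves mode $j$ untouched, it intertwines $R_j$, i.e.\ $(R_j\otimes I_E\otimes I_R)(\tilde V\otimes I_R)=(\tilde V\otimes I_R)(R_j\otimes I_R)$ on $\dom(R_j\otimes I_R)$; and since $[R_j,X]$ is bounded, $X$ maps $\dom(R_j)$ into itself (with $R_jX\xi=XR_j\xi+[R_j,X]\xi$). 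Using moreover $\dom(\sqrt N\otimes I_R)\subseteq\dom(R_j\otimes I_R)$ (from the form bound $R_j^2\le 2N+m$), one rewrites, for any reference $R$ and any pure states $|\psi\rangle,|\varphi\rangle\in\dom(\sqrt N\otimes I_R)$,
\[
\langle\psi|[R_j,Y_j]\otimes I_R|\varphi\rangle=\big\langle(\tilde V\otimes I_R)\psi\,\big|\,\big([R_j,X]\otimes I_E\big)\otimes I_R\,\big|\,(\tilde V\otimes I_R)\varphi\big\rangle,
\]
whose modulus is at most $\|[R_j,X]\|_\infty\le\|\nabla_j X\|$ because $\tilde V\otimes I_R$ is an isometry. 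Taking the supremum over $R_j\in\{Q_j,P_j\}$, over $R$, and over $|\psi\rangle,|\varphi\rangle$ yields (i). (Alternatively, one can dispense with the dilation and justify the same manipulations via \Cref{lem.technical}, using that $\id_j\otimes\Psi_j$ is moment-limited.)

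Combining (i) and (ii) with $\|\nabla_j X\|\le\|\nabla X\|$ gives $\|\nabla_j\Phi^\dagger(X)\|\le\|\Phi_j^\dagger\otimes\id_{[m]\backslash\{j\}}\|_{\nabla_j\to\nabla_j}\,\|\nabla X\|$ for every $j\in[m]$; maximizing over $j$ and then taking the supremum over $X$ with $\|\nabla X\|\le1$ gives the claimed bound. I expect the main obstacle to be not the conceptual structure of the argument, but the functional-analytic bookkeeping behind (i): verifying that the commutator and intertwining identities hold as identities of quadratic forms on $\dom(\sqrt N\otimes I_R)$, which is exactly where boundedness of $[R_j,X]$, the inclusion $\dom(\sqrt N)\subseteq\dom(Q_j)\cap\dom(P_j)$, and the moment-limited hypothesis (guaranteeing that no domain is lost under $\Psi_j^\dagger$) all come into play.
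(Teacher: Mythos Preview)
Your proposal is correct and follows the same conceptual outline as the paper: split $\Phi^\dagger=(\Phi_j^\dagger\otimes\id)\circ(\id_j\otimes\Psi_j^\dagger)$, show that the second factor does not increase $\|\nabla_j(\cdot)\|$, and then apply the definition of the mode-$j$ contraction coefficient. The difference lies in how step~(i) is justified. The paper works in the Schr\"odinger picture: it pushes $\Psi_j=\bigotimes_{k\neq j}\Phi_k$ onto the rank-one operator $|\psi\rangle\langle\varphi|$, invokes \Cref{lem.technical} (this is where moment-limitedness enters) to write $(\Psi_j\otimes\id_{jR})(|\psi\rangle\langle\varphi|)=\sum_i\alpha_i|e_i\rangle\langle f_i|$ with $\|\alpha\|_{\ell_1}\le1$ and $|e_i\rangle,|f_i\rangle\in\operatorname{dom}(\sqrt{N}\otimes I_R)$, and then repackages the sum via an auxiliary purification to land back inside the supremum defining $\|\nabla_j\Phi_j^\dagger(X)\|$. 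You instead work in the Heisenberg picture with a Stinespring isometry $\tilde V=I_j\otimes V$, and your key identity $\langle\psi|[R_j,Y_j]\otimes I_R|\varphi\rangle=\langle\tilde V\psi|[R_j,X]\otimes I_{ER}|\tilde V\varphi\rangle$ follows cleanly from the intertwining $(R_j\otimes I_E)\tilde V=\tilde V R_j$ once one interprets the commutator as the sesquilinear form $\langle R_j\psi|Y_j|\varphi\rangle-\langle\psi|Y_j|R_j\varphi\rangle$ on $\operatorname{dom}(R_j\otimes I_R)$.

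Your route is arguably more direct: it avoids the decomposition-and-repurification step, and in fact your form computation for step~(i) does not actually use the moment-limited hypothesis (the final bound by $\|[R_j,X]\|_\infty$ only needs $\tilde V$ to be an isometry and $[R_j,X]$ to be bounded). The paper's route, on the other hand, stays strictly within the framework set up by \Cref{lem.technical} and never leaves the class of test vectors in $\operatorname{dom}(\sqrt{N}\otimes I_R)$, which makes it mesh uniformly with the other estimates in the paper. Since you also mention the \Cref{lem.technical} alternative explicitly, you have both approaches covered.
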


\begin{proof}
	Let $j\in[m]$, $R$ be an arbitrary reference system and $|\varphi\rangle,|\psi\rangle\in\operatorname{dom}(\sqrt{N}\otimes I_R)$. Since $\Phi_k$ is moment-limited for each $k\in[m]$, we have by \Cref{lem.technical} that there exist two sequences of orthogonal, normalized vectors $\{|e_i\rangle\}_{i\in I}$ and $\{|f_i\rangle\}_{i\in I}$ in the domain of $\sqrt{N}\otimes I_R$, as well as a sequence $\alpha=\{\alpha_i\}_{i\in I}$ of positive numbers in $\ell_1$ with $\| \alpha\|_{\ell_1}\le 1$ such that $(\Phi_{\Lambda\backslash \Lambda_j}\otimes \id_{\Lambda_jR})(|\psi\rangle\langle \varphi|)=\sum_i\,\alpha_i|e_i\rangle\langle f_i|$, where we denote $\Lambda_j\coloneqq \{j\}$. Therefore, for any self-adjoint bounded $X$ and $R_j\in\{P_j,Q_j\}$,
	\begin{align*}
		|\langle \varphi|\big[R_j,\Phi^\dagger(X)\big]|\otimes I_R|\psi\rangle|\!=\!\left|\sum_i\alpha_i\langle f_i|[R_j,\Phi_j^\dagger(X)]\otimes I_R|e_i\rangle\right|\overset{(1)}{\le}\! \sup_{R',|\psi'\rangle,|\varphi'\rangle}\!|\langle \varphi'|\big[R_j,\Phi_j^\dagger(X)\big]\otimes I_{R'}|\psi'\rangle|
		\!\le   \!\|\nabla_j \Phi_j^\dagger(X)\|\,,
	\end{align*}
	where the inequality $(1)$ follows from choosing $R'=RI$, $|\psi'\rangle=\sum_i\sqrt{\alpha_i}|e_i\rangle\otimes |i\rangle $ and $|\varphi'\rangle=\sum_i\sqrt{\alpha_i}|f_i\rangle\otimes |i\rangle $, for some orthonormal basis $\{|i\rangle\}_{i\in I}$. We have therefore that
	\begin{align*}
		\|\nabla \Phi^\dagger(X)\|\le \max_{j\in [m]}\,\|\nabla_j\Phi_j^\dagger(X)\|\le \max_{j\in [m]}\,\|\Phi^\dagger_j\otimes \id_{[m]\backslash \{j\}}\|_{\nabla_j\to\nabla_j }\,\,\|\nabla X\|\,
	\end{align*}
	and the result follows.
\end{proof}
In the next sections, we provide estimates for the bosonic Wasserstein contraction coefficients for the channels appearing in the approximate $\mathsf{GKP}$ error correction scheme described in \Cref{Gauss}.

\subsection{Contraction under the noise channel $\cN_\lambda$}
We start by showing that the bosonic Wasserstein distance contracts under the 

\begin{prop}\label{prop:contractionNlambda}For any $\lambda\in [0,1]$,
	\begin{align}
		\|\cN_\lambda^{\otimes m}\|_{W_{\operatorname{B}}\to W_{\operatorname{B}}} \le \sqrt{\lambda}\,.\label{NlW1}
	\end{align}
\end{prop}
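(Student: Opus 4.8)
The plan is to reduce the statement to a bound on the dual Lipschitz contraction coefficient and then exploit the intertwining relations between $\cN_\lambda$ and the quadrature commutators already established in \Cref{techlemma}. By the duality inequality \eqref{eq:duality}, it suffices to prove $\|(\cN_\lambda^{\otimes m})^\dagger\|_{\nabla\to\nabla}\le\sqrt\lambda$, i.e.\ that $\|\nabla(\cN_\lambda^{\otimes m})^\dagger(X)\|\le\sqrt\lambda\,\|\nabla X\|$ for every bounded self-adjoint $X$; the cases $\lambda\in\{0,1\}$ being immediate (constant channel and identity channel respectively), I fix $\lambda\in(0,1)$.

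Next I would invoke the multimode intertwining identities \eqref{eq:multimodeintertwining}, which give, for each mode $j\in\Lambda$, the relation $[a_j,(\cN_\lambda^{\otimes m})^\dagger(X)]=\sqrt\lambda\,(\cN_\lambda^{\otimes m})^\dagger([a_j,X])$ together with its analogue for $a_j^\dagger$, valid as identities on $\dom(\sqrt N\otimes I_R)$. Writing $Q_j=(a_j+a_j^\dagger)/\sqrt2$ and $P_j=(a_j-a_j^\dagger)/(i\sqrt2)$ and taking the corresponding linear combinations, one obtains for $R_j\in\{Q_j,P_j\}$ the single identity
\[
	[R_j,(\cN_\lambda^{\otimes m})^\dagger(X)]=\sqrt\lambda\,(\cN_\lambda^{\otimes m})^\dagger([R_j,X])
\]
on the same domain. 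Evaluating on pure states $|\psi\rangle,|\varphi\rangle\in\dom(\sqrt N\otimes I_R)$ for an arbitrary reference $R$ turns the matrix element on the left into
\[
	\sqrt\lambda\;\tr\!\big[(\cN_\lambda^{\otimes m}\otimes\id_R)(|\varphi\rangle\langle\psi|)\,([R_j,X]\otimes I_R)\big].
\]

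To conclude I would control this trace using \Cref{lem.technical}. Since $\cN_\lambda$ is moment-limited (its environment state has finite average photon number, so $\cN_\lambda^{\otimes m}$ maps finite-energy states to finite-energy states), the operator $A\coloneqq(\cN_\lambda^{\otimes m}\otimes\id_R)(|\varphi\rangle\langle\psi|)$ admits a decomposition $A=\sum_\ell\alpha_\ell|e_\ell\rangle\langle f_\ell|$ with $\alpha_\ell\ge0$, $\sum_\ell\alpha_\ell\le1$, and $|e_\ell\rangle,|f_\ell\rangle\in\dom(\sqrt N\otimes I_R)$. Each summand satisfies $|\langle f_\ell|[R_j,X]\otimes I_R|e_\ell\rangle|\le\|\nabla_j X\|\le\|\nabla X\|$ directly from the definition of the bosonic Lipschitz constant, so the trace is bounded in absolute value by $\|\nabla X\|$ and the matrix element above by $\sqrt\lambda\,\|\nabla X\|$. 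Taking the supremum over $j$, $R_j$, $R$, and the two pure states gives $\|\nabla(\cN_\lambda^{\otimes m})^\dagger(X)\|\le\sqrt\lambda\,\|\nabla X\|$, which is the desired bound. (A slightly shorter alternative is to first apply \Cref{tensorization} to reduce to the single-mode coefficient $\|\cN_\lambda^\dagger\otimes\id\|_{\nabla_j\to\nabla_j}$ and then run the one-mode version of the same argument; I would present whichever is more compact.)

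The step I expect to require the most care is the bookkeeping of unbounded operators: $R_j$, $N$, and hence $[R_j,X]$ are unbounded, so one must verify that the intertwining identities genuinely hold on $\dom(\sqrt N\otimes I_R)$ — this is precisely what \Cref{techlemma} and \eqref{eq:multimodeintertwining} supply — and that the rank-one vectors produced by \Cref{lem.technical} lie in the same domain, which is exactly where the moment-limited property of $\cN_\lambda$ is essential. Everything else is routine manipulation with the CCR and the definitions of $\|\nabla\cdot\|$ and $W_{\operatorname{B}}$.
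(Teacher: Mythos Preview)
Your proposal is correct and follows essentially the same route as the paper: reduce to the dual Lipschitz bound via~\eqref{eq:duality}, use the $\sqrt\lambda$-intertwining of quadrature commutators with $(\cN_\lambda^{\otimes m})^\dagger$, and then control the resulting trace by decomposing $(\cN_\lambda^{\otimes m}\otimes\id_R)(|\varphi\rangle\langle\psi|)$ via \Cref{lem.technical} using moment-limitedness of~$\cN_\lambda$. The only cosmetic difference is that the paper redoes the intertwining step directly from the beamsplitter relations~\eqref{eq:commutator} rather than quoting~\eqref{eq:multimodeintertwining} and linearly combining for $Q_j,P_j$ as you do; the substance is identical.
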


\begin{proof}
	By \eqref{eq:duality}, it is enough to show that $\|\nabla  \cN_\Lambda^\dagger(X)\|\le \sqrt{\lambda}\|\nabla  X\|$ for any bounded, Lipschitz observable $X$. For any reference system $R$, any $j\in \Lambda$ and all $|\psi\rangle,|\varphi\rangle \in\operatorname{dom}(\sqrt{N}\otimes I_R)$, we have
	\begin{align*}
		\langle\varphi| [Q_j, (\cN_\Lambda)^\dagger (X)]\otimes I_R|\psi\rangle&= \tr[(\cN_\Lambda)^\dagger(X)\otimes I_R|\psi\rangle\langle \varphi| Q_j\otimes I_R]-\tr[Q_j \otimes I_R|\psi\rangle\langle \varphi| (\cN_\Lambda)^\dagger (X)\otimes I_R]\\
		&=\tr[X\otimes I_R(\cN_\Lambda\otimes \id_R)(|\psi\rangle\langle\varphi| Q_j\otimes I_R)]-\tr[(\cN_\Lambda\otimes \id_R)(Q_j\otimes I_R|\psi\rangle\langle\varphi|)X\otimes I_R]\\
		&=\sqrt{\lambda}\,\tr\big[(\cN_\Lambda\otimes \id_R)(|\psi\rangle\langle \varphi|)[Q_j,X]\otimes I_R\big]\,,
	\end{align*}
	where the last identity follows from the commutation relations \eqref{eq:commutator}. The same holds for $\langle \varphi| [P_j,(\cN_\Lambda)^{\dagger}(X)]\otimes I_R|\psi\rangle$. Now, since $\cN_\Lambda$ is moment-limited (see \cite{Shirokov2020}), we have by \Cref{lem.technical} that there exist two sequences of orthogonal, normalized vectors $\{|e_i\rangle\}$ and $\{|f_i\rangle\}$ in the domain of $\sqrt{N}\otimes I_R$, as well as a sequence $\alpha=\{\alpha_i\}$ of positive numbers in $\ell_1$ with $\| \alpha\|_{\ell_1}\le 1$ such that $(\cN_\Lambda\otimes \id_R)(|\psi\rangle\langle \varphi|)=\sum_i\,\alpha_i|e_i\rangle\langle f_i|$. Therefore, we have
	\begin{align}
		\|\nabla \cN_\Lambda^\dagger(X)\|\le\,\sqrt{\lambda}\max_{j} \max_{R_j\in\{Q_j,P_j\}} \sup_R\sup_{|\psi\rangle,|\varphi\rangle}  \sum_i\,\alpha_i\,|\langle f_i|[R_j,X]\otimes I_R|e_i\rangle|
		\le \sqrt{\lambda}\,\|\nabla  X\|\,.\label{eq:upperboundlambdanableX}
	\end{align}

\end{proof}

\subsection{Boundedness under $\mathsf{GKP}$ error correction}

In this section, we control the growth of the Wasserstein distance due to the error correction schemes for the $\mathsf{GKP}$ code. We start by Steane's quantum error correction procedure which is illustrated in \Cref{fig:steane}: when acting on a mode $j\in\Lambda$, it corresponds to the map
\begin{align*}
	\Phi^{\operatorname{St}}_j(\rho_\Lambda)= \int  \tr_{B_jC_j}\Big[\big(I_{\Lambda}\otimes m_{\hat{P}}(p)_{B_j}\otimes m_{\hat{Q}}(q)_{C_j}\big)\mathcal{V}^{(q,p)}_{\Lambda_j}\Big(  \mathcal{U}^{\operatorname{CNOT}}_{\Lambda_jC_j}\circ \mathcal{U}^{\operatorname{CNOT}}_{B_j\Lambda_j} (\rho_\Lambda\otimes \sigma_{B_jC_j})\Big)\Big]\,dq\,dp
\end{align*}
where we write $\Lambda_j\coloneqq \{j\}$, $\mathcal{V}^{(q,p)}\coloneqq e^{iq{P}}e^{-ip{Q}}(.) e^{ip{Q}}e^{-iq{P}}$ and $\sigma_{B_jC_j}\coloneqq  |\widetilde{0}\rangle\langle \widetilde{0}|_{B_j}\otimes |\widetilde{+}\rangle\langle \widetilde{+}|_{C_j}$. Using the abbreviations~$\mathcal{U}_{\Lambda_jB_jC_j}\coloneqq \mathcal{U}^{\operatorname{CNOT}}_{\Lambda_jC_j}\circ \mathcal{U}^{\operatorname{CNOT}}_{B_j\Lambda_j}$ and $m(q,p)\coloneqq  m_{\hat{P}}(p)\otimes m_{\hat{Q}}(q)$, we have
\begin{align}
	\Phi^{\operatorname{St}}_j(\rho_\Lambda)= \int  \tr_{B_jC_j}\Big[\big(I_{\Lambda}\otimes m(q,p)_{B_jC_j}\big)\mathcal{V}^{(q,p)}_{\Lambda_j} \mathcal{U}_{\Lambda_jB_jC_j}(\rho_\Lambda\otimes \sigma_{B_jC_j})\Big)\Big]\,dq\,dp\ .
\end{align}

\begin{prop}\label{prop:ECW1}
	The quantum channel $\Phi^{\operatorname{St}}$ satisfies
	\begin{align*}
		\|(\Phi^{\operatorname{St}})^{\otimes m}\|_{W_{\operatorname{B}}\to W_{\operatorname{B}}}\le 2\,.
	\end{align*}
\end{prop}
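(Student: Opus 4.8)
The plan is to bound the dual quantity $\|\big((\Phi^{\operatorname{St}})^{\otimes m}\big)^\dagger\|_{\nabla\to\nabla}$ and then invoke the duality inequality \eqref{eq:duality} together with the tensorization of the bosonic Lipschitz contraction coefficient, \Cref{tensorization}; this reduces the claim to the single-mode estimate $\|\nabla_j(\Phi^{\operatorname{St}}_j)^\dagger(X)\|\le 2\,\|\nabla_j X\|$ for every mode $j$ and every bounded self-adjoint $X$ (with spectators $[m]\setminus\{j\}$ and an arbitrary reference system carried along throughout). To organize the computation I would factor $\Phi^{\operatorname{St}}_j=\mathcal{M}_j\circ\mathcal{U}_j\circ\mathcal{A}_j$ into three moment-limited channels: $\mathcal{A}_j(\rho)=\rho\otimes\sigma_{B_jC_j}$ appends the finite-energy $\mathsf{GKP}$ ancillae; $\mathcal{U}_j\coloneqq\mathcal{U}^{\operatorname{CNOT}}_{\Lambda_jC_j}\circ\mathcal{U}^{\operatorname{CNOT}}_{B_j\Lambda_j}$ is the entangling stage; and $\mathcal{M}_j$ measures $B_j$ in the $\hat P$-quadrature and $C_j$ in the $\hat Q$-quadrature with the Gaussian POVMs of \eqref{homodyneapprox}--\eqref{homodyneapproxbis}, applies the feedback displacement $\mathcal{V}^{(q,p)}_{\Lambda_j}$ conditioned on the outcomes $(q,p)$, and discards $B_j,C_j$, so that $\mathcal{M}_j^\dagger(X)=\int \mathcal{V}^{(q,p)\dagger}_{\Lambda_j}(X)\otimes m_{\hat P}(p)_{B_j}\otimes m_{\hat Q}(q)_{C_j}\,dq\,dp$. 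I then control the bosonic Lipschitz contraction of each of the three factors separately.

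The key point is that the measure-and-correct stage $\mathcal{M}_j$ is in fact \emph{non-expansive}, i.e.\ $\|\nabla_{\Lambda_jB_jC_j}\mathcal{M}_j^\dagger(X)\|\le\|\nabla_jX\|$. Since $\mathcal{V}^{(q,p)}_{\Lambda_j}$ is a displacement on mode $j$, conjugating by it only shifts $Q_j$ and $P_j$ by scalars, so $[R_j,\mathcal{M}_j^\dagger(X)]=\mathcal{M}_j^\dagger([R_j,X])$ for $R_j\in\{Q_j,P_j\}$; evaluating on pure states of finite photon number and using moment-limitedness of $\mathcal{M}_j$ to apply \Cref{lem.technical} gives $\|\nabla_j\mathcal{M}_j^\dagger(X)\|\le\|\nabla_jX\|$. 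For the ancilla directions I would use the structure of the homodyne POVMs: $[P_{B_j},m_{\hat P}(p)_{B_j}]=0=[Q_{C_j},m_{\hat Q}(q)_{C_j}]$ kills the corresponding commutators of $\mathcal{M}_j^\dagger(X)$, while $[Q_{B_j},m_{\hat P}(p)_{B_j}]=-i\,\partial_p m_{\hat P}(p)_{B_j}$ (and analogously for $C_j$). An integration by parts in $p$ — whose boundary terms vanish because the Gaussian $m_{\hat P}(p)_{B_j}$ annihilates any finite-photon vector as $p\to\pm\infty$ — transfers the derivative onto $\mathcal{V}^{(q,p)}_{\Lambda_j}$, and since $\partial_p\mathcal{V}^{(q,p)\dagger}_{\Lambda_j}(X)=i\,\mathcal{V}^{(q,p)\dagger}_{\Lambda_j}([Q_j,X])$ this produces the identities $[Q_{B_j},\mathcal{M}_j^\dagger(X)]=-[Q_j,\mathcal{M}_j^\dagger(X)]$ and $[P_{C_j},\mathcal{M}_j^\dagger(X)]=-[P_j,\mathcal{M}_j^\dagger(X)]$; hence every direction of $\nabla$ on $\mathcal{M}_j^\dagger(X)$ is bounded by $\|\nabla_jX\|$. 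The factor of $2$ then comes entirely from the entangling stage: from the $\mathsf{CNOT}$ quadrature relations \eqref{eq:CNOT} one computes, writing $U_j$ for the unitary implementing $\mathcal{U}_j$, that $U_jQ_jU_j^\dagger=Q_j-Q_{B_j}$ and $U_jP_jU_j^\dagger=P_j+P_{C_j}$, whence $[Q_j,\mathcal{U}_j^\dagger(Y)]=U_j^\dagger[Q_j-Q_{B_j},Y]U_j$ gives $\|\nabla_j\mathcal{U}_j^\dagger(Y)\|\le\|\nabla_jY\|+\|\nabla_{B_j}Y\|\le 2\,\|\nabla_{\Lambda_jB_jC_j}Y\|$, and symmetrically for $P_j$ using $C_j$. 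Finally, appending a fixed finite-energy state is non-expansive on $\|\nabla_j\|$, since $[R_j,\mathcal{A}_j^\dagger(Z)]=\tr_{B_jC_j}\big[(I_{\Lambda_j}\otimes\sigma_{B_jC_j})[R_j,Z]\big]$ and the partial trace against a state does not increase the operator norm.

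Composing the three estimates yields $\|\nabla_j(\Phi^{\operatorname{St}}_j)^\dagger(X)\|\le\|\nabla_j\mathcal{U}_j^\dagger\mathcal{M}_j^\dagger(X)\|\le 2\,\|\nabla_{\Lambda_jB_jC_j}\mathcal{M}_j^\dagger(X)\|\le 2\,\|\nabla_jX\|$, and by \Cref{tensorization} and \eqref{eq:duality} this is the assertion. I expect the genuine difficulty to be analytic rather than structural: one has to verify that $\mathcal{M}_j$ (hence $\Phi^{\operatorname{St}}_j$) is moment-limited — this holds because $\tr[\mathcal{M}_j(\tau)N_{\Lambda_j}]=\tr[\tau\,K_j]$ with $K_j=\tfrac12\big((Q_j-Q_{C_j})^2+(P_j-P_{B_j})^2\big)+\tfrac12(\alpha_q+\alpha_p-1)$ a quadratic form dominated by a multiple of $N_{\Lambda_jB_jC_j}+I$ — and one must make the integration by parts rigorous by working throughout with matrix elements between fixed pure states in $\dom(\sqrt{N}\otimes I_R)$, checking that every commutator that appears is a genuine bounded operator (this uses $\|\nabla X\|\le 1$, which forces $\|[R_j,X]\|_\infty\le 1$) and that the $p$- and $q$-integrals, together with their boundary terms, behave as claimed.
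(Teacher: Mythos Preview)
Your proposal is correct and follows essentially the same approach as the paper. Both arguments reduce to the single-mode estimate via \eqref{eq:duality} and \Cref{tensorization}, identify the factor $2$ as coming from the CNOT quadrature relations~\eqref{eq:CNOT}, and handle the ancilla-direction commutators by the same integration-by-parts identity (your $[Q_{B_j},\mathcal{M}_j^\dagger(X)]=-[Q_j,\mathcal{M}_j^\dagger(X)]$, etc.), which the paper isolates as \Cref{lem:Pmoveeq}, with the boundary terms controlled via the Gaussian decay of $m_{\hat P},m_{\hat Q}$ against finite-photon-number vectors. The only difference is organizational: you factor $\Phi^{\operatorname{St}}_j=\mathcal{M}_j\circ\mathcal{U}_j\circ\mathcal{A}_j$ and bound each stage separately, whereas the paper computes $[\,\overline{\mathcal{U}}(R_j),\Phi_j^\dagger(X)]$ directly for the composed map (their $\Phi_j$ is your $\mathcal{M}_j$); your modular presentation makes the source of the constant $2$ more transparent, while the paper's direct computation yields the sharper pointwise identities $\langle\varphi|[P_j,\Phi^{\operatorname{St}\dagger}_j(X)]|\psi\rangle=2\Delta_j$ and the analogous expression for $Q_j$.
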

To prove \Cref{prop:ECW1}, we argue once again by duality. We need the adjoint map $\Phi^{\operatorname{St}}_j$ 
\begin{align}
	(\Phi^{\operatorname{St}}_j)^\dagger(X)&=\int \tr_{B_jC_j}\left(
	(I_\Lambda\otimes \sigma_{B_jC_j})\cU_{\Lambda_jB_jC_j}^\dagger((\mathcal{V}^{(q,p)}_{\Lambda_j})^\dagger  (X)\otimes m(q,p)_{B_jC_j})\right)
	dq\,dp\,\nonumber\\
	&= \tr_{B_jC_j}\left((I_\Lambda\otimes \sigma_{B_jC_j}) 
	\cU_{\Lambda_jB_jC_j}^\dagger \Phi_j^\dagger(X)
	\right)\label{eq:adjointmapexplicit}
\end{align}
where we introduced the completely positive  unital map  
\begin{align*}
	\Phi^\dagger_j:X\mapsto \, \int\,(\mathcal{V}_{\Lambda_j}^{(q,p)})^\dagger (X)\otimes m(q,p)_{B_jC_j}\,dq\,dp\,.
\end{align*}
By \Cref{tensorization}, it is enough to consider the maps $\Phi^{\operatorname{St}\dagger}_j$ and commutations with $P_j$ and $R_j$. Inserting the expression~\eqref{eq:adjointmapexplicit} for $\Phi^{\operatorname{St}\dagger}$ and using the abbreviation
\begin{align*}
	Y_{\Lambda B_jC_j}&\coloneqq \cU_{\Lambda_jB_jC_j}^\dagger \Phi_j^\dagger(X)\,,\end{align*}
this takes the form
\begin{align*}
	\langle \varphi|\left(\left[P_j,\Phi^{\operatorname{St}\dagger}_j(X)\right]\otimes I_R\right)\ket{\psi}&=\tr\left(\left[P_j,\tr_{B_jC_j}\left((I_\Lambda\otimes\sigma_{B_jC_j})Y_{\Lambda B_jC_j}\right)\right]\otimes I_R \,\ket{\psi}\bra{\varphi}\right)\\
	&=\tr\left(\left([P_j\otimes I_{B_jC_j},Y_{\Lambda B_jC_j}]\otimes I_R\right)(\sigma_{B_jC_j}\otimes \ket{\psi}\bra{\varphi})\right)
\end{align*}
where we used that $\tr_{B_jC_j}\left((I_\Lambda\otimes\sigma_{B_jC_j})Y_{\Lambda B_jC_j}\right)=\tr_{B_jC_j}\left(Y_{\Lambda B_jC_j}(I_\Lambda\otimes\sigma_{B_jC_j})\right)$ in the second identity. That is, we have 
\begin{align}
	\langle \varphi|\left(\left[P_j,\Phi^{\operatorname{St}\dagger}_j(X)\right]\otimes I_R\right)\ket{\psi}=
	\tr\left(\left(\left[P_j\otimes I_{B_jC_j},\cU^\dagger_{\Lambda_jB_jC_j}\Phi_j^\dagger(X)\right]\otimes I_R\right)(\sigma_{B_jC_j}\otimes 
	\ket{\psi}\bra{\varphi})
	\right)\nonumber\\
	=\tr\left(\left(\left[\cU_{\Lambda_jB_jC_j}(P_j\otimes I_{B_jC_j}),\Phi_j^\dagger(X)\right]\otimes I_R\right)
	\cU_{\Lambda_jB_jC_j}
	(\sigma_{B_jC_j}\otimes 
	\ket{\psi}\bra{\varphi})
	\right)\nonumber\\
	=\tr\left(\left(\left[P_j\otimes I_{B_jC_j}-P_{C_j}\otimes I_{\Lambda B_j},\Phi_j^\dagger(X)\right]\otimes I_R\right)
	\cU_{\Lambda_jB_jC_j}
	(\sigma_{B_jC_j}\otimes 
	(\ket{\psi}\bra{\varphi})
	\right)\label{eq:expansionexprone}
\end{align}
where we applied the identity~
\begin{align}
	\tr\left([A,\cU^\dagger(B)]C\right)=\tr\left([\cU(A),B)]\cU(C)\right)\label{eq:cyclicunitaryprop}
\end{align} for conjugation~$\cU(\cdot)=U\cdot U^\dagger$ by a unitary~$U$ in the second step, and used the definition of~$\cU_{\Lambda_jB_jC_j}$ (and the corresponding action of~$\operatorname{CNOT}$ defined in \Cref{eq:CNOT}) in the last step.

\begin{proof}[Proof of \Cref{prop:ECW1}]
	By \eqref{eq:duality} and \Cref{tensorization}, it is enough to show that $\|\Phi^{\operatorname{St}\dagger}_j\otimes \id_{[m]\backslash \{j\}}\|_{\nabla_j\to\nabla_j }\le 2$ for each $j\in[m]$. For sake of clarity, we will omit the identity $\id_{[m]\backslash \{j\}}$ in the following. Let $X$ be a self-adjoint, bounded operator on $\cH_m$ satisfying the constraints in the definition of $W_{\operatorname{B}}$ and fix a reference system $R$. For any two states $|\psi\rangle,|\varphi\rangle\in\operatorname{dom}(\sqrt{N}\otimes I_R)$, we denote $  (U_{\Lambda_jB_jC_j}\otimes I_R)\sigma_{B_jC_j}\otimes |\psi\rangle\langle\varphi|(U_{\Lambda_jB_jC_j}\otimes I_R)^{\dagger }=|\psi'\rangle\langle \varphi'|$, where $|\psi'\rangle$ and $|\varphi'\rangle$ are in the domain of $\sqrt{N_{B_jC_j}+N_{\Lambda}}\otimes I_R$ \cite{Shirokov2020}. Therefore,  by~\eqref{eq:expansionexprone}, we obtain
	\begin{align}
		\langle\varphi|[P_j,\Phi^{\operatorname{St}\dagger}_j (X)]\otimes I_R|\psi\rangle&=\Delta_j-\Delta_j'\label{eq:varphipsipjPhist}
	\end{align}
	where
	\begin{align}
		\Delta_j&= \langle \varphi'| ([P_j\otimes I_{B_jC_j},\Phi_j^\dagger(X)]\otimes I_R)|\psi'\rangle\label{eq:deltajdef}\\
		\Delta'_j&=\langle \varphi'| ([P_{C_j}\otimes I_{\Lambda B_j},\Phi_j^\dagger(X)]\otimes I_R)|\psi'\rangle\label{eq:deltajprimedef}\,.
	\end{align}

	We first analyze the expression~$\Delta_j$. Observe that, again by~\eqref{eq:cyclicunitaryprop} we have that
	\begin{align*}
		\langle \varphi'| ([P_j\otimes I_{B_jC_j},\Phi_j^\dagger(X)]\otimes I_R)|\psi'\rangle
		&=\int  \langle \varphi'| \left([P_j\otimes I_{B_jC_j},(\mathcal{V}^{(q,p)}_{\Lambda_j})^\dagger(X)\otimes m(q,p)_{B_jC_j}]\otimes I_R\right) |\psi '\rangle
		dq\,dp\\
		&=\int  \langle \varphi'| (\mathcal{V}^{(q,p)}_{\Lambda_j})^\dagger\left([\mathcal{V}^{(q,p)}_{\Lambda_j}(P_j)\otimes I_{B_jC_j},X\otimes m(q,p)_{B_jC_j}]\otimes I_R\right) |\psi'\rangle
		dq\,dp\ .
	\end{align*}
	Since $\mathcal{V}^{(q,p)}_{\Lambda_j}(P_j)=P_j+p I_{\Lambda}$ for all $(q,p)$, it follows that 
	\begin{align*}
		\Delta_j&=\langle \varphi'| ([P_j\otimes I_{B_jC_j},\Phi_j^\dagger(X)]\otimes I_R)|\psi'\rangle
		=\int  \langle \varphi'| (\mathcal{V}^{(q,p)}_{\Lambda_j})^\dagger\left([P_j\otimes I_{B_jC_j},X\otimes m(q,p)_{B_jC_j}]\otimes I_R\right) |\psi'\rangle
		dq\,dp\\
		&=\int  \langle \varphi'|\left( (\mathcal{V}^{(q,p)}_{\Lambda_j})^\dagger\left([P_j,X]\right)\otimes m(q,p)_{B_jC_j}\otimes I_R\right) |\psi'\rangle
		dq\,dp\\
		&=\langle \varphi'|(\Phi_j^\dagger([P_j,X])\otimes I_R)|\psi'\rangle\ .
	\end{align*}
	Now, by assumption, the channel $\Phi_j$ is moment-limited by \Cref{Steanemoment-limited}, so we can use \Cref{lem.technical} to argue that, for there exist a family $\{\alpha_{v}\}_v$ of positive numbers in $\ell_1$ with $\|v\mapsto \alpha_{v}\|_{\ell_1}\le 1$, and two families $\{|e_{v}\rangle\}_v$ and $\{|f_{v}\rangle\}_v$ of orthonormal vectors in $\operatorname{dom}(\sqrt{N_\Lambda}\otimes I_R)$ such that
	\begin{align}
		\Delta_j=\sum_{v}\, \alpha_{v}\,\langle f_{v}|([P_j,X]\otimes I_R)|e_{v}\rangle \,.\label{eq:Deltaj}
	\end{align}
	We next analyze the expression~$\Delta'_j$. We have 
	\begin{align}
		\left[P_{C_j}\otimes I_{\Lambda B_j},\Phi_j^\dagger(X)\right]\otimes I_R&=\int \left(\left[P_{C_j}\otimes I_{\Lambda B_j}, (\cV^{(q,p)}_{\Lambda_j})^\dagger(X)\otimes m(q,p)_{B_jC_j}\right]\otimes I_R\right)dq\,dp\nonumber\\
		&=\int \left((\cV^{(q,p)}_{\Lambda_j})^\dagger(X)\otimes \left[P_{C_j}\otimes I_{B_j}, m(q,p)_{B_jC_j}\right]\otimes I_R\right)dq\,dp\nonumber\\
		&=\int \left(e^{ipQ_j}e^{-iqP_j}Xe^{iqP_j}e^{-ipQ_j}\otimes m_{\hat{P}}(p)_{B_j}\otimes \left[P_{C_j}, m_{\hat{Q}}(q)_{C_j}\right]\otimes I_R\right)dq\,dp\nonumber\\
		&=\int
		\left(\Psi_j^\dagger(e^{-iqP_j}Xe^{iqP_j})_{\Lambda B_j}\otimes \left[P, m_{\hat{Q}}(q)\right]_{C_j}\otimes I_R\right)dq\ ,\label{eq:Pcjbjexpr}
	\end{align}
	where we introduced the completely positive unital map
	\begin{align*}
		\Psi_j^\dagger :X\mapsto \,\int e^{ipQ_j}Xe^{-ipQ_j}\otimes m_{\hat{P}}(p)_{B_j}\,dp\,.
	\end{align*}
	Once again, $\Psi_j$ is moment-limited by \Cref{Steanemoment-limited}, so we can use \Cref{lem.technical} and Eq.~\eqref{eq:Pcjbjexpr} to argue that there exist a family $\{\beta_{v}\}_v$ of positive numbers and two families $\{|g_{v}\rangle\}_v$ and $\{|h_{v}\rangle\}_v$ of orthogonal vectors in $\operatorname{dom}(\sqrt{N_\Lambda+N_{C_j}}\otimes I_R)$ such that 
	\begin{align*}
		\Delta_j'
		&=\sum_v\beta_{v}\,
		\int_{\mathbb{R}}    \langle h_{v}|\left(e^{-iqP_j}Xe^{iqP_j}\otimes [P,m_{\hat{Q}}(q)]_{C_j}\otimes I_R\right)|g_{v}\rangle\,dq %\qquad \text{ with }\qquad \|i\mapsto \beta''_i\|_{u_1}\le 1\,.
	\end{align*}
	with $\|v\mapsto \beta_{v}\|_{\ell_1}\le 1$.
	Lemma~\ref{lem:Pmoveeq} below gives an alternative expression for the integral, which allows us to rewrite this as 
	\begin{align*}
		\Delta_j'
		&=-\sum_v\beta_{v}\,
		\int_{\mathbb{R}}  \langle h_{v}|\,\left(e^{-iqP_j}[P_j,X]e^{iqP_j}\otimes m_{\hat{Q}}(q)_{C_j}\otimes I_R\right)|g_{v}\rangle\,dq\\
		&=-\langle \varphi'|\left(\Phi_j^\dagger\big([P_j,X]\big)\otimes I_R\right)|\psi'\rangle\\
		&=-\Delta_j
	\end{align*}
	where we re-summed over $v$ in the second identity above. Therefore, we have arrived at the following identity
	\begin{align}\label{eq111}
		\langle \varphi|\,\big[P_j,\Phi^{\operatorname{St}\dagger}_j(X)\big]\otimes I_R|\psi\rangle= 2\Delta_j=2\sum_{v}\alpha_{v}\,\langle f_{v}|[P_j,X]\otimes I_R|e_{v}\rangle\,.
	\end{align}

	One can also show a similar expression for $ |\langle\varphi|\left([Q_j,\Phi^{\operatorname{St}\dagger}_j (X)]\otimes I_R\right)|\psi\rangle|$: following the steps leading to \Cref{eq:expansionexprone}, we find that 
	\begin{align}
		&\langle \varphi|\left(\left[P_j,(\Phi^{\operatorname{St}\dagger})^{\otimes m}(X)\right]\otimes I_R\right)\ket{\psi}=\Gamma_j+\Gamma_j'\,,
	\end{align}
	where
	\begin{align}
		&\Gamma_j =\langle \varphi'|\,\big(\big[Q_j\otimes I_{B_jC_j},\Phi_j^\dagger(X)\big]\otimes I_R\big)|\psi'\rangle\\
		&\Gamma_j'=\langle \varphi'|\,\big(\big[Q_{B_j}\otimes I_{\Lambda C_j},\Phi_j^\dagger(X)\big]\otimes I_R\big)|\psi'\rangle\,.\label{Gammajprime}
	\end{align}
	Then, a similar analysis to that performed for $\Delta_j$, using that $\cV^{(q,p)}_{\Lambda_j}(Q_j)=Q_j+qI_\Lambda$, leads to
	\begin{align*}
		\Gamma_j= \langle \varphi'|\,\big(\Phi_j^\dagger([Q_j,X])\otimes I_R\big)|\psi'\rangle\,.
	\end{align*}
	Similarly, the analysis of $\Gamma_j'$ follows that of $\Delta_j'$. We have
	\begin{align}
		\left[Q_{B_j}\otimes I_{\Lambda C_j},\Phi_j^\dagger(X)\right]\otimes I_R&=\int \left(\left[Q_{B_j}\otimes I_{\Lambda C_j}, (\cV^{(q,p)}_{\Lambda_j})^\dagger(X)\otimes m(q,p)_{B_jC_j}\right]\otimes I_R\right)dq\,dp\nonumber\\
		&=\int \left((\cV^{(q,p)}_{\Lambda_j})^\dagger(X)\otimes \left[Q_{B_j}\otimes I_{C_j}, m(q,p)_{B_jC_j}\right]\otimes I_R\right)dq\,dp\nonumber\\
		&=\int \left(e^{ipQ_j}e^{-iqP_j}Xe^{iqP_j}e^{-ipQ_j}\otimes m_{\hat{Q}}(q)_{C_j}\otimes \left[Q_{B_j}, m_{\hat{P}}(p)_{B_j}\right]\otimes I_R\right)dq\,dp\nonumber\\
		&=\int
		\left(\Upsilon_j^\dagger(e^{ipQ_j}Xe^{-ipQ_j})_{\Lambda C_j}\otimes \left[Q, m_{\hat{P}}(p)\right]_{B_j}\otimes I_R\right)dp\ ,
	\end{align}
	where we introduced the completely positive unital map
	\begin{align*}
		\Upsilon_j^\dagger :X\mapsto \,\int e^{-iqP_j}Xe^{iqP_j}\otimes m_{\hat{Q}}(q)_{C_j}\,dq\,.
	\end{align*}
	Once again, $\Upsilon_j$ is moment-limited by \Cref{Steanemoment-limited}, so we can use \Cref{lem.technical} and Eq.~\eqref{eq:Pcjbjexpr} to argue that there exist a family $\{\gamma_{v}\}_v$ of positive numbers and two families $\{|p_{v}\rangle\}_v$ and $\{|q_{v}\rangle\}_v$ of orthogonal vectors in $\operatorname{dom}(\sqrt{N_\Lambda+N_{B_j}}\otimes I_R)$ such that 
	\begin{align*}
		\langle \varphi'|\left(\left[Q_{B_j}\otimes I_{\Lambda C_j},\Phi_j^\dagger(X)\right]\otimes I_R\right)|\psi'\rangle
		&=\sum_v\gamma_{v}\,
		\int_{\mathbb{R}}    \langle q_{v}|\left(e^{ipQ_j}Xe^{-ipQ_j}\otimes [Q,m_{\hat{P}}(p)]_{B_j}\otimes I_R\right)|p_{v}\rangle\,dp
	\end{align*}
	with $\|v\mapsto \gamma_{v}\|_{\ell_1}\le 1$.
	Lemma~\ref{lem:Pmoveeq} below gives an alternative expression for the integral, which allows us to rewrite this as 
	\begin{align*}
		\langle \varphi'|\left(\left[Q_{B_j}\otimes I_{\Lambda C_j},\Phi_j^\dagger(X)\right]\otimes I_R\right)|\psi'\rangle
		&=-\sum_v\gamma_{v}\,
		\int_{\mathbb{R}}  \langle q_{v}|\,\left(e^{ipQ_j}[Q_j,X]e^{-ipQ_j}\otimes m_{\hat{P}}(p)_{B_j}\otimes I_R\right)|p_{v}\rangle\,dq\\
		&=-\langle \varphi'|\left(\Phi_j^\dagger\big([Q_j,X]\big)\otimes I_R\right)|\psi'\rangle\,,
	\end{align*}
	where we re-summed over $v$ in the second identity above. By inserting this expression into~\eqref{Gammajprime},  we obtain 
	\begin{align}
		\Gamma_j'&=-\langle \varphi'|\left(\Phi_j^\dagger\big([Q_j,X]\big)\otimes I_R\right)|\psi'\rangle=-\Gamma_j\ .
	\end{align}
	Therefore, 
	\begin{align}\label{eq:333}
		\langle \varphi|\,\big[Q_j,\Phi^{\operatorname{St}\dagger}_j(X)\big]\otimes I_R|\psi\rangle=0\,.
	\end{align}
	By combining \eqref{eq111} and \eqref{eq:333}, we have  
	\begin{align*}
		\|\nabla_j \Phi^{\operatorname{St}\dagger}_j(X)\|&=\,\max_{R_j\in\{Q_j,P_j\}}\sup_{R}\,\sup_{|\psi\rangle,|\varphi\rangle}\big|\langle\varphi|[R_j,\Phi^{\operatorname{St}\dagger}_j(X)]\otimes I_R|\psi\rangle\big|\\
		&=2\sup_{R}\,\sup_{|\psi\rangle,|\varphi\rangle}\,|\Delta_j|\\
		&=2\,\sup_{|\psi\rangle,|\varphi\rangle}\, \left|\sum_{v}\, \alpha_{v}\,\langle f_{v}|([P_j,X]\otimes I_R)|e_{v}\rangle\right|\\
		&\overset{(1)}{\le} 2\,\sup_{R'}\sup_{|\psi'\rangle, |\varphi'\rangle}\,\,\big|\langle \varphi'|\,[P_j,X]\otimes I_{R'}|\psi'\rangle\big|\\
		&\le 2\,\|\nabla_j X\|\,,
	\end{align*}
	where $(1)$ follows from choosing $R'=RV$, $|\varphi'\rangle=\sum_{v}\alpha_{v}\,|f_{v}\rangle\otimes |v\rangle_{V}$ and $|\psi'\rangle=\sum_{v}\alpha_{v}\,|e_{v}\rangle\otimes |v\rangle_{V}$.
\end{proof}

\begin{lemma}\label{lem:Pmoveeq}
	Let $\ket{e},\ket{f} \in \operatorname{dom}(\sqrt{N_\Lambda+N_{C_j}}\otimes I_R)$. Then 
	\begin{align*}
		&\int_{\mathbb{R}}\langle f|\left(e^{-iqP_j}X e^{iqP_j}\otimes [P,m_{\hat{Q}}(q)]_{C_j}\otimes I_R\right)|e\rangle\,dq \,=-
		\int_{\mathbb{R}} \langle f|\left(\,e^{-iqP_j}[P_j,X]e^{iqP_j}\otimes m_{\hat{Q}}(q)_{C_j}\,\otimes I_R\right)|e \rangle\,dq\  .
	\end{align*}
	Similarly, for any $|e\rangle,|f\rangle\in\operatorname{dom}(\sqrt{N_\Lambda+N_{B_j}}\otimes I_R)$:
	\begin{align*}
		&\int_{\mathbb{R}}    \langle f|\left(e^{ipQ_j}Xe^{-ipQ_j}\otimes [Q,m_{\hat{P}}(p)]_{B_j}\otimes I_R\right)|e\rangle\,dp\,=-\int_{\mathbb{R}}\,\langle f|\,\left(e^{ipQ_j}[Q_j,X]e^{-ipQ_j}\otimes m_{\hat{P}}(p)_{B_j}\otimes I_R\ \right)|e\rangle\,dp
	\end{align*}
	
\end{lemma}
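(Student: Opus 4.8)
The plan is to obtain both identities by integration by parts in the shift variable, exploiting that the POVM densities $m_{\hat Q}(q),m_{\hat P}(p)$ of \eqref{homodyneapprox}--\eqref{homodyneapproxbis} are Gaussians centered at $q$, respectively $p$. Concretely, writing $m_{\hat Q}(q)=g_q(\hat Q)$ with $g_q(x)=\tfrac{1}{2\pi\sqrt{\alpha_q}}e^{-(x-q)^2/(2\alpha_q)}$ and using $[P,f(\hat Q)]=-i\,f'(\hat Q)$ (a consequence of $[\hat Q,\hat P]=iI$) together with $g_q'=-\partial_q g_q$, one gets the key relation $[P,m_{\hat Q}(q)]=i\,\partial_q m_{\hat Q}(q)$; symmetrically, $[\hat Q,f(\hat P)]=i\,f'(\hat P)$ gives $[Q,m_{\hat P}(p)]=-i\,\partial_p m_{\hat P}(p)$. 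One also differentiates the conjugations: $\partial_q\big(e^{-iqP_j}Xe^{iqP_j}\big)=-i\,e^{-iqP_j}[P_j,X]e^{iqP_j}$ and $\partial_p\big(e^{ipQ_j}Xe^{-ipQ_j}\big)=i\,e^{ipQ_j}[Q_j,X]e^{-ipQ_j}$.

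With these in hand, I would substitute $[P,m_{\hat Q}(q)]_{C_j}=i\,\partial_q m_{\hat Q}(q)_{C_j}$ into the left-hand side of the first identity, integrate by parts in $q$ (the boundary term at $q=\pm\infty$ vanishes since $\|m_{\hat Q}(q)\|$ decays while the conjugated operator is bounded by $\|X\|_\infty$), and move the $q$-derivative onto $e^{-iqP_j}Xe^{iqP_j}$; the factor $i\cdot(-i)=1$ then reproduces exactly $-\int_{\mathbb R}\langle f|\big(e^{-iqP_j}[P_j,X]e^{iqP_j}\otimes m_{\hat Q}(q)_{C_j}\otimes I_R\big)|e\rangle\,dq$. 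The second identity follows identically: the sign $-i$ in $[Q,m_{\hat P}(p)]=-i\,\partial_p m_{\hat P}(p)$ combines with the sign $+i$ in $\partial_p\big(e^{ipQ_j}Xe^{-ipQ_j}\big)$ after integration by parts to yield the stated right-hand side.

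The only delicate point --- and the step I expect to require care --- is justifying, for $\ket e,\ket f$ only assumed to lie in $\operatorname{dom}(\sqrt{N_\Lambda+N_{C_j}}\otimes I_R)$ (resp.\ $\operatorname{dom}(\sqrt{N_\Lambda+N_{B_j}}\otimes I_R)$), that one may differentiate under the integral sign and discard the boundary term. Here I would use that $q\mapsto m_{\hat Q}(q)_{C_j}\ket e$ and its $q$-derivative are norm-continuous, $\cH$-valued functions with absolutely integrable norm --- immediate from the explicit Gaussian form of $g_q$ and the finiteness of the photon number of $\ket e$ on $C_j$ --- so that dominated convergence applies and the boundary term $\big[\langle f|(e^{-iqP_j}Xe^{iqP_j}\otimes m_{\hat Q}(q)_{C_j}\otimes I_R)|e\rangle\big]_{q=-\infty}^{+\infty}$ vanishes. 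The unbounded $P_j$ (resp.\ $Q_j$) on $\Lambda$ enters only through the bounded conjugation $e^{\mp iqP_j}(\cdot)e^{\pm iqP_j}$, so no further domain issues arise on the $\Lambda$-system, and the manipulations are rigorous.
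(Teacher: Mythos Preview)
Your approach is essentially the paper's: define $g(q)=\langle f|(e^{-iqP_j}Xe^{iqP_j}\otimes m_{\hat Q}(q)_{C_j}\otimes I_R)|e\rangle$, observe that $g'(q)$ is the sum of the two integrands (with appropriate signs), and reduce the claim to $\lim_{|q|\to\infty}g(q)=0$. One correction and one caveat, though.

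First, the throwaway remark that ``$\|m_{\hat Q}(q)\|$ decays'' is false: $\|m_{\hat Q}(q)\|_\infty=(2\pi\sqrt{\alpha_q})^{-1}$ for every $q$, since $m_{\hat Q}(q)$ is a translate in the spectral variable of a fixed Gaussian of $\hat Q$. So the boundary term cannot be dispatched by operator-norm decay alone; it genuinely requires the finite-energy assumption on $|e\rangle,|f\rangle$, as you note in your final paragraph.

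Second, the claim that the needed decay of $g(q)$ (or the absolute integrability of $q\mapsto \|m_{\hat Q}(q)_{C_j}|e\rangle\|$) is ``immediate'' from finite photon number is too quick. The paper does exactly what you outline but makes this step explicit: it expands $|e\rangle,|f\rangle$ in the Fock basis on $C_j$, uses the finite photon number to get the summability condition $\lambda_a,\mu_a=o(a^{-2})$ on the Schmidt coefficients, and proves the pointwise bound
\[
|q|\,\langle a|m_{\hat Q}(q)|a\rangle\;\le\;\frac{\sqrt{\alpha_q}\,e^{-1/4}}{2\pi}+\frac{1}{2\pi\sqrt{\alpha_q}}\sqrt{2a+1}
\]
via $|q|\le |q-y|+|y|$ and $Q^2\le 2N+1$. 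Combining these gives $|g(q)|\le C\|X\|_\infty/|q|$, hence the boundary term vanishes. Your sketch is correct in spirit, but this is precisely the work that needs to be done.
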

\begin{proof}
	Let us define
	\begin{align}
		G(q)\coloneqq e^{-iqP_j}X e^{iqP_j}\otimes m_{\hat{Q}}(q)_{C_j}\otimes I_R\qquad\textrm{ and }\qquad g(q)\coloneqq \langle f|G(q)|e\rangle\ .
	\end{align}
	We compute the derivative $g'(q)=\langle f|G'(q)|e\rangle$.
	Because $m_{\hat{Q}}(q)=\frac{1}{2\pi\sqrt{\alpha_q}}e^{-iqP}e^{-\frac{1}{2\alpha_q}Q^2}e^{iqP}$ we have 
	\begin{align*}
		m_{\hat{Q}}'(q)&=-[iP,e^{-iqP}m_{\hat{Q}}(0)e^{iqP}]=-i[P,m_{\hat{Q}}(q)]\ 
	\end{align*}
	and thus
	\begin{align}
		G'(q)&=e^{-iqP_j}[-iP_j,X]e^{iqP_j}\otimes m_{\hat{Q}}(q)_{C_j}\otimes I_R-e^{-iqP_j}Xe^{iqP_j}\otimes [iP,m_{\hat{Q}}(q)]\otimes I_R\ .\label{eq:Gderivativem}
	\end{align}
	Eq.~\eqref{eq:Gderivativem} implies that the claim of the Lemma is equivalent to the statement that~$\int_{\mathbb{R}} g'(q)dq=0$. By the fundamental theorem of calculus, it thus suffices to show that
	\begin{align}
		\lim_{|q|\rightarrow\infty }g(q)&=0\ . \label{eq:asymptoticsg}
	\end{align}
	In order to show~\eqref{eq:asymptoticsg}, we introduce the following representations of the bipartite states $|f\rangle$ and $|e\rangle$:
	\begin{align*}
		|f\rangle=\sum_a\,\sqrt{\lambda_a}\,|\psi_a\rangle_{\Lambda R}\otimes |a\rangle_{C_j}\qquad \text{ and }\qquad|e\rangle=\sum_a\,\sqrt{\mu_a}\,|\varphi_a\rangle_{\Lambda R}\otimes |a\rangle_{C_j}
	\end{align*}
	where $\{|a\rangle\}$ is the eigenbasis of $N_{C_j}$. Since both states belong to the domain of $I_{\Lambda R}\otimes \sqrt{N_{C_j}}$, we have that
	\begin{align}
		\langle f|I_{\Lambda R}\otimes N_{C_j}|f\rangle =\sum_a\,\lambda_a \,a<+\infty\qquad \Rightarrow \qquad \lambda_a=\underset{a\to\infty}{o}(a^{-2})\,\label{lambdaasconvergence},
	\end{align}
	and similarly $\mu_a=\underset{a\to\infty}{o}(a^{-2})$. Next, we consider the decomposition $g(q)=\sum_{aa'}\sqrt{\lambda_a\mu_{a'}}\,g_{aa'}(q)$ with overlaps
	\begin{align*}
		g_{aa'}:q\mapsto  \langle \psi_a|\,\big(e^{-iqP_j}Xe^{iqP_j}\big)\,\otimes I_R|\varphi_{a'}\rangle\,\,\langle a|m_{\hat{Q}}(q)_{C_j}|a'\rangle\,.
	\end{align*}
	% Denoting by $\eta$ the coefficient of $N_{\Lambda}^s\otimes I_R$-boundedness of $X\otimes I_R$, i.e.~$\|X\otimes I_R|\chi\rangle\|\le \eta\,\|N_\Lambda^s\otimes I_R|\chi\rangle\|$ for any $|\chi\rangle \in\operatorname{dom}(N_\Lambda^s\otimes I_R)$, we have by the fundamental theorem of calculus that 
	% \begin{align}
	%   | \langle \psi_a|\big(e^{-iqP_j}Xe^{iqP_j}\big)\otimes I_R|\varphi_{a'}\rangle|& \le |\langle \psi_a|X\otimes I_R|\varphi_{a'}\rangle |+|q|\,\max_{s\in [0,q]}\,|\langle \psi_a|e^{-isP_j}[P_j,X]e^{isP_j}\otimes I_R|\varphi_{a'}\rangle|\nonumber \\
	%    &\,    \le \eta\,\big(\langle \psi_a|N_\Lambda^{2s}\otimes I_R|\psi_a\rangle\,\langle \varphi_{a'}|N_\Lambda^{2s}\otimes I_R|\varphi_{a'}\rangle\big)^{\frac{1}{4}} +\,|q|\label{eq1gaa'}
	%     \end{align}
	%     where the last inequality follows from the fact that $X$ satisfies the constraints in the definition of the Wasserstein distance. 
	We analyze the second overlap in the definition of $g_{aa'}$: denoting by $\Pi_Q$ the projection-valued measure of the position operator $Q$, we have
	\begin{align*}
		m_{\hat{Q}}(q)= \frac{1}{2\pi\sqrt{\alpha_q}}e^{-\frac{1}{2\alpha_q}(Q-q)^2}=\frac{1}{2\pi\sqrt{\alpha_q}}\int_{\mathbb{R}}\,e^{-\frac{1}{2\alpha_q}(y-q)^2}\,\Pi_Q(dy)\,.
	\end{align*}
	Therefore, for any $q\in\mathbb{R}$,
	\begin{align}
		|q|\, \langle a|m_{\hat{Q}}(q)|a\rangle&=  \frac{1}{2\pi}\,\int_{\mathbb{R}}\,\frac{1}{\sqrt{\alpha_q}}|q|\,e^{-\frac{1}{2\alpha_q}(y-q)^2}\,\langle a|\Pi_Q(dy)|a\rangle\label{eqmqq}\\
		&\overset{(1)}{\le} \frac{\sqrt{\alpha_q}}{\pi}\,\sup_{x\in\mathbb{R}}\,|x|e^{-x^2}+\frac{1}{2\pi\sqrt{\alpha_q}}\,\int_{\mathbb{R}}\,|y|\langle a|\Pi_Q(dy)|a\rangle\nonumber\\
		&=\frac{\sqrt{\alpha_q}\,e^{-\frac{1}{4}}}{2\pi}+\frac{1}{4\pi^2{\alpha_q}}\langle a||Q|| a\rangle\nonumber\\
		&\le \frac{\sqrt{\alpha_q}\,e^{-\frac{1}{4}}}{2\pi}+\frac{1}{2\pi\sqrt{{\alpha_q}}}\,\sqrt{\langle a|Q^2|a\rangle}\nonumber\\
		&\overset{(2)}{\le}  \frac{\sqrt{\alpha_q}\,e^{-\frac{1}{4}}}{2\pi}+\frac{1}{2\pi\sqrt{\alpha_q}}\,\sqrt{2a+1}\,.\nonumber
	\end{align}
	Above, in $(1)$ we simply used the triangle inequality $|q|\le |q-y|+|y|$ together with the fact that $\langle a|\Pi_Q(dy)|a\rangle$ defines a probability measure; and in $(2)$ we used that $Q^2\le Q^2+P^2=2N+1$. Hence, 
	\begin{align}
		|q|\,|\langle a|m_{\hat{Q}}(q)|a'\rangle|&\le |q|\,\Big(\langle a|m_{\hat{Q}}(q)|a\rangle\,\langle a'|m_{\hat{Q}}(q)|a'\rangle\Big)^{\frac{1}{2}}\nonumber\\
		&\le |q|\,\Big( \frac{\sqrt{\alpha_q}\,e^{-\frac{1}{4}}}{2\pi}+\frac{1}{2\pi\sqrt{\alpha_q}}\,\sqrt{2a+1}\Big)^{\frac{1}{2}}\,\Big( \frac{\sqrt{\alpha_q}\,e^{-\frac{1}{4}}}{2\pi}+\frac{1}{2\pi\sqrt{\alpha_q}}\,\sqrt{2a'+1}\Big)^{\frac{1}{2}}\,.\label{eq2gaa'}
	\end{align}
	Therefore, for any $q\in\mathbb{R}\backslash \{0\}$, we have 
	\begin{align*}
		|g(q)|&\le \sum_{aa'}\,\sqrt{\lambda_a\mu_{a'}}\,|g_{aa'}(q)|\\
		&\le \sum_{aa'}\,\sqrt{\lambda_a\mu_{a'}}\,\|X\|_\infty\,|\langle a|m_{\hat{Q}}(q)_{C_j}|a'\rangle|\\
		&\le \frac{\|X\|_\infty}{|q|}\,\sum_a\,\left(\frac{\lambda_a\sqrt{\alpha_q}\,e^{-\frac{1}{4}}}{2\pi}+\frac{\lambda_a}{2\pi\sqrt{\alpha_q}}\,\sqrt{2a+1}\right)^{\frac{1}{2}}\,\sum_{a'}\,\left(\frac{\mu_{a'}\sqrt{\alpha_q}\,e^{-\frac{1}{4}}}{2\pi}+\frac{\mu_{a'}}{2\pi\sqrt{\alpha_q}}\,\sqrt{2a'+1}\right)^{\frac{1}{2}}\,.
	\end{align*}
	Since $\lambda_a,\mu_a=o(a^{-2})$, the series above converge and therefore $g(q)\to0$ as $|q|\to\infty$, as required.

	The second identity claimed proceeds similarly: we define
	
	\begin{align}
		H(p)\coloneqq e^{ipQ_j}Xe^{-ipQ_j}\otimes m_{\hat{P}}(p)_{B_j}\otimes I_R\qquad\textrm{ and }\qquad h(p)\coloneqq \langle f|h(p)|e\rangle\ .
	\end{align}
	We compute the derivative $h'(p)=\langle f|H'(p)|e\rangle$.
	Because $m_{\hat{P}}(p)=\frac{1}{2\pi\sqrt{\alpha_p}}e^{ipQ}e^{-\frac{1}{2\alpha_p}P^2}e^{-ipQ}$ we have
	\begin{align*}
		m_{\hat{P}}'(p)&=[iQ,e^{ipQ}m_{\hat{P}}(0)e^{-ipQ}]=i[Q,m_{\hat{P}}(p)]\ 
	\end{align*}
	and thus
	\begin{align}
		H'(p)&=e^{ipQ_j}[iQ_j,X]e^{-ipQ_j}\otimes m_{\hat{P}}(p)_{B_j}\otimes I_R+e^{ipQ_j}Xe^{-ipQ_j}\otimes [iQ,m_{\hat{P}}(p)]\otimes I_R\ .\label{eq:Gderivativem}
	\end{align}
	Eq.~\eqref{eq:Gderivativem} implies that the claim of the Lemma is equivalent to the statement that~$\int_{\mathbb{R}} h'(q)dq=0$. Again, it suffices to show that
	\begin{align}
		\lim_{|p|\rightarrow\infty }h(p)&=0\ . \label{eq:asymptoticsg}
	\end{align}
	For this, we consider the decomposition $h(p)=\sum_{aa'}\sqrt{\lambda_a\mu_{a'}}\,h_{aa'}(p)$ with overlaps
	\begin{align*}
		h_{aa'}:p\mapsto  \langle \psi_a|\,\big(e^{ipQ_j}Xe^{-ipQ_j}\big)\,\otimes I_R|\varphi_{a'}\rangle\,\,\langle a|m_{\hat{P}}(p)_{B_j}|a'\rangle\,.
	\end{align*}
	We analyze the second overlap in the definition of $h_{aa'}$: as for $m_{\hat{Q}}(q)$ above, we have
	\begin{align}
		|p|\,|\langle a|m_{\hat{P}}(p)|a'\rangle|
		\le |p|\,\Big( \frac{\sqrt{\alpha_p}\,e^{-\frac{1}{4}}}{2\pi}+\frac{1}{2\pi\sqrt{\alpha_p}}\,\sqrt{2a+1}\Big)^{\frac{1}{2}}\,\Big( \frac{\sqrt{\alpha_p}\,e^{-\frac{1}{4}}}{2\pi}+\frac{1}{2\pi\sqrt{\alpha_p}}\,\sqrt{2a'+1}\Big)^{\frac{1}{2}}\,.\label{eq2haa'}
	\end{align}
	
	Therefore, for any $p\in\mathbb{R}\backslash \{0\}$, we have 
	\begin{align*}
		|h(p)|\le \frac{\|X\|_\infty}{|p|}\,\sum_a\,\left(\frac{\lambda_a\sqrt{\alpha_p}\,e^{-\frac{1}{4}}}{2\pi}+\frac{\lambda_a}{2\pi\sqrt{\alpha_p}}\,\sqrt{2a+1}\right)^{\frac{1}{2}}\,\sum_{a'}\,\left(\frac{\mu_{a'}\sqrt{\alpha_p}\,e^{-\frac{1}{4}}}{2\pi}+\frac{\mu_{a'}}{2\pi\sqrt{\alpha_p}}\,\sqrt{2a'+1}\right)^{\frac{1}{2}}\,.
	\end{align*}
	Since $\lambda_a,\mu_a=o(a^{-2})$, the series above converge and therefore $h(p)\to0$ as $|p|\to\infty$, as required.

\end{proof}

\begin{lemma}\label{Steanemoment-limited}
	For all $j\in[m]$, the maps $\Phi_j$, $\Psi_j$, $\Upsilon_j$ and $\Phi^{\operatorname{St}}_j$ are moment-limited, i.e. for all $\rho\in \cD(\cH_m)$,
	\begin{align*}
		\tr[\rho N]<\infty \quad \Rightarrow\quad \tr[(\Phi_j\otimes \id_{j^c})(\rho)N],\,\tr[(\Psi_j\otimes \id_{j^c})(\rho)N],\,\tr[(\Upsilon_j\otimes \id_{j^c})(\rho)N],\,\tr[(\Phi^{\operatorname{St}}_j\otimes \id_{j^c})(\rho)N]<\infty\,.
	\end{align*}
	
\end{lemma}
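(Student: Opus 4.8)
The plan is to reduce everything to a single quantitative statement for a handful of elementary operations, and then to observe that each of $\Phi_j,\Psi_j,\Upsilon_j,\Phi^{\operatorname{St}}_j$ is a finite composition of them. A channel $\Phi:\cT_1(\cH)\to\cT_1(\cH')$ is moment-limited as soon as there are constants $a,b\ge 0$ with $\Phi^\dagger(N')\le a\,N+b\,I$ in the quadratic-form sense on $\operatorname{dom}(\sqrt{N})$, where $\Phi^\dagger(N')$ is read as the monotone limit $\sup_k\Phi^\dagger(P_{\le k}N'P_{\le k})$ over the spectral projections $P_{\le k}$ of $N'$ onto $\le k$ photons; indeed then $\tr[\Phi(\rho)N']=\sup_k\tr[\rho\,\Phi^\dagger(P_{\le k}N'P_{\le k})]\le a\,\tr[\rho N]+b<\infty$. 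Such affine energy bounds compose (with $a$'s multiplied, $b$'s accumulated) and are stable under tensoring with $\id$ on any auxiliary system, since $(\Phi\otimes\id)^\dagger(N'\otimes I+I\otimes N_R)\le a\,(N\otimes I+I\otimes N_R)+bI$ once $a\ge 1$. So it suffices to verify an affine energy bound block by block, and the role of the factors $\otimes\,\id_{j^c}$ in the statement is just this stability.

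I would first dispatch the easy blocks. Appending the fixed ancilla $\sigma_{B_jC_j}=|\widetilde 0\rangle\langle\widetilde 0|_{B_j}\otimes|\widetilde +\rangle\langle\widetilde +|_{C_j}$ raises the energy by the constant $\tr[\sigma_{B_jC_j}(N_{B_j}+N_{C_j})]<\infty$, using the defining finite-average-photon-number property of the approximate $\mathsf{GKP}$ states. Conjugation by a Gaussian unitary $U$ — in particular the CNOT/SUM gate $\mathcal U_{\Lambda_jB_jC_j}$, whose action on quadratures is the affine map \eqref{eq:CNOT}, and the one-mode displacements $e^{\pm iqP_j}$, $e^{\pm ipQ_j}$, $\mathcal V^{(q,p)}_{\Lambda_j}$ — satisfies $U^\dagger N U\le a\,N+b\,I$: substituting $U^\dagger R U=SR+d$ makes $U^\dagger N U$ a fixed quadratic polynomial in the quadratures, and every quadratic form in $R$ is dominated by a multiple of $N+\tfrac m2 I=\tfrac12\sum_k R_k^2$ plus a constant (bound cross terms by $\pm(R_jR_k+R_kR_j)\le R_j^2+R_k^2$ and the antisymmetric parts by scalars). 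This is the standard energy-limitedness of Gaussian unitaries, which the excerpt already invokes through \cite{Shirokov2020}, and the same reference gives preservation of $\operatorname{dom}(\sqrt{N})$. Finally the partial trace is moment-limited because $\tr[\tr_B(\tau_{AB})N_A]=\tr[\tau_{AB}(N_A\otimes I_B)]\le\tr[\tau_{AB}N_{AB}]$.

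The one genuinely new block is the unsharp homodyne measurement of an ancilla quadrature $\hat R\in\{\hat Q,\hat P\}$ followed by an outcome-dependent displacement of mode $j$ and integration over the outcome, i.e.\ the channel $\mathcal M:\tau_{\Lambda X}\mapsto\int \mathcal V_j(r)\big(\tr_X[(I_\Lambda\otimes m_{\hat R}(r)_X)\,\tau]\big)\,dr$ with $\mathcal V_j(r)$ a conjugation by a one-mode displacement of mode $j$ depending linearly on $r$. Its adjoint is $\mathcal M^\dagger(N_\Lambda)=\int\mathcal V_j(r)^\dagger(N_\Lambda)\otimes m_{\hat R}(r)_X\,dr$, and a direct computation (pulling the displacement through $N$) gives $\mathcal V_j(r)^\dagger(N_\Lambda)=N_\Lambda\pm r\,L_j+\tfrac{r^2}{2}I$ for a fixed quadrature $L_j$ of mode $j$. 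I would then use the low-order moments of the POVM, $\int m_{\hat R}(r)\,dr=I$, $\int r\,m_{\hat R}(r)\,dr=\hat R$ and $\int r^2 m_{\hat R}(r)\,dr=\hat R^2+\alpha I$ (obtained by the substitution $r=\hat R+u$ and the Gaussian moments of the smearing kernel), together with $\hat R^2\le 2N_X+I$ and $\pm L_j\otimes\hat R\le\tfrac12(L_j^2\otimes I+I\otimes\hat R^2)$, to conclude $\mathcal M^\dagger(N_\Lambda)\le 2\,N_{\Lambda X}+c\,I$. The main obstacle is precisely here: justifying the interchange of the outcome integral with the unbounded operator $N_\Lambda$ and the (only weakly convergent) moment identities above, which I would handle by running the whole computation with $N_\Lambda$ replaced by its truncation $P_{\le k}N_\Lambda P_{\le k}$ — where all operators are bounded and the manipulations are legitimate — obtaining constants independent of $k$, and then passing to the limit by the monotone-convergence observation of the first paragraph. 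The two-ancilla version used in $\Phi_j$ (simultaneous $\hat P$-measurement on $B_j$ and $\hat Q$-measurement on $C_j$, conditional displacement $\mathcal V^{(q,p)}_{\Lambda_j}$) is identical, with the $q^2$ and $p^2$ contributions landing in $N_{C_j}$ and $N_{B_j}$ respectively.

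It then remains to identify the four maps as compositions of these blocks. The map $\Psi_j$ is an unsharp $\hat P$-measurement on $B_j$ with conditional displacement $e^{-ipQ_j}$ and discard of $B_j$; $\Upsilon_j$ is the analogous $\hat Q$-measurement on $C_j$ with conditional displacement $e^{iqP_j}$; $\Phi_j$ is the two-ancilla version, which one checks equals $\int\mathcal V^{(q,p)}_{\Lambda_j}\big(\tr_{B_jC_j}[(I_\Lambda\otimes m(q,p)_{B_jC_j})\tau]\big)\,dq\,dp$ after moving $\mathcal V^{(q,p)}_{\Lambda_j}$ past the commuting factor $I_\Lambda\otimes m(q,p)_{B_jC_j}$ and out of the partial trace; and $\Phi^{\operatorname{St}}_j$ is the composition "append $\sigma_{B_jC_j}$", then the Gaussian unitary $\mathcal U_{\Lambda_jB_jC_j}$, then $\Phi_j$, which is exactly its defining expression. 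Composing the affine energy bounds from the previous paragraphs, and tensoring with $\id_{j^c}$ (harmless, since every block acts as the identity off mode $j$ and its ancillae), yields moment-limitedness of $\Phi_j$, $\Psi_j$, $\Upsilon_j$ and $\Phi^{\operatorname{St}}_j$ for every $j\in[m]$, as claimed.
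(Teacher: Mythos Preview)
Your proposal is correct, but it takes a considerably more hands-on route than the paper's own proof. The paper simply observes that $\Phi_j$, $\Psi_j$ and $\Upsilon_j$ are themselves Gaussian channels (the unsharp homodyne POVMs $m_{\hat P},m_{\hat Q}$ are Gaussian, and integrating a conditional Gaussian displacement against them yields a Gaussian CP map), and then invokes the general fact from \cite{Shirokov2020} that Gaussian channels are moment-limited; for $\Phi^{\operatorname{St}}_j$ it notes that this is ``append a finite-energy ancilla'' followed by a Gaussian channel. That is the entire argument.

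You instead never invoke Gaussianity of the composite maps. You reduce to an affine energy bound $\Phi^\dagger(N')\le aN+bI$, check it block by block (append ancilla, Gaussian unitary, partial trace, and --- as the only nontrivial block --- unsharp homodyne plus outcome-dependent displacement), and compose. Your computation of the first and second POVM moments of $m_{\hat R}$ and the resulting quadratic bound is correct, and your handling of the unbounded-operator/interchange issue via truncation $P_{\le k}N P_{\le k}$ is the right way to make it rigorous. What your approach buys is explicit constants and self-containedness: you do not need to recognize or verify that the measure-and-displace channel is Gaussian, which is not entirely obvious at first glance. What the paper's approach buys is brevity: once one accepts Gaussianity of $\Phi_j,\Psi_j,\Upsilon_j$, the lemma is a one-liner.
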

\begin{proof}
	The maps $\Psi_j,\Phi_j$ and $\Upsilon_j$ are Gaussian and therefore moment-limited (see \cite{Shirokov2020}). As for $\Phi^{\operatorname{St}}_j$, it is the composition of the channel $\Phi_\sigma:\rho\mapsto \rho\otimes \sigma_{B_jC_j}$, where the approximate $\mathsf{GKP}$ state $\sigma_{B_jC_j}$ has finite energy, with a Gaussian channel. $\Phi^{\operatorname{St}}_j$ is therefore also moment-limited.
\end{proof}

\subsection{Boundedness under concatenated stabilizer decoding}
In this section, we consider recovery maps obtained from code concatenation. Specifically, we consider  an $[m,k,d]$-stabilizer code $\cC$ - where $m$ denotes the number of physical qubits, $k$ the number of logical qubits and $d$ the code distance- with stabilizer generators~$\{S_j\}_{j=1}^{m-k}$  and an associated recovery map~$\cR$ of the form
\begin{align}
	\cR(\rho)&=\sum_{s\in \{0,1\}^{m-k}}C(s)\Pi(s)\rho\Pi(s)C(s)^\dagger\ ,
\end{align}
where for $s\in \{0,1\}^{m-k}$, the operator $\Pi(s)=\prod_{j=1}^{m-k}\frac{1}{2}(I+(-1)^{s_j}S_j)$ projects onto the syndrome-$s$ subspace, and $C(s)$ is a Pauli-correction classically computed from~$s$, which maps the state back to the code space. 

We are interested in local recovery operations, and formalize this as follows: We assume that there is a  Clifford circuit~$U$ (composed of one- and two-qubits) on the system and additional~$m-k$ auxiliary qubits $A_1\dots A_{m-k}$ such that syndrome information can be extracted by applying~$U$ and subsequently measuring each individual qubits: We have 
\begin{align}
	\Pi(s)\rho\Pi(s)&=\tr_{A^{m-k}}\left( \bigotimes_{j=1}^{m-k} |s_j\rangle \langle s_j|_{A_j} \cU(\rho\otimes |0\rangle\langle 0|^{\otimes m-k})\right)\qquad\textrm{ for every }\qquad s\in \{0,1\}^{m-k}\ .
\end{align}
We capture locality as follows: We assume that the backward lightcone of every 
ancilla qubit has size
\begin{align}
	|\cL^{\leftarrow}_{\cU}(A_j)|&\leq \ell_{\mathrm{meas}}\qquad\textrm{ for }\qquad j=1,\ldots,m-k\ .\label{eq:finitelightconeassumption}
\end{align}
Furthermore, we assume that the correction operation can be computed by applying local functions to~$s$: There is a partition $[m]={\bigcup}_{j=1}^r \cF_j$ of $[m]$ into disjoint subsets~$\cF_1,\ldots,\cF_r$ and functions $C_j:\{0,1\}^{\cF_j}\rightarrow\{I,X,Y,Z\}^{\otimes |\cF'_j|}$ such that 
\begin{align}
	C(s)&=\prod_{j=1}^r C_j(s_{\cF_j})_{\cF'_j}\qquad\textrm{ for all }\qquad s\in \{0,1\}^{m-k}\ ,\label{eq:correctionlocalfunction}
\end{align}
where $s_{\cF_j}$ denotes the restriction of~$s$ to~$\cF_j$, i.e., the corresponding substring of syndrome bits.   That is, for each~$j=1,\ldots,r$,  each operator~$C_j(s_{\cF_j})$ has support contained in a set~$\cF'_j$. For what follows, we further assume that the sets $\{\cF'_j\}_j$ are pairwise disjoint.

To formalize locality, assume that there is a constant~$\ell_{\mathrm{corr}}$ such that 
\begin{align}
	|\cF_j| &\leq \ell_{\mathrm{corr}}\qquad\textrm{ for all }\qquad j=1,\ldots,r\ .
\end{align}
and a constant~$\ell'_{\mathrm{corr}}$ such that 
\begin{align}
	|\cF'_j| & \leq \ell'_{\mathrm{corr}}\qquad\textrm{ for all }\qquad j=1,\ldots,r\ .
\end{align}
In the following, let $\overline{\cR}$ denote  a $\mathsf{GKP}$-encoded version of the original recovery map~$\cR$.
Let us specify this map in more detail. It has the form
\begin{align}
	\overline{\cR}(\rho)=\sum_{s}V_{f(s)} \tr_{A^{(m-k)}}\left[\bigotimes_{j=1}^{m-k}\overline{P}(s_j)_{A_j} \overline{\cU}\big(\rho\otimes\,\bigotimes_{j=1}^{m-k}\sigma_{A_j}\big)\right]\,V_{f(s)}^\dagger\ .\label{eq:recoverymapbosonicgkpencoded}
\end{align}
In this expression,  the bosonic ancilla states $\sigma_{A_j}$ are assumed to have finite moments -- typically, they are approximate $\mathsf{GKP}$ states. The unitary~$\overline{\cU}$ is a circuit composed of Gaussian one- and two-mode unitaries, e.g., a product of CNOT gates if the stabilizer code~$\cC$ is of CSS-type. It acts on $m$~system modes augmented by $m-k$~auxiliary modes. The
assumption~\eqref{eq:finitelightconeassumption}
amounts to the following: denoting
the systems's  quadrature  operators by $(R_1,\ldots,R_{2m})=(Q_1,P_1,\ldots,Q_m,P_m)$, and those of the auxiliary modes by~$\{R_{2m+j}\}_{j=1}^{2(m-k)}$, we have the condition
\begin{align}
	\mathcal{U}(R_{2m+j})=\sum_{i=1}^{2m+2(m-k)}u_{j,i}\,R_i\,,\qquad\text { with }\qquad 
	\left|\{i\in \{1,\ldots,2m+2(m-k)\}\ |\ u_{j,i}\neq 0\}\right|&\leq \ell_{\textrm{meas}}\ .\label{eq:ellmeasdef}
\end{align}
for every $j=1,\ldots,2(m-k)$.
%where $\{R_j\}_{j=1}^{2m}$ are the system's quadrature operators, whereas~$R_j
%for any quadrature operator $R_j$, $j\in[2m]$, 
%and where the sum runs over at most $\kappa$ indices in [2m] and $\delta$ indices in $[2m+1,2(m+K)]$. We also denote by $\kappa'$ the maximal over all $i$'s of the number of indices $j\in[2m]$ such that $u_{j,i}\ne 0$.

The projection $\overline{P}(s_k)_{A_k}$ corresponds to a measurement of logical Paulis on the ancilla $A_k$. In practice, this can be done by an approximate homodyne measurement in the form of Equations \eqref{homodyneapprox} and \eqref{homodyneapproxbis} modulo a fixed real parameter. Since for approximate $\mathsf{GKP}$ states, the measurement generally gives real outcomes, we choose $s_k\in\{0,1\}$ by rounding the outcome. More precisely, each POVM element $\overline{P}(s_i)$ can be written as an integral over the operator density $m_{\hat{Q}}$ or $m_{\hat{P}}$, for instance
\begin{align*}
	\overline{P}(0)=\int_{\Gamma} m_{\hat{P}}(z)\,dz\,,\qquad \overline{P}(1)=I-\overline{P}(0)=\int_{\Gamma^c}\,m_{\hat{P}}(z)\,dz\,,
\end{align*}
where $\Gamma$ denotes the domain of the real line such that whenever the parameter $z\in\Gamma$ is measured, it is interpreted as syndrome $0$. 

Finally, the correction is done via a displacement~$V_{f(s)}$ of parameter $f(s)\in\mathbb{R}^{2m}$:
This is the $\mathsf{GKP}$-encoded counterpart of applying the Pauli correction~$C(s)$.
Recalling that logical $\overline{X}$, $\overline{Y}$- and $\overline{Z}$-operators in the $\mathsf{GKP}$ code are given by the displacements in \Cref{GKPlogicalpauli}, namely
\begin{align}
	\overline{X}=e^{-i\sqrt{\pi}\hat{P}},\qquad \overline{Z}=e^{i\sqrt{\pi}\hat{Q}},\qquad \overline{Y}=e^{i\sqrt{\pi}(\hat{Q}-\hat{P})}\,,
\end{align}
we  conclude that
\begin{align}
	\|f(s)\|_\infty &\leq \sqrt{\pi}\qquad\textrm{ for every }\qquad s\in \{0,1\}^{m-k}\ .\label{eq:boundednessoff}
\end{align}
It follows from Eq.~\eqref{eq:correctionlocalfunction} that the function~$f:\{0,1\}^{m-k}\rightarrow \mathbb{R}^{2m}$ satisfies the following: We have
\begin{align}
	f(s)&=\left(\bigoplus_{u=1}^r f_u(s_{\cF_u})_{\cF'_u}\right)\oplus \left(0^{2(m-\sum_{u=1}^r |\cF'_u|)}\right)_{[m]\backslash \bigcup_{u=1}^r \cF'_u}\qquad\textrm{ for every }\qquad s\in \{0,1\}^{m-k}\ .\label{eq:fsdirectsum}
\end{align}
for some functions $f_u:\{0,1\}^{\cF_u}\rightarrow \{0,1\}^{\cF'_u}$.
In particular, $V_{f(s)}=\prod_{u=1}^r V_{f_u(s_{\cF_u})_{\cF'_u}}$ is a 
product of (possibly multi-mode) displacements on disjoint subsets~$\{\cF'_u\}_u$ of modes, each depending on at most~$\ell_{\mathrm{corr}}$ syndrome bits.
%We assume moreover that $f$ is local in the sense that whenever two syndromes $s,s'$ differ on a set $A$ of at most $n$ ancillas, the vector $f(s)-f(s')\in\mathbb{R}^{2m}$ only depends on at most $S_f\cdot n$ ancillas including the set $A$ (hence $S_f>1$), that a set $A'$ of at most $T_f\cdot n$ of the entries of $f(s)-f(s')$ are non-zero and that $A'$ only depends on $A$. We also define the Lipschitz constant  
%\begin{align*}
%  \|f\|_L\coloneqq \sup_{s\ne s'}\frac{\|f(s)-f(s')\| }{d_H(s,s')}\,,
%\end{align*}
%where $d_H$ denotes the Hamming distance over the set $\mathbb{Z}_2^K$ of syndromes. Moreover, we assume that, for any $i\in [2m]$, the number of indices $j\in[2m]$ such that there exist two syndromes $s,s'$ differing only in the light-cone of $j$ with respect to the map $\mathcal{U}$ and such that $f(s)_i\ne f(s')_i$ is uniformly bounded by $R_f$.
This completes the specification of the recovery map~$\overline{\cR}$ in~\eqref{eq:recoverymapbosonicgkpencoded}. Our main result for this recovery map is the following:
\begin{prop}\label{prop:upperboundlocalrecovery}
	The recovery map $\overline{\cR}$ satisfies  
	\begin{align*}
		\|\overline{\cR}\|_{ W_{\operatorname{B}}\to W_{\operatorname{B}}}\le  \left(\ell_{\operatorname{meas}}+ \frac{\ell_{\operatorname{meas}}\,2^{1+\ell_{\operatorname{meas}}(1+\ell_{\operatorname{corr}})}\ell_{\operatorname{meas}}\cdot \ell_{\operatorname{corr}}'}{\sqrt{\alpha_{\min}\pi}}\right)\,.
	\end{align*}
\end{prop}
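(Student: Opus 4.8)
The plan is to bound the Lipschitz contraction coefficient $\|\overline{\cR}^\dagger\|_{\nabla\to\nabla}$ and conclude by the duality \eqref{eq:duality}. So fix a self-adjoint bounded $X$ with $\|\nabla X\|\le 1$, a mode $i$, a quadrature $R_i\in\{Q_i,P_i\}$, a reference $R$ and $|\psi\rangle,|\varphi\rangle\in\dom(\sqrt{N_{\Lambda A}}\otimes I_R)$; the goal is $|\langle\varphi|[R_i,\overline{\cR}^\dagger(X)]\otimes I_R|\psi\rangle|\le\Gamma$. Writing $\overline{U}$ for the Gaussian unitary implementing $\overline{\cU}$, one computes from \eqref{eq:recoverymapbosonicgkpencoded} the adjoint
\[
\overline{\cR}^\dagger(X)=\tr_{A}\!\Big[\big(I\otimes\textstyle\bigotimes_j\sigma_{A_j}\big)\,\overline{U}^\dagger Z\,\overline{U}\Big],\qquad Z\coloneqq\sum_s\big(V_{f(s)}^\dagger XV_{f(s)}\big)\otimes\Pi(s)_A,\qquad \Pi(s)_A\coloneqq\textstyle\bigotimes_j\overline{P}(s_j)_{A_j}.
\]
Since $R_i\otimes I_A$ commutes with $I\otimes\bigotimes_j\sigma_{A_j}$, I would pull it inside the partial trace and, via the conjugation‑cyclicity identity \eqref{eq:cyclicunitaryprop}, move the $\overline{U}$‑conjugation onto $R_i$. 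This reduces the quantity to $\big|\tr\!\big[(\,[\overline{U}(R_i\otimes I_A)\overline{U}^\dagger,\,Z]\otimes I_R)\,\Theta\big]\big|$ with $\Theta\coloneqq(\overline{U}\otimes I_R)(\bigotimes_j\sigma_{A_j}\otimes|\psi\rangle\langle\varphi|)(\overline{U}^\dagger\otimes I_R)$. Because $\rho\mapsto\overline{U}(\rho\otimes\bigotimes_j\sigma_{A_j})\overline{U}^\dagger$ is moment‑limited (Gaussian unitary plus finite‑energy ancillas, cf.\ \Cref{Steanemoment-limited} and \cite{Shirokov2020}), \Cref{lem.technical} gives $\Theta=\sum_\ell\alpha_\ell|e_\ell\rangle\langle f_\ell|$ with $\|\alpha\|_{\ell_1}\le 1$ and $|e_\ell\rangle,|f_\ell\rangle\in\dom(\sqrt{N_{\Lambda A}}\otimes I_R)$, so it suffices to bound each $\big|\langle f_\ell|\,[\overline{U}(R_i\otimes I_A)\overline{U}^\dagger,Z]\otimes I_R\,|e_\ell\rangle\big|$.

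Next, expand $\overline{U}(R_i\otimes I_A)\overline{U}^\dagger=\sum_\nu c_\nu R_\nu$ over the system‑ and ancilla‑mode quadratures. By the locality hypothesis \eqref{eq:ellmeasdef} — together with the fact that the supports of a symplectic matrix and its inverse are mode‑wise transposes, and that the data modes are not measured — this sum ranges over at most $\ell_{\mathrm{meas}}$ quadratures with $|c_\nu|\le 1$ (Clifford/CNOT structure). Split the commutator into the system‑quadrature terms and the ancilla‑quadrature terms. For a system quadrature $\nu$, $R_\nu$ commutes with $\Pi(s)_A$ and $[R_\nu,V_{f(s)}^\dagger XV_{f(s)}]=V_{f(s)}^\dagger[R_\nu,X]V_{f(s)}$, so $[R_\nu\otimes I_A,Z]=\mathcal E^\dagger([R_\nu,X])$ where $\mathcal E(\rho)\coloneqq\sum_s V_{f(s)}\,\tr_A[\Pi(s)_A\rho]\,V_{f(s)}^\dagger$ is a moment‑limited channel from the combined system to $\Lambda$. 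Pairing with $\Theta$ and applying \Cref{lem.technical} to $(\mathcal E\otimes\id_R)(|e_\ell\rangle\langle f_\ell|)$ yields $\big|\langle f_\ell|\mathcal E^\dagger([R_\nu,X])\otimes I_R|e_\ell\rangle\big|\le\|\nabla_\nu X\|\le 1$; summing over the at most $\ell_{\mathrm{meas}}$ system quadratures gives the first summand $\ell_{\mathrm{meas}}$ of $\Gamma$.

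For an ancilla‑quadrature term, fix a quadrature $R_\nu$ of ancilla $A_j$; only the quadrature conjugate to the one measured on $A_j$ fails to commute with $\overline{P}(s_j)$, and $\Delta_{A_j}\coloneqq[R_\nu,\overline{P}(0)_{A_j}]=-[R_\nu,\overline{P}(1)_{A_j}]$. Writing $\overline{P}(0)=\int_{\Gamma}m_{\hat P}(z)\,dz$ and using the identity $[\hat Q,m_{\hat P}(z)]=-i\,\partial_z m_{\hat P}(z)$ from the proof of \Cref{lem:Pmoveeq}, $\Delta_{A_j}$ is a signed sum of the Gaussian operators $m_{\hat P}(z_k)$ at the boundary points $z_k$ of $\Gamma$ — spaced $\sqrt\pi$ apart — so a theta‑function estimate on its $\hat P$‑symbol gives $\|\Delta_{A_j}\|_\infty=O\!\big(1/\sqrt{\alpha_{\min}\pi}\big)$ (and $\Delta_{A_j}$, being a bounded function of $\hat P$ with bounded derivative, preserves the energy domain). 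Since $\Delta_{A_j}$ acts only on $A_j$ it factors out, leaving $[R_\nu,Z]=\Delta_{A_j}\otimes\sum_{s_{\setminus j}}\big(V_{f(s)|_{s_j=0}}^\dagger XV_{(\cdot)}-V_{f(s)|_{s_j=1}}^\dagger XV_{(\cdot)}\big)\otimes\bigotimes_{j'\ne j}\overline{P}(s_{j'})$. By \eqref{eq:correctionlocalfunction}--\eqref{eq:fsdirectsum}, flipping $s_j$ changes $V_{f(s)}$ only through the single displacement $V_{f_{u(j)}(s_{\cF_{u(j)}})}$ supported on $\cF'_{u(j)}$ (a commuting factor), so the bracketed difference is a unitary conjugation of $V_\delta^\dagger XV_\delta-X$ with $\delta$ supported on $\le\ell'_{\mathrm{corr}}$ modes and $\|\delta\|_\infty\le 2\sqrt\pi$ by \eqref{eq:boundednessoff}; the fundamental theorem of calculus gives $V_\delta^\dagger XV_\delta-X=i\int_0^1 V_{t\delta}^\dagger[R^\intercal\delta,X]V_{t\delta}\,dt$, which between energy‑domain vectors is $\le\|\delta\|_1\|\nabla X\|=O(\sqrt\pi\,\ell'_{\mathrm{corr}})\|\nabla X\|$. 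Finally, the residual sum over $s_{\setminus j}$ is split into the bits that enter $\delta$ and the remainder: the remainder sum factors through a moment‑limited channel (append/measure‑and‑discard the corresponding ancillas, apply the corresponding correction displacements) and is collapsed by \Cref{lem.technical}, while the bits that matter — conservatively, all syndrome bits influencing corrections on the $\le\ell_{\mathrm{meas}}$ modes in the Heisenberg lightcone of $R_i$ together with the $\le\ell_{\mathrm{meas}}$ measurement bits in that lightcone, hence $\le\ell_{\mathrm{meas}}(1+\ell_{\mathrm{corr}})$ of them — are handled by an explicit sum over $\le 2^{1+\ell_{\mathrm{meas}}(1+\ell_{\mathrm{corr}})}$ patterns. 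Combining $\|\Delta_{A_j}\|_\infty=O(1/\sqrt{\alpha_{\min}\pi})$, the $\ell'_{\mathrm{corr}}$ from the displacement difference, the pattern count, and the sum over the $\le\ell_{\mathrm{meas}}$ ancilla quadratures in the lightcone produces the second summand of $\Gamma$.

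The hard part is precisely that last step. A priori the commutator $[R_\nu,Z]$ carries the sum $\sum_s(\cdots)\otimes\Pi(s)_A$ over exponentially many syndromes, and a naive triangle inequality would bound it by $\|X\|_\infty$ rather than $\|\nabla X\|$, which is useless. The resolution is to exploit the local decoder structure of \eqref{eq:correctionlocalfunction}: only the $O(\ell_{\mathrm{corr}})$ syndrome bits feeding the correction near the relevant mode actually matter (all others sum, through an appropriately defined channel, to the identity), and the single‑bit difference trick converts the dependence on $X$ into low‑weight commutators so that $\|\nabla X\|$ appears. The remaining items — checking moment‑limitedness of the auxiliary channels $\rho\mapsto\overline{U}(\rho\otimes\bigotimes_j\sigma_{A_j})\overline{U}^\dagger$, $\mathcal E$ and the residual channel so that \Cref{lem.technical} applies, the domain‑preservation of $\Delta_{A_j}$, and the theta‑function estimate $\|\Delta_{A_j}\|_\infty=O(1/\sqrt{\alpha_{\min}\pi})$ — are routine, and careful bookkeeping of all numerical factors yields the stated value of $\Gamma$.
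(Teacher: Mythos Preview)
Your proposal is correct and follows essentially the same route as the paper: duality to the Lipschitz side, expansion of the Heisenberg--evolved quadrature $\overline{\cU}(R_i)$ into $\le\ell_{\mathrm{meas}}$ system and ancilla quadratures, the identity $[R_\nu,V_{f(s)}^\dagger XV_{f(s)}]=V_{f(s)}^\dagger[R_\nu,X]V_{f(s)}$ plus a moment--limited channel for the system part, and for the ancilla part the local decoder structure~\eqref{eq:fsdirectsum} together with the linear interpolation $V_\delta^\dagger XV_\delta-X=i\int_0^1 V_{t\delta}^\dagger[R^\intercal\delta,X]V_{t\delta}\,dt$ to trade the displacement difference for $\|\nabla X\|$. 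The only organizational differences are that the paper first rewrites the syndrome sum against a base point $v=0^{\delta_j}$ (its Eq.~\eqref{eq:commutativityfirsterm}) and then commutes with $\overline{\cU}(P_j)$, whereas you commute first and extract the difference via the single--bit antisymmetry $[R_\nu,\overline{P}(0)]=-[R_\nu,\overline{P}(1)]$; and the paper bounds $\|[R_{2m+i},\overline{P}(v,w_1)]\|$ by a direct integral estimate (its Lemma~\ref{lem:upperboundcommutatornorm}) rather than your theta--function argument on boundary values.
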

Before discussing the proof of~\Cref{prop:upperboundlocalrecovery}, let us briefly summarize some of the relevant properties of the function~$f$ derived from Eqs.~\eqref{eq:boundednessoff} and~\eqref{eq:fsdirectsum}. For any two syndromes $s,s'\in \{0,1\}^{m-k}$ differing on $\delta$ fixed ancillas, $f(s)-f(s')$ only depends on at most $\delta\cdot \ell_{\mathrm{corr}}$ ancilla bits,
and has support on at most $\delta\cdot \ell'_{\mathrm{corr}}$ modes.  
More precisely, defining 
\begin{align}
	D\coloneqq \{j\in [m-k]\ |\ s_j\neq s'_j\}
\end{align}
as the set of indices where the syndromes~$s$ and $s'$ differ, there are $v\in \{0,1\}^{|D|}$ $w\in \{0,1\}^{m-k-|D|}$ such that 
\begin{align}
	\begin{matrix}
		s_{D}&=&v\\
		s'_{D}&=&\overline{v}
	\end{matrix}\qquad \begin{matrix}
		s_{D^c}&=&w\\
		s'_{D^c}&=&w
	\end{matrix}
\end{align}
where $D^c=[m-k]\backslash D$ and $\overline{v}_j=1-v_j$ for each $j\in \big[|D|\big]$. It follows immediately that for $u\in [r]$, we have 
\begin{align}
	s_{\cF_u}\neq s'_{\cF_u}\qquad\textrm{ if and only if }\qquad D\cap \cF_u\ne \emptyset\ .
\end{align}
Omitting ``padding'' zeros, we thus have (using Eq.~\eqref{eq:fsdirectsum})
\begin{align}
	f(s)-f(s')&=\bigoplus_{u|D\cap \cF_u\ne \emptyset}\left(f_u(s_{\cF_u})-f_u(s'_{
		\cF_u})\right)_{\cF_u'}=\bigoplus_{u|D\cap\cF_u\ne \emptyset}g_u\left(v,s_{\cF_u\backslash D}\right)_{\cF_u'}\label{eq:ssprimfm}
\end{align}
with $g_u(v,t)\coloneqq f_u(v_{D\cap \cF_u}\oplus t_{\cF_u\backslash D})-f_u(\overline{v}_{D\cap \cF_u}\oplus t_{\cF_u\backslash D})$.  In this expression, the only dependence on~$w=s_{D^c}$ is through the collection of substrings~$\{s_{\cF_u\backslash D}\}_{u\in D}$, hence the total number of bits of~$w$ the expression~$f(s)-f(s')$ depends on is bounded by
\begin{align}
	|\{u|D\cap \cF_u\ne \emptyset\}|\cdot  \max_{u|D\cap\cF_u\ne\emptyset}|\cF_u\backslash D|\le|D|\cdot\max_{u\in D}|\cF_u\backslash D|\leq |D|\cdot \ell_{\mathrm{corr}}\,.
\end{align}
In summary, there is a subset~$M\subset D^c$ of size $|M|\leq |D|\cdot\ell_{\mathrm{corr}}$ and a function~$g$ such that 
\begin{align}
	f(s)-f(s')&=g(v,w_{M})\qquad\textrm{ for all }v,w\qquad\textrm{ where } s=s(v,w),s'=s'(v,w)\qquad\textrm{ is given by~\eqref{eq:ssprimfm}}\ .\label{eq:localitydifferencem}
\end{align}
Observe  that~\eqref{eq:ssprimfm} also implies that the number of non-zero components of $g(v,w_M)$ is bounded by 
\begin{align}
	\left|\left\{j\in [2m]\ |\ g_j(v,w_M)\neq 0\right\}\right| \le 2|\{u|D\cap \cF_u\ne \emptyset\}|\cdot \max_{u|D\cap\cF_u\ne\emptyset}|\cF'_u|  \leq 2|D|\cdot \ell'_{\mathrm{corr}}\ ,\label{eq:upperboundfsfsprime}
\end{align}
that is, $f(s)-f(s')$ has support on at most $|D|\cdot\ell'_{\mathrm{corr}}$ modes, as claimed.

\begin{proof}[Proof of \Cref{prop:upperboundlocalrecovery}]
	Once again, we argue by duality: we first fix a reference system $R$, choose a vertex $j\in [m]$ and consider a self-adjoint, bounded operator $X$ as well as $|\psi\rangle,|\varphi\rangle\in\operatorname{dom}(\sqrt{N}\otimes I_R)$. Then with a slight abuse of notation, we denote $\overline{P}(s)\coloneqq \bigotimes_{j=1}^{m-k}\overline{P}(s_j)$ for a given syndrome $s=(s_1,\cdots,s_{m-k})$, $X(s)\coloneqq V_{f(s)}^\dagger X V_{f(s)}$, $\sigma_{A^{(m-k)}}=\bigotimes_{j=1}^{m-k}\sigma_{Aj}$, and compute
	
	\begin{align*}
		\overline{\cR}^{\dagger}(X)&=\sum_{s\in \{0,1\}^{m-k}}\, \tr_{A^{(m-k)}}\Big[(I\otimes \sigma_{A^{(m-k)}}) \,\overline{\mathcal{U}}^\dagger\,\big(X(s) \otimes \overline{P}(s)\big)\Big]\,.
	\end{align*}
	Next, we consider a quadrature $R_{2j+1}=P_j$ corresponding to a vertex $j\in[m]$ and partition each syndrome $s=(v^{(j)},w^{(j)})$ onto a part $v^{(j)}$ of measurement outcomes $s_k$ corresponding to projections $\overline{P}(s_k)$ of support overlapping with the light-cone of vertex $j$ with respect to the unitary map $\overline{\mathcal{U}}$, and a part $w^{(j)}$ of measurement outcomes corresponding to projections whose support does not overlap with that light-cone. That is, we set 
	\begin{align}
		v^{(j)}=\{s_r\}_{r\textrm{: }j\in \cL^{\leftarrow}_{\cU}(A_r)}\qquad\textrm{ and }\qquad 
		w^{(j)}=\{s_r\}_{r\textrm{: }j\not\in \cL^{\leftarrow}_{\cU}(A_r)}
	\end{align}
	Let us reorder (relabel) the syndrome bits in such a way that the first $\delta_j$ bits  constitute~$v^{(j)}$ and the remaining~$m-k-\delta_j$ bits constitute~$w^{(j)}$.  By assumption, we have $\delta_j\leq \ell_{\operatorname{meas}}$. Then we have 
	\begin{align}
		\sum_{s\in \{0,1\}^{m-k}} X(s)\otimes \overline{P}(s)&=
		\sum_{(v,w)\in \{0,1\}^{\delta_j}\times \{0,1\}^{m-k-\delta_j}} X(v,w)\otimes \overline{P}(v,w)\nonumber\\
		&=\sum_{w\in \{0,1\}^{m-k-\delta_j}} X(0^{\delta_j},w)\otimes P(0^{\delta_j},w)+
		\sum_{\substack{v\in \{0,1\}^{\delta_j}\backslash\{0^{\delta_j}\}\\
				w\in \{0,1\}^{m-k-\delta_j}}} X(v,w)\otimes \overline{P}(v,w)\nonumber\\
		&=\sum_{w\in \{0,1\}^{m-k-\delta_j}}X(0^{\delta_j},w)\otimes \overline{P}(w)+
		\sum_{\substack{v\in \{0,1\}^{\delta_j}\backslash\{0^{\delta_j}\}\\
				w\in \{0,1\}^{m-k-\delta_j}}}\Delta_vX(w)\otimes \overline{P}(v,w)\label{eq:commutativityfirsterm}
	\end{align}
	where we introduced the operators 
	\begin{align*}
		\overline{P}(w)&\coloneqq \sum_{v\in \{0,1\}^{\delta_j}}\overline{P}(v,w)\\
		\Delta_v X(w)&\coloneqq X(v,w)-X(0^{\delta_j},w)
	\end{align*}
	for $w\in \{0,1\}^{m-k-\delta_j}$. 
	By definition, $\overline{P}(w)$ acts non-trivially only on auxiliary qubits~$A_i$ 
	with the property that $j\not\in \cL^{\leftarrow}_{\cU}(A_i)$. Note that we have 
	\begin{align}
		\overline{\cU}(P_j)&=\sum_{i=1}^{2m} u_{2j+1,i}R_i+\sum_{i\in \{1,\ldots,m-k\}: j\in\cL_{\cU}^\leftarrow(A_i)}^{m-k} \left(u_{2j+1,2m+2i-1}R_{2m+2i-1}+u_{2j+1,2m+2i}R_{2m+2i}\right)
	\end{align}
	since all terms $i$ with $j$ not belonging to the backward lightcone of~$A_i$ do not contribute, i.e.,
	\begin{align}
		u_{2j+1,2m+2i-1}=u_{2j+1,2m+2i}=0\qquad\textrm{ for all }j\not\in \cL_{\cU}^\leftarrow(A_i)\ .\label{eq:zerolightconeprop}
	\end{align}
	It thus follows with Eq.~\eqref{eq:commutativityfirsterm}, \eqref{eq:commutativityfirsterm} and the support property of~$\overline{P}(w)$ that
	\begin{align}
		[\,\overline{\cU}(P_j),\Lambda]
		&=\sum_{i=1}^{2m}u_{2j+1,i}  [R_i,\Lambda]+\sum_{i=1}^{2(m-k)} u_{2j+1,2m+i}\,[R_{2m+i},\Lambda']
	\end{align}
	where we used the abbreviations
	\begin{align}
		\Lambda\coloneqq \sum_{s\in \{0,1\}^{m-k}}X(s)\otimes \overline{P}(s)\qquad\textrm{ and }\qquad \Lambda'\coloneqq \sum_{\substack{v\in \{0,1\}^{\delta_j}\backslash\{0^{\delta_j}\}\\
				w\in \{0,1\}^{m-k-\delta_j}}}\Delta_vX(w)\otimes \overline{P}(v,w)\label{eq:expressionsimplified}\,.
	\end{align} 
	With~\eqref{eq:expressionsimplified}, we obtain
	\begin{align}
		[P_j,\overline{\cR}^{\dagger}(X)]&=\sum_{s\in \{0,1\}^{m-k}}\,\tr_{A^{(m-k)}}\Big[(I\otimes \sigma_{A^{(m-k)}})[P_j,\overline{\mathcal{U}}^\dagger(X(s)\otimes \overline{P}(s))]\Big]\nonumber\\
		&=\sum_{s\in \{0,1\}^{m-k}}\,\tr_{A^{(m-k)}}\Big[(I\otimes \sigma_{A^{(m-k)}})\,\overline{\mathcal{U}}^\dagger\,\big([\,\overline{\mathcal{U}}(P_j),X(s)\otimes \overline{P}(s)]\big)\Big]\nonumber\\
		&=\Delta_j+\Delta_j'\label{eq:deltaplusdeltaprimexpression}
	\end{align}
	where 
	\begin{align}
		\Delta_j &\coloneqq \sum_{i=1}^{2m}u_{2j+1,i}\sum_{s\in \{0,1\}^{m-k}}\,\tr_{A^{(m-k)}}\Big[(I\otimes\sigma_{A^{(m-k)}})\,\overline{\mathcal{U}}^\dagger \big(\big[R_i,X(s) \big]\otimes \overline{P}(s)\big)\Big]\\
		\Delta_j'&\coloneqq \sum_{i=1}^{2(m-k)}u_{2j+1,2m+i}
		\sum_{\substack{v\in \{0,1\}^{\delta_j}\backslash\{0^{\delta_j}\}\\
				w\in \{0,1\}^{m-k-\delta_j}}}\tr_{A^{(m-k)}}\Big[(I\otimes\sigma_{A^{(m-k)}})\,\overline{\mathcal{U}}^\dagger\big(\Delta_v X(w)\otimes [R_{2m+i},\overline{P}(v,w)]\big)\Big]\,.
	\end{align}
	Next, we bound each of these sums separately. We start with $\Delta_j$: first, we observe by the commutation relations between quadratures and displacements that $[R_i,X(s)]=V_{f(s)}^\dagger[R_i,X]V_{f(s)}$. Therefore, we can rewrite $\Delta_j$ as
	\begin{align*}
		\Delta_j=\sum_{i=1}^{2m}u_{2j+1,i}\,\Phi^\dagger([R_i,X])\quad\text{ with }\quad \Phi^\dagger(Y)\coloneqq \sum_{s\in \{0,1\}^{m-k}}\tr_{A^{(m-k)}}\Big[(I\otimes\sigma_{A^{(m-k)}})\,\overline{\mathcal{U}}^\dagger \big(V_{f(s))}^\dagger YV_{f(s)}\otimes \overline{P}(s)\big)\Big]\,.
	\end{align*}
	Then we use \Cref{lem.technical} together with \Cref{energylimitedstab} to argue that since $|\varphi\rangle,|\psi\rangle\in \operatorname{dom}(\sqrt{N}\otimes I_R)$, there exists a sequence $\{\alpha_u\}$ of positive numbers with $\|u\mapsto \alpha_u\|_{\ell_1}\le 1$, and sequences $\{|e_u\rangle\}$ and $\{|f_u\rangle\}$ of orthogonal vectors in $\operatorname{dom}(\sqrt{N}\otimes I_R)$ such that
	\begin{align}\label{Z1}
		\langle \varphi|(\Delta_j\otimes I_R)|\psi\rangle =\sum_{i=1}^{2m}\,u_{2j+1,i}\,\sum_{u}\,\alpha_u\,\langle e_u|\,[R_i,X]\,\otimes I_R|f_u\rangle\,.
	\end{align}
	Therefore,
	\begin{align}
		| \langle \varphi|(\Delta_j\otimes I_R)|\psi\rangle |\le \sum_{i=1}^{2m}\,|u_{2j+1,i}|\,\|\nabla X\|\le \ell_{\operatorname{meas}}\,\|\nabla X\|\,.\label{Deltajeq}
	\end{align}
	where in the second inequality we used that the number of indices $i\in [2(m-k)]$ such that $u_{2j+1,2m+1}\neq 0$ is upper bounded by~$\ell_{\textrm{meas}}$, see Eq.~\eqref{eq:ellmeasdef}.
	Next, we consider $\langle \varphi|(\Delta_j'\otimes I_R)|\psi\rangle$: recalling that the state $\sigma_{A^{(m-k)}}$ is a tensor product of pure $\mathsf{GKP}$ states which we denote by $|\widetilde{\psi}_{A^{(m-k)}}\rangle$, we can rewrite the quantity as 
	\begin{align}
		\langle \varphi|(\Delta_j'\otimes I_R)|\psi\rangle&=\sum_{i=1}^{2(m-k)}u_{2j+1,2m+i}
		\sum_{\substack{v\in \{0,1\}^{\delta_j}\backslash\{0^{\delta_j}\}\\
				w\in \{0,1\}^{m-k-\delta_j}}}\, \langle \widetilde{\varphi}| \left(\,\overline{\mathcal{U}}^\dagger\big(\Delta_v X(w)\otimes [R_{2m+i},\overline{P}(v,w)]\big)\otimes I_R\right)|\widetilde{\psi}\rangle\label{eq:deltapidphipsi}
	\end{align}
	where we denoted $|\widetilde{\varphi}\rangle\coloneqq |\varphi\rangle\otimes |\widetilde{\psi}_{A^{(m-k)}}\rangle$, $|\widetilde{\psi}\rangle\coloneqq |\psi\rangle\otimes |\widetilde{\psi}_{A^{(m-k)}}\rangle$.
	By definition, we have with the definition $\cV_x(\cdot)=V_{x}^\dagger(\cdot)V_x$ for $x\in \mathbb{R}^{2m}$ that 
	\begin{align}
		\Delta_v X(w)&= \cV_{f(v,w)}(X)-\cV_{f(0^\delta_j,w)}(X)\nonumber\\
		&=\cV_{f(v,w)}\left(X-\cV_{f(v,w)}^\dagger(\cV_{f(0^\delta_j,w)}(X))\right)\nonumber\\
		&=\cV_{f(v,w)}\left(X-\cV_{-\Delta_vf(w)}(X)\right)\label{eq:translationidentitydeltavf}
	\end{align}
	where we used that $\cV_\alpha\circ\cV_\beta=\cV_{\alpha+\beta}$ and $\cV_\alpha^\dagger=\cV_{-\alpha}$ and the abbreviation 
	\begin{align}
		\Delta_vf(w)\coloneqq f(v,w)-f(0^{\delta_j},w)\qquad\textrm{ for }\qquad v\in \{0,1\}^{\delta_j}\textrm{ and }w\in \{0,1\}^{m-k-\delta_j}\ .\label{eq:deltavfwabbrev}
	\end{align}
	From the local structure of the function~$f$, (i.e., expression~\eqref{eq:fsdirectsum} and the fact that $|\cF_j|\leq \ell_{\mathrm{corr}}$), it follows that we can decompose the vector $w$ into two subvectors~$w=(w_1,w_2)$ where $w_1\in \{0,1\}^{\delta_j\cdot \ell_{\mathrm{corr}}}$,  such that
	\begin{align}
		\Delta_vf(w)&=\delta_vf(w_1)
	\end{align}
	for a function~$\delta_vf:\{0,1\}^{\delta_j\cdot\ell_{\mathrm{corr}}}\rightarrow \mathbb{R}^{2m}$, see Eq.~\eqref{eq:localitydifferencem}.
	%It follows that
	
	%~$\Delta_vf(w)$ introduced in Eq.~\eqref{eq:deltavfwabbrev}. 
	%It follows from the locality structure  of the function~$f(s)$, (i.e., expression~\eqref{eq:fsdirectsum} and the fact that $|\cF_j|\leq \ell_{\mathrm{corr}}$) that $\Delta_vf(w)$ has at most $\delta\cdot \max_{u }m_u$ non-zero entries. 

	% This is because the syndromes $s^{(1)}=(v,w)$ and $s^{(0)}=(0^\delta_j,w)$ differ in~$\delta_j\leq \delta$ positions, hence there are at most $\delta$ subsets $\cR_u$ such that the  associated partial syndromes~$s_{\cR_u}$ differ. Recalling that $f_u(s_{\cR_u})$ consists of $m_u$ bits, the claim
	%\begin{align}
	%\left|\left\{i\in [2m]\ |\ f(s^{(1)})_i\neq f(s^{(0)})_i\right\}\right| &=|\Delta_vf(w)|\leq \ell_{\mathrm{corr}}\cdot \max_u m_u\ .\label{eq:displacementclaimm}
	%\end{align}
	%follows.
	
	For fixed $v\in \{0,1\}^{\delta_j}\backslash \{0^{\delta_j}\}$
	and $w=(w_1,w_2)\in \{0,1\}^{m-k-\delta_j}$ we have by Eq.~\eqref{eq:translationidentitydeltavf}
	\begin{align*}
		\Delta_v X(w)\otimes [R_{2m+i},\overline{P}(v,w)]
		&= \big(\cV_{f(v,w)}(X-\cV_{-\Delta_v f(w)}(X))\big)\otimes [R_{2m+i},\overline{P}(v,w)]\\
		&= \big(\cV_{f(v,w)}(X-\cV_{-\delta_v f(w_1)}(X))\big)\otimes [R_{2m+i},\overline{P}(v,w_1)\overline{P}(w_2)]\\
		&= \big(\cV_{f(v,w)}(
		\delta_v(X)(w_1)
		)\big)\otimes [R_{2m+i},\overline{P}(v,w_1)\overline{P}(w_2)]\,,
	\end{align*}
	where
	\begin{align}
		\delta_v(X)(w_1)&\coloneqq X-\cV_{-\delta_v f(w_1)}(X)\label{eq:deltavxdef}\,.
	\end{align}
	It follows that
	\begin{align}
		&\sum_{\substack{w=(w_1,w_2)\\
				\in \{0,1\}^{m-k-\delta_j}}}
		\langle \widetilde{\varphi}| \left(\,\mathcal{U}^\dagger\big(\Delta_v X(w)\otimes [R_{2m+i},\overline{P}(v,w)]\big)\otimes I_R\right)|\widetilde{\psi}\rangle\label{eq:sumtildephipsisimple}\\
		&\qquad\qquad \qquad \qquad\qquad\qquad =\,\sum_{w_1\in \{0,1\}^{\delta_j\cdot \ell_{\mathrm{corr}}}} \langle \widetilde{\varphi} |\,\left(\Psi_{w_1}^\dagger 
		\left((\delta_v(X)(w_1)\otimes [R_{2m+i},\overline{P}(v,w_1)]\right)\otimes I_R\right)|\widetilde{\psi}\rangle\nonumber\,,
	\end{align} 
	where
	\begin{align}
		\Psi^\dagger_{w_1}(Z)&\coloneqq \sum_{w_2\in \{0,1\}^{m-k-\delta_j(\ell_{\mathrm{corr}}+1)}}\overline{\cU}^\dagger\left(
		\cV_{f(v,(w_1,w_2))}(Z)\otimes \overline{P}(w_2)\right)\ .
	\end{align}
	The map~$\Psi^\dagger_{w_1}$ is completely positive, unital and moment-limited by \Cref{energylimitedstab}.  Therefore, by \Cref{lem.technical} together with \Cref{energylimitedstab} there exists a sequence $\{\alpha_u^{(w_1)}\}_u$ of positive numbers with $\|u\mapsto \alpha_u^{(w_1)}\|_{\ell_1}\le 1$, and sequences $\{|e_u^{(w_1)}\rangle\}_u$ and $\{|f_u^{(w_1)}\rangle\}_u$ of vectors in $\operatorname{dom}(\sqrt{N}\otimes I_R)$ such that 
	\begin{align*}
		\langle \widetilde{\varphi} |\,\left(\Psi_{w_1}^\dagger 
		\left((\delta_v(X)(w_1)\otimes [R_{2m+i},\overline{P}(v,w_1)]\right)\otimes I_R\right)|\widetilde{\psi}\rangle
		&=\sum_{u}\alpha_u^{(w_1)}\langle e_u^{(w_1)}|(\delta_v X(w_1)\otimes [R_{2m+i},\overline{P}(v,w_1)]\otimes I_R)|f_u^{(w_1)}\rangle\ .
	\end{align*}
	We show in Lemma~\ref{lem:upperboundcommutatornorm} below that
	\begin{align}
		\|[R_{2m+i},\overline{P}(v,w_1)]\|&\leq \frac{1}{\pi\sqrt{\alpha_q}}\ \label{eq:toprovertwomi}
	\end{align}
	when $i$ is even (i.e., $R_{2m+i}$ is a momentum-operator). 
	Defining the normalized vectors
	\begin{align}
		|\tilde{f}^{(w_1)}_u\rangle \coloneqq \frac{( [R_{2m+i},\overline{P}(v,w_1)]\otimes I_R)|f^{(w_1)}_u\rangle}{\| ([R_{2m+i},\overline{P}(v,w_1)]\otimes I_R)|f^{(w_1)}_u\rangle\|}\ ,
	\end{align}
	it thus follows that
	\begin{align*}
		\left|\langle \widetilde{\varphi} |\,\left(\Psi_{w_1}^\dagger 
		\left((\delta_v(X)(w_1)\otimes [R_{2m+i},\overline{P}(v,w_1)]\right)\otimes I_R\right)|\widetilde{\psi}\rangle\right|
		&\leq \frac{1}{\pi\sqrt{\alpha_q}}\sum_{u}\alpha_u^{(w_1)}\left|\langle e_u^{(w_1)}|(\delta_v X(w_1)\otimes I_{A^{(m-k)}R})|\tilde{f}_u^{(w_1)}\rangle\right|\ .
	\end{align*}
	Applying~\eqref{eq:upperboundsimplifiedinnerprod} from Lemma~\ref{lem:translateupper} to each term in this sum gives
	\begin{align}
		\left|\langle \widetilde{\varphi} |\,\left(\Psi_{w_1}^\dagger 
		\left((\delta_v(X)(w_1)\otimes [R_{2m+i},\overline{P}(v,w_1)]\right)\otimes I_R\right)|\widetilde{\psi}\rangle\right|
		&\leq \frac{2\ell_{\operatorname{meas}}\cdot\ell_{\mathrm{corr}}'}{\sqrt{\alpha_q\pi }}\|\nabla X\|\ .\label{eq:upperboundnablaxs}
	\end{align}
	Inserting~\eqref{eq:upperboundnablaxs}  into~\eqref{eq:sumtildephipsisimple} 
	and combining this with~\eqref{eq:deltapidphipsi} gives
	\begin{align}
		\left|\langle \varphi|(\Delta_j'\otimes I_R)|\psi\rangle\right|&\leq \sum_{i=1}^{2(m-k)}|u_{2j+1,2m+i}|
		\sum_{\substack{v\in \{0,1\}^{\delta_j}\backslash\{0^{\delta_j}\}\\
				w_1\in \{0,1\}^{\delta_j\cdot \ell_{\mathrm{corr}}}}}\,\left|\langle \widetilde{\varphi} |\,\left(\Psi_{w_1}^\dagger 
		\left((\delta_v(X)(w_1)\otimes [R_{2m+i},\overline{P}(v,w_1)]\right)\otimes I_R\right)|\widetilde{\psi}\rangle\right|
		\\
		&\leq \frac{2^{1+\ell_{\operatorname{meas}}(1+\ell_{\operatorname{corr}})}\ell_{\operatorname{meas}}\cdot \ell_{\operatorname{corr}}'}{\sqrt{\alpha_q\pi}}\,\|\nabla X\|\cdot \sum_{i=1}^{2(m-k)}|u_{2j+1,2m+i}|\ .
	\end{align}
	Recall that the number of indices $i\in [2(m-k)]$ such that $u_{2j+1,2m+1}\neq 0$ is upper bounded by~$\ell_{\textrm{meas}}$, see Eq.~\eqref{eq:ellmeasdef} hence
	\begin{align}
		\left|\langle \varphi|(\Delta_j'\otimes I_R)|\psi\rangle\right|&\leq \frac{\ell_{\operatorname{meas}}\,2^{1+\ell_{\operatorname{meas}}(1+\ell_{\operatorname{corr}})}\ell_{\operatorname{meas}}\cdot \ell_{\operatorname{corr}}'}{\sqrt{\alpha_q\pi}}\,\|\nabla X\|\ .\label{eq:deltaprimeupperboundtwo}
	\end{align}
	Combining~\eqref{Deltajeq} and ~\eqref{eq:deltaprimeupperboundtwo} with Eq.~\eqref{eq:deltaplusdeltaprimexpression} gives
	\begin{align}
		\left|\langle \varphi| [P_j,\overline{\cR}^\dagger(X)]\psi\rangle\right|\le \left(\ell_{\operatorname{meas}}+ \frac{\ell_{\operatorname{meas}}\,2^{1+\ell_{\operatorname{meas}}(1+\ell_{\operatorname{corr}})}\ell_{\operatorname{meas}}\cdot \ell_{\operatorname{corr}}'}{\sqrt{\alpha_q\pi}}\right)\,\|\nabla X\|\,.
	\end{align}
	Repeating the steps that led to this bound for $P_j$, with $P_j$ replaced by $Q_j$, we then have found that
	\begin{align*}
		\|\nabla \Phi^{\operatorname{Stab}\dagger}(X)\|\le  \,\left(\ell_{\operatorname{meas}}+ \frac{\ell_{\operatorname{meas}}\,2^{1+\ell_{\operatorname{meas}}(1+\ell_{\operatorname{corr}})}\ell_{\operatorname{meas}}\cdot \ell_{\operatorname{corr}}'}{\sqrt{\alpha_{\operatorname{min}}\pi}}\right) \|\nabla X\|\ .
	\end{align*}
	The claim then follows by duality, cf.~\eqref{eq:duality}.
\end{proof}

\begin{lemma}\label{lem:upperboundcommutatornorm}
	We have 
	\begin{align}
		\|[P_{m+i},\overline{P}(v,w_1)]\|&\leq \frac{1}{\pi\sqrt{\alpha_q}}\ .
	\end{align}
\end{lemma}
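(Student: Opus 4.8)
The plan is to reduce the multimode commutator to a one-mode estimate and then carry out an elementary Gaussian computation. First I would use that $\overline{P}(v,w_1)=\bigotimes_{j}\overline{P}(s_j)$ is a tensor product of POVM elements over the ancilla modes, each factor a positive operator with $0\le\overline{P}(s_j)\le I$, while $P_{m+i}$ is supported on a single ancilla mode; if that mode does not appear in $\overline{P}(v,w_1)$ the commutator vanishes, and otherwise denote the corresponding factor by $\overline{P}(s_i)$. Then $[P_{m+i},\overline{P}(v,w_1)]=[P_{m+i},\overline{P}(s_i)]\otimes\bigotimes_{j\ne i}\overline{P}(s_j)$, and since the operator norm is multiplicative over tensor products and $\|\overline{P}(s_j)\|_\infty\le 1$, the claim reduces to the one-mode bound $\|[\hat{P}_{m+i},\overline{P}(s_i)]\|_\infty\le\frac{1}{\pi\sqrt{\alpha_q}}$. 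If that ancilla is measured through the density $m_{\hat{P}}$, then $\overline{P}(s_i)$ is a function of $\hat{P}_{m+i}$ alone, the commutator vanishes, and there is nothing to prove; the remaining case is $\overline{P}(s_i)=\int_\Gamma m_{\hat{Q}}(z)\,dz$ for a measurable set $\Gamma\subseteq\mathbb{R}$.

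For that case I would invoke the identity $[\hat{P},m_{\hat{Q}}(z)]=i\,m_{\hat{Q}}'(z)$ (derivative in $z$), which is exactly what was derived in the proof of Lemma~\ref{lem:Pmoveeq} from the canonical commutation relation, interchange commutator and integral, and use $m_{\hat{Q}}'(z)=\frac{1}{2\pi\alpha_q^{3/2}}(\hat{Q}_{m+i}-z)\,e^{-(\hat{Q}_{m+i}-z)^2/(2\alpha_q)}$ to write $[\hat{P}_{m+i},\overline{P}(s_i)]=\frac{i}{2\pi\alpha_q^{3/2}}\int_\Gamma(\hat{Q}_{m+i}-z)\,e^{-(\hat{Q}_{m+i}-z)^2/(2\alpha_q)}\,dz$. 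This is a bounded function of $\hat{Q}_{m+i}$, so its operator norm is the supremum over $q$ in the spectrum of the scalar $\frac{1}{2\pi\alpha_q^{3/2}}\big|\int_\Gamma(q-z)\,e^{-(q-z)^2/(2\alpha_q)}\,dz\big|$. After the substitution $u=(q-z)/\sqrt{\alpha_q}$ this becomes $\frac{1}{2\pi\sqrt{\alpha_q}}\big|\int_{\Gamma'}u\,e^{-u^2/2}\,du\big|$ for the rescaled set $\Gamma'$ (the image of $\Gamma$ under $z\mapsto(q-z)/\sqrt{\alpha_q}$), and I would bound it uniformly in $q$ and in $\Gamma$ by $\frac{1}{2\pi\sqrt{\alpha_q}}\int_{\mathbb{R}}|u|\,e^{-u^2/2}\,du=\frac{2}{2\pi\sqrt{\alpha_q}}=\frac{1}{\pi\sqrt{\alpha_q}}$. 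Taking the supremum over $q$ and combining with the first paragraph gives the claim.

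There is no genuine obstacle here; the computation is short. The two points deserving care are: (i) justifying the functional-calculus identity and the interchange of the commutator with $\int_\Gamma$ for the unbounded operators $\hat{P}_{m+i},\hat{Q}_{m+i}$, which is routine and parallels the manipulations already carried out in Lemma~\ref{lem:Pmoveeq}; and (ii) the fact that in the $\mathsf{GKP}$ rounding scheme $\Gamma$ is an infinite union of intervals rather than a single interval. The latter is handled automatically by the substitution step, since $\big|\int_{\Gamma'}u\,e^{-u^2/2}\,du\big|\le\int_{\mathbb{R}}|u|\,e^{-u^2/2}\,du=2$ holds for an arbitrary measurable set $\Gamma'$, so no endpoint telescoping is needed and the constant does not degrade with the number of intervals.
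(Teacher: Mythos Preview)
Your proposal is correct and follows essentially the same route as the paper: both reduce to a single ancilla factor, express the commutator as $\frac{1}{2\pi\alpha_q^{3/2}}\int_\Gamma(\hat Q-z)e^{-(\hat Q-z)^2/(2\alpha_q)}\,dz$ (up to the phase $i$), and bound it by the full Gaussian integral $\frac{1}{2\pi\alpha_q^{3/2}}\int_{\mathbb R}|q-z|e^{-(q-z)^2/(2\alpha_q)}\,dz=\frac{1}{\pi\sqrt{\alpha_q}}$. Your version is slightly more explicit in spelling out the tensor-product reduction and the trivial $m_{\hat P}$ case, but the core computation is identical.
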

\begin{proof}
	$R_i=P$ is a local momentum operator, so that it is enough to consider that $P(s_i)=\int_{\Gamma} m_{\hat{P}}(z)\,dz$
	where $\Gamma$ denotes the domain of the real line such that whenever the parameter $z\in\Gamma$ is measured, it is interpreted as syndrome $s_i$. Then,
	\begin{align*}
		|[R_i,P(s_{i})]|&=\Big|\int_\Gamma\,[R_i,m_{\hat{P}}(z)]\,dz\Big|\\
		&=\frac{1}{2\pi\sqrt{\alpha_q}}\,\Big|\int_\Gamma\,V_{(0,-z)^t}[P,e^{-\frac{1}{2\alpha_q}Q^2}]V_{(0,z)^t}\,dz\Big|\\
		&=\frac{1}{2\pi{\alpha_q}^{\frac{3}{2}}}\,\Big|\int_{\Gamma}\,V_{(0,-z)^t}\,Qe^{-\frac{1}{2\alpha_q}Q^2}\,V_{(0,z)^t}\,dz\Big|\\
		&= \frac{1}{2\pi{\alpha_q}^{\frac{3}{2}}}\,\Big|\int_{\Gamma}\,(Q-z)e^{-\frac{1}{2\alpha_q}(Q-z)^2}\,dz\Big|\\
		&     \le \frac{1}{2\pi{\alpha_q}^{\frac{3}{2}}}\,\int\,|Q-z|\,e^{-\frac{1}{2\alpha_q}(Q-z)^2}\,dz\\
		&=\frac{1}{\pi\sqrt{\alpha_q}}\,I\,,
	\end{align*}
	and therefore $\|[R_i,P(s_i)]\|_\infty\le (\pi\sqrt{\alpha_q})^{-1}$.
	
\end{proof}

\begin{lemma}\label{lem:translateupper}
	Let $\ket{e}$ and~$|f\rangle $ be vectors in 
	$\operatorname{dom}(\sqrt{N_\Lambda+N_{C_j}}\otimes I_R)$. Then  there is a distribution over pairs~$(|e'\rangle,|f'\rangle)$ of vectors in   $\operatorname{dom}(\sqrt{N_\Lambda+N_{C_j}}\otimes I_R)$ such that 
	\begin{align*}
		\left|\langle e|(\delta_v X(w_1)\otimes I_{A^{(m-k)}R})|f\rangle\right| &\leq 
		2\sqrt{\pi}\ell_{\operatorname{meas}}\cdot \ell'_{\mathrm{corr}}\cdot\max_{i\in [2m]}\mathbb{E}_{|e'\rangle,|f'\rangle}\Big[
		\left|\langle e'| ([R_{i},X]\otimes I_{A^{(m-k)}R})|f'\rangle\right|
		\Big]\ .
	\end{align*}
	In particular, we have 
	\begin{align}
		\left|\langle e|(\delta_v X(w_1)\otimes I_{A^{(m-k)}R})|f\rangle\right| &\leq 
		2\sqrt{\pi}\,\ell_{\operatorname{meas}}\cdot \ell'_{\mathrm{corr}}\cdot \|\nabla X\|\ .\label{eq:upperboundsimplifiedinnerprod}
	\end{align}
\end{lemma}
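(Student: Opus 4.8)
The plan is to express the translated operator $\delta_v X(w_1)=X-\cV_{-\delta_v f(w_1)}(X)$ from \eqref{eq:deltavxdef} as a one-parameter integral of conjugated commutators $[R_i,X]$ via a Duhamel / fundamental-theorem-of-calculus identity along the displacement flow, and to read off the asserted distribution over pairs $(|e'\rangle,|f'\rangle)$ as the push-forward of a Lebesgue measure under that flow. Throughout we may assume $\|\nabla X\|<\infty$ (otherwise both claimed inequalities are trivially true), so that each $[R_i,X]$ is a bounded operator.

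First I would set $z\coloneqq\delta_v f(w_1)\in\mathbb R^{2m}$ and note that $V_{-z}^\dagger=e^{iR^\intercal z}=V_z$, hence $\cV_{-z}(X)=V_zXV_z^\dagger$ and $\delta_v X(w_1)=X-V_zXV_z^\dagger$. Writing $V_{tz}=e^{itR^\intercal z}$ and using that $t\mapsto V_{tz}^\dagger|\chi\rangle$ is strongly $C^1$ for $|\chi\rangle$ in $\operatorname{dom}(\sqrt{N_\Lambda}\otimes I_R)$ (which is contained in $\operatorname{dom}(R^\intercal z\otimes I_R)$), together with the fact that boundedness of $[R^\intercal z,X]=\sum_i z_i[R_i,X]$ makes $X$ map $\operatorname{dom}(R^\intercal z)$ into itself, the function $t\mapsto\langle e|(V_{tz}XV_{tz}^\dagger\otimes I)|f\rangle$ is differentiable with derivative $i\sum_i z_i\langle e|((V_{tz}[R_i,X]V_{tz}^\dagger)\otimes I)|f\rangle$; here I also use that conjugating a quadrature $R_i$ by a displacement only shifts it by a real scalar, so $[R_i,V_{tz}XV_{tz}^\dagger]=V_{tz}[R_i,X]V_{tz}^\dagger$. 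Integrating over $t\in[0,1]$ and using the endpoint values $X$ and $V_zXV_z^\dagger$ gives, with $|e_t'\rangle\coloneqq(V_{tz}^\dagger\otimes I)|e\rangle$ and $|f_t'\rangle\coloneqq(V_{tz}^\dagger\otimes I)|f\rangle$,
\[
\langle e|(\delta_v X(w_1)\otimes I)|f\rangle=-i\int_0^1\sum_{i=1}^{2m}z_i\,\langle e_t'|\big([R_i,X]\otimes I\big)|f_t'\rangle\,dt .
\]
Since $V_{tz}$ is a displacement acting only on $\Lambda$, it is unitary and energy-limited (concretely $V_{tz}^\dagger N_\Lambda V_{tz}\le 2N_\Lambda+c(t,z)I$), so each $|e_t'\rangle,|f_t'\rangle$ is again a normalized vector in $\operatorname{dom}(\sqrt{N_\Lambda+N_{C_j}}\otimes I_R)$ and every matrix element above is well defined.

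Next I would take absolute values, pull the finite sum out, and estimate by $\ell_1\times\ell_\infty$: the right-hand side is at most $\big(\sum_i|z_i|\big)\max_{i\in[2m]}\int_0^1\big|\langle e_t'|([R_i,X]\otimes I)|f_t'\rangle\big|\,dt$, and rewriting the integral over $t$ as an expectation $\mathbb E_{t\sim U[0,1]}$ yields the asserted bound with the distribution over $(|e'\rangle,|f'\rangle)$ taken to be the law of $\big((V_{tz}^\dagger\otimes I)|e\rangle,(V_{tz}^\dagger\otimes I)|f\rangle\big)$ for $t$ uniform on $[0,1]$, provided we check $\sum_i|z_i|\le 2\sqrt\pi\,\ell_{\operatorname{meas}}\cdot\ell'_{\mathrm{corr}}$. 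This last inequality is exactly where the structure of $f$ enters: by the locality facts recorded just before the proposition (Eqs.~\eqref{eq:localitydifferencem}--\eqref{eq:upperboundfsfsprime}), $z=\delta_v f(w_1)$ is supported on at most $|D|\cdot\ell'_{\mathrm{corr}}\le\ell_{\operatorname{meas}}\cdot\ell'_{\mathrm{corr}}$ modes, hence on at most $2\ell_{\operatorname{meas}}\cdot\ell'_{\mathrm{corr}}$ quadrature components, while by \eqref{eq:boundednessoff} each component of $z$ is bounded in absolute value by $\sqrt\pi$. The ``in particular'' statement is then immediate, since for every fixed $i$ and $t$ one has $\big|\langle e_t'|([R_i,X]\otimes I)|f_t'\rangle\big|\le\sup_{R',|\psi'\rangle,|\varphi'\rangle}\big|\langle\varphi'|[R_i,X]\otimes I_{R'}|\psi'\rangle\big|\le\|\nabla X\|$, because the $|e_t'\rangle,|f_t'\rangle$ are admissible in the supremum defining $\|\nabla_i X\|\le\|\nabla X\|$.

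I expect the only genuine obstacle to be the operator-theoretic bookkeeping in the second step: justifying the term-by-term differentiation of $t\mapsto\langle e|(V_{tz}XV_{tz}^\dagger\otimes I)|f\rangle$ on the appropriate form domain, the identity $[R_i,V_{tz}XV_{tz}^\dagger]=V_{tz}[R_i,X]V_{tz}^\dagger$, and the invariance of $\operatorname{dom}(\sqrt{N_\Lambda+N_{C_j}}\otimes I_R)$ under $V_{tz}^\dagger$, all of which rest on the standing assumption $\|\nabla X\|<\infty$ so that the relevant commutators are bounded. Once these points are settled, the combinatorial estimate on $\sum_i|z_i|$ and the ``in particular'' bound follow directly from results already established.
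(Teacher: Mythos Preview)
Your proposal is correct and follows essentially the same route as the paper: a linear interpolation $t\mapsto V_{tz}XV_{tz}^\dagger$ (the paper writes $\xi=-z$ and interpolates $V_{x\xi}^\dagger XV_{x\xi}$, which is the same curve), differentiation to produce commutators $[R_i,X]$, and the $\ell_1$--$\ell_\infty$ estimate $\sum_i|z_i|\le 2\sqrt{\pi}\,\ell_{\operatorname{meas}}\ell'_{\mathrm{corr}}$ from the support and magnitude bounds on $\delta_vf(w_1)$; your explicit identification of the distribution as the law of $(V_{tz}^\dagger|e\rangle,V_{tz}^\dagger|f\rangle)$ for $t\sim U[0,1]$ is exactly what the paper's integral encodes. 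One bookkeeping remark: since $z$ is a \emph{difference} of two $f$-values, Eq.~\eqref{eq:boundednessoff} only gives $\|z\|_\infty\le 2\sqrt{\pi}$ a priori, which is the bound the paper invokes; your sharper $\sqrt{\pi}$ is in fact correct for the specific $\mathsf{GKP}$ Pauli displacements listed in~\eqref{GKPlogicalpauli}, but you might want to state it that way rather than appeal to~\eqref{eq:boundednessoff} alone.
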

\begin{proof}
	Recall (cf. Eq.~\eqref{eq:deltavxdef}) that 
	$\delta_v(X)(w_1)\coloneqq X-\cD_{-\delta_v f(w_1)}(X)$
	is the difference of two displaced operators. Define $\xi\coloneqq -\delta_v f(w_1)$. Recall that by
	Eq.~\eqref{eq:upperboundfsfsprime}, the number of non-zero entries of~$\xi$ is upper bounded by~$ \ell'_{\textrm{corr}}\cdot  \ell_{\operatorname{meas}}$ and $\|\xi\|_\infty\leq 2\sqrt{\pi}$ by \eqref{eq:boundednessoff}. 
	We now consider  a linear interpolation between $0$ and $\xi\in \mathbb{R}^{2m}$: We have 
	\begin{align*}
		|   \langle e| \delta_v X(w_1)\otimes I_{A^{(m-k)}R}|f\rangle|&=
		\Big|\int_0^1\,\frac{d}{dx}\,\langle e|(V_{x\xi}^\dagger\,X\,V_{x\xi}\otimes I_{A^{(m-k)}R})|f\rangle \,dx\Big|\nonumber \\
		&\leq \int_0^1 \sum_{i=1}^{2m} |\xi_i|\cdot
		\,\left|\langle e|(V_{x \xi}^\dagger [R_i,X]V_{x\xi}\otimes I_{A^{(m-k)}R})|f\rangle\right| \,dx\Big|\\
		&\leq 2\sqrt{\pi}\,\ell_{\operatorname{meas}}\cdot \ell'_{\textrm{corr}} \cdot \max_{i\in[2m]}\,\int_0^1 
		\left|\langle V_{x\xi}e| ([R_i,X]\otimes I_{A^{(m-k)}R})|V_{x\xi}f\rangle\right| \,dx\,.
	\end{align*}
	This implies the claim.

\end{proof}

\begin{lemma}\label{energylimitedstab}
	
	The maps $\Phi$ and $\Psi^\dagger_{w_1}$ are moment-limited.
\end{lemma}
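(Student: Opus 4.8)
The plan is to follow the strategy of \Cref{Steanemoment-limited}: express each of the two channels whose Heisenberg-picture adjoints are $\Phi^\dagger$ and $\Psi^\dagger_{w_1}$ as a finite composition of elementary operations that are individually moment-limited, and then use that moment-limitedness is stable under composition and under finite sums of trace-non-increasing maps whose total is trace-preserving.

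First, recall that $\Phi^\dagger(Y)=\sum_{s}\tr_{A^{(m-k)}}\big[(I\otimes\sigma_{A^{(m-k)}})\,\overline{\mathcal U}^\dagger\big(V_{f(s)}^\dagger Y V_{f(s)}\otimes\overline P(s)\big)\big]$, so its predual is the channel
\begin{align*}
	\Phi(\rho)=\sum_{s\in\{0,1\}^{m-k}}V_{f(s)}\,\tr_{A^{(m-k)}}\big[\overline P(s)_{A^{(m-k)}}\,\overline{\mathcal U}\big(\rho\otimes\sigma_{A^{(m-k)}}\big)\big]\,V_{f(s)}^\dagger\,.
\end{align*}
This factors through: (i) appending the ancilla state $\sigma_{A^{(m-k)}}$, which is a tensor product of approximate $\mathsf{GKP}$ states and hence has finite average photon number; (ii) the Gaussian (Clifford) unitary $\overline{\mathcal U}$; and (iii) the Gaussian POVM measurement of $\{\overline P(s)\}_s$ on the ancilla modes, followed by discarding them and applying to the $m$ system modes the correction displacement $V_{f(s)}=e^{iR^\intercal f(s)}$ determined (after rounding) by the outcome $s$, where $\|f(s)\|_\infty\le\sqrt\pi$ by~\eqref{eq:boundednessoff}. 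Step (i) is moment-limited since $\tr[(\rho\otimes\sigma_{A^{(m-k)}})(N_\Lambda+N_{A^{(m-k)}})]=\tr[\rho N_\Lambda]+\tr[\sigma_{A^{(m-k)}}N_{A^{(m-k)}}]<\infty$, and step (ii) is moment-limited because Gaussian channels are moment-limited \cite{Shirokov2020}.

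For step (iii) I would argue directly. For $M\ge 0$ and $0\le\overline P(s)\le I$ one has $0\le\tr_{A^{(m-k)}}\big[\overline P(s)_{A^{(m-k)}}M\big]\le M_\Lambda$ as operators on the system modes (test against a vector $|\phi\rangle$ and use $|\phi\rangle\langle\phi|\otimes\overline P(s)\le|\phi\rangle\langle\phi|\otimes I$), so the measure-and-discard step does not increase the energy of the retained modes; and conjugation by $V_{f(s)}$ is moment-limited uniformly in $s$, since $V_{f(s)}^\dagger N_\Lambda V_{f(s)}=N_\Lambda+L_{f(s)}+\tfrac12\|f(s)\|_2^2\,I$ with $L_{f(s)}$ linear in the quadratures and with coefficients of modulus at most $\|f(s)\|_\infty\le\sqrt\pi$, while $|\tr[\rho'R_j]|\le(m+2\tr[\rho'N])^{1/2}$ for any sub-normalized state $\rho'$ of finite energy. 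Combining these with $\tr[M_\Lambda N_\Lambda]\le\tr[\overline{\mathcal U}(\rho\otimes\sigma_{A^{(m-k)}})(N_\Lambda+N_{A^{(m-k)}})]<\infty$ and summing over the finitely many $s\in\{0,1\}^{m-k}$ gives $\tr[\Phi(\rho)N_\Lambda]<\infty$.

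The map $\Psi_{w_1}$, the predual of $\Psi^\dagger_{w_1}(Z)=\sum_{w_2}\overline{\mathcal U}^\dagger\big(\cV_{f(v,(w_1,w_2))}(Z)\otimes\overline P(w_2)\big)$, has exactly the same structure (apply the Gaussian unitary $\overline{\mathcal U}$, measure the Gaussian POVM $\{\overline P(w_2)\}$ on a subset of the ancilla modes and discard them, then apply the conditional displacement $V_{f(v,(w_1,w_2))}$ of parameter of sup-norm at most $\sqrt\pi$, summed over finitely many $w_2$), so the identical argument applies. The only point requiring a short computation is step (iii) — the uniform moment-limitedness of conjugation by the correction displacements — together with the elementary observation that a finite sum, over the classical outcomes $s$, resp.~$w_2$, of moment-limited maps with uniformly bounded displacement parameters is again moment-limited; I do not expect a genuine obstacle here, only bookkeeping.
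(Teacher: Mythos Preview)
Your proof is correct and follows essentially the same strategy as the paper: decompose each map into elementary building blocks (appending an ancilla, Gaussian unitary, partial measurement, displacement) and observe that each block is moment-limited. The paper's own proof is terser, simply stating that the maps are linear combinations of compositions of Gaussian channels and partial measurements, and giving the same one-line estimate $\tr\big(\tr_B(I_A\otimes P_B\,\rho_{AB})N_A\big)\le\tr(\rho_{AB}N_{AB})$ for the measurement step. Your treatment is in fact slightly more careful: you separate out the ancilla-appending step $\rho\mapsto\rho\otimes\sigma_{A^{(m-k)}}$ (which is not a Gaussian channel since approximate $\mathsf{GKP}$ states are not Gaussian, but is trivially moment-limited because $\sigma_{A^{(m-k)}}$ has finite energy), and you verify explicitly that conjugation by the bounded displacements $V_{f(s)}$ is moment-limited uniformly in $s$, whereas the paper absorbs this into the phrase ``Gaussian channels.'' The extra bookkeeping you flag in step~(iii) is indeed routine and presents no obstacle.
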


\begin{proof}
	This is clear since these maps are linear combinations of compositions of Gaussian channels and partial measurements. The latter are of the form $\rho_{AB}\mapsto \tr_B(I_A\otimes P_B \rho_{AB})$ for some local projection $P_B$, so that 
	\begin{align*}
		\tr(\tr_B(I_A\otimes P_B\rho_{AB})N_A)=\tr(\rho_{AB}N_A\otimes P_B)\le \tr(\rho_{AB}N_A)\le \tr(\rho_{AB}N_{AB})\,.
	\end{align*}
	Therefore, these maps are moment-limited.

\end{proof}

\end{document}